\newtheorem{theorem}{Theorem}
\newtheorem{lemma}{Lemma}
\theoremstyle{definition}
\newtheorem{definition}{Definition}
\newcommand{\ra}{\footnotesize$\rightarrow$}
\newcommand{\subcaption}[1]{{\small #1}}
\title{Evacuation from Various Types of Finite 2D Square Grid Fields by a Metamorphic Robotic System}
\author[1]{Junya Nakamura\thanks{Corresponding author: junya[at]imc.tut.ac.jp}}
\author[2]{Sayaka Kamei}
\author[3]{Yukiko Yamauchi}
\affil[1]{Toyohashi University of Technology, Japan}
\affil[2]{Hiroshima University, Japan}
\affil[3]{Kyushu University, Japan}
\date{}
\begin{document}

\maketitle

\begin{abstract}
A metamorphic robotic system (MRS) is composed of anonymous, memoryless, and autonomous modules that execute an identical distributed algorithm to move while keeping the connectivity of the modules.
For an MRS, the number of modules required to solve a given task is an important complexity measure.
Here, we consider evacuation from a finite two-dimensional square grid field by an MRS.
This study aims to establish the minimum number of modules required to solve the evacuation problem under several conditions.
We consider a rectangular field surrounded by walls with at least one exit.
Our results show that two modules are necessary and sufficient for evacuation from any rectangular field if equipped with a global compass, which provides the modules with a common sense of direction.
After that, we focus on the case of modules without a global compass and show that four (resp.~seven) modules are necessary and sufficient for restricted (resp.~any) initial shapes of an MRS.
We also show that two modules are sufficient when an MRS is touching a wall in an initial configuration.
Then, we clarify the condition to stop an MRS after evacuation of a rectangular field.
Finally, we extend these results to mazes and convex fields.
\end{abstract}

\section{Introduction}
Modular robotic systems, composed of a large number of autonomously-moving modules, are attracting much attention among both robotics~\cite{Tucci2018,Thalamy2019,Thalamy2019a} and theoretical computer science~\cite{Dumitrescu2004a,Dumitrescu2004,Chen2014,Doi2021} researchers.
Because the modules act collectively to complete a given task, research has been focused primarily on distributed coordination.
One of the most important distributed computing models is the \emph{metamorphic robotic system} (MRS)~\cite{Dumitrescu2004a,Dumitrescu2004}, that consists of a set of anonymous modules in a two-dimensional (2D) square grid.
Each cell of the grid can accept at most one module at each time step, and each module can perform two types of movements, i.e., \emph{rotation} and \emph{sliding}.
The former is a rotation by $\pi/2$ clockwise or counter-clockwise around an adjacent module, and the latter is a straight-line movement along the static modules aligned together in a straight line.
However, the modules must maintain \emph{connectivity} among each other, defined by side-adjacency of cells occupied by the modules.
The modules are \emph{anonymous}, \emph{uniform}, and \emph{oblivious}; they are indistinguishable, execute a common algorithm, and are memoryless.
The modules are synchronously activated in each time step, observe the positions of other modules, compute their movement, and perform the movement.
The MRS changes its shape and moves toward its destination using local movements of its modules.
Previous studies in the literature~\cite{Dumitrescu2004a,Dumitrescu2004,Chen2014,Doi2021} have investigated the computational power of the MRS with a variety of problems and evaluation criteria.

Table~\ref{tab:contribution-summary} summarizes the existing results of the metamorphic robotic systems.
\begin{landscape}
\begin{table}[t]
	\centering
	\caption{Summary of related work and our contribution. %
	RF and CF are the abbreviations of rectangular fields and convex fields, respectively.}
	\label{tab:contribution-summary}
	\footnotesize

\begin{tabular}{lccccccc}
\hline
Problem & Algorithm & Field shape & Global compass & Visibility & \#modules & Initial states & Stop after execution\\
\hline \hline
Shape formation & \cite{Dumitrescu2004} & -- & Yes & Unlimited & $\geq 2$ & Convex shapes & Yes \\
\hline
Locomotion	& \cite{Dumitrescu2004a}	& --	& Yes	& Unlimited		& $\geq 2$	& Any & -- \\
			& \cite{Chen2014}			& --	& Yes	& $7 \times 7$	& $2$	& Convex shapes & -- \\
			& \cite{Doi2021}			& --	& No	& $7 \times 7$	& $4$	& Restricted (4 forbidden states) & -- \\
			& \cite{Doi2021}			& --	& No	& $11 \times 11$& $7$  & Any & -- \\
\hline
Search	& \cite{Doi2021}	& RF	& Yes	& $5 \times 5$		& $3$	& Any & Yes\\
		& \cite{Doi2021}	& RF	& No	& $9 \times 9$		& $5$	& Restricted (4 forbidden states) & Yes\\
		& \cite{Doi2021}	& RF	& No	& $11 \times 11$	& $7$	& Any & Yes\\
\hline
Evacuation	& Sect.~\ref{sec:evacuation-rectangular-with-global-compass}	& RF	& Yes		& $5 \times 5$		& 2		& Any & Depends on \#exits \\
			& Sect.~\ref{sec:evacuation-without-gc-restricted-init-states}	& RF	& No		& $7 \times 7$		& 4		& Restricted (4 forbidden states) & Yes \\
			& Sect.~\ref{sec:evacuation-without-gc-arbitrary-init-states}	& RF	& No		& $11 \times 11$	& 7		& Any & Depends on \#exits \\
			& Sect.~\ref{sec:evacuation-without-gc-along-the-wall}			& RF	& No		& $5 \times 5$		& 2		& Touching a wall & Depends on \#exits\\
			& Sect.~\ref{sec:evacuation-from-maze}							& Maze	& Yes/No/No	& $5/7/11 \times 5/7/11$	& 2/4/7	& Same to the results of RF	& No \\
			& Sect.~\ref{sec:evacuation-from-convex-field}					& CF	& Yes/No/No	& $5/7/11 \times 5/7/11$	& 2/4/7	& Same to the results of RF & Yes/No \\
\hline
\end{tabular}
\end{table}
\end{landscape}
Dumitrescu et al. studied \emph{reconfiguration} of the MRS to change its initial shape to a given target shape~\cite{Dumitrescu2004}.
They proved that any horizontally convex shape can be translated into a horizontal chain shape, i.e., a line.
An MRS shape is said to be \emph{horizontally convex} if the set of modules in the same row is connected.
Thus, any horizontally convex shape can be translated into another horizontally convex shape through the chain shape utilizing the reversibility of local movements.
They also consider decidability of related reconfiguration problems.

Dumitrescu et al. considered the \emph{locomotion} of the MRS to a given direction~\cite{Dumitrescu2004a} and demonstrated how the MRS achieves the fastest locomotion to a given vertical and diagonal direction.
Chen et al. proposed a locomotion algorithm for modules that can observe cells within a constant distance~\cite{Chen2014}.

Doi et al. proposed an algorithm for the \emph{search} problem by the MRS~\cite{Doi2021}.
The search problem requires an MRS to reach a target cell in an unknown finite 2D rectangular field from an arbitrary initial configuration.
The authors focused on the number of modules necessary and sufficient for the MRS to achieve search because the shape of anonymous modules serves as a memory of the MRS; for example, modules can store a direction of motion and progress of search.
They also demonstrated the effect of the \emph{global compass}; when the modules agree on the directions north, south, east, and west, we say they are equipped with a global compass.
They showed that three modules are necessary and sufficient from an arbitrary initial shape when the modules are equipped with a global compass.
Thus, the MRS guarantees \emph{self-stabilizing}~\cite{Dijkstra1974} search.
When the modules are not equipped with a global compass, modules cannot perform any movement in some initial shapes due to their symmetry.
For example, when the four modules form a square, none of them can move because symmetric modules only perform symmetric movements, and no module can serve as a static module for movement.
They showed that when the modules are not equipped with a global compass but share a common handedness, five modules are necessary and sufficient starting from restricted initial shapes.
They also showed that seven modules with the same ability are necessary and sufficient starting from arbitrary initial shapes.

\noindent{\bf Our contribution.~}
The last six rows of Table \ref{tab:contribution-summary} summarizes our contribution.
We first consider \emph{evacuation} by the MRS from an unknown finite 2D rectangular field.
In this case, the field is surrounded by walls with at least one exit on the walls.
The MRS is required to exit from the field without a priori knowledge of the field, exit position, or its initial position.
To the best of our knowledge, this is the first time study of evacuation by the MRS.
We show that when the modules are equipped with a global compass, two modules are necessary and sufficient for evacuation from an arbitrary initial shape, and when the modules are not equipped with a global compass but share a common handedness, four modules are necessary and sufficient from restricted initial shapes.
Then, we show that under the latter conditions, seven modules are necessary and sufficient for evacuation from an arbitrary initial shape, and two modules are necessary and sufficient for evacuation from a configuration where the MRS is initially touching a wall.
This result separates the problem of moving to a wall from the problem of searching for an exit on the wall.
Finally, we demonstrate that our evacuation algorithms can be extended to evacuation from an unknown maze composed of rectangular fields and from a convex field.

We also clarify the condition to stop an MRS from moving after evacuation.
In the search problem, an MRS can detect completing a given task easily and stops moving because a target cell is distinguishable from other types of cells.
However, in the evacuation problem, the detection and the stop are difficult because a field has no clear sign that helps to detect the completion.
Moreover, in most cases, an MRS cannot distinguish by observing its neighbor cells whether it is inside a field or outside a field.
We prove that an MRS can stop moving if a rectangular field has only one exit or restrictions on its initial shape.

\noindent\textbf{Related work.}~
A variety of theoretical computational models for modular robotic systems have been proposed in the field of distributed computing theory.
The \emph{mobile robot system}, also known as the \emph{LCM robot system}, considers a set of anonymous mobile robots moving in 2D Euclidean space~\cite{Suzuki1999}.
The robots are anonymous, uniform, communication-less, and have no access to the global coordinate system.
Each robot repeats a cycle, in which it observes the positions of other robots and computes and moves to its next position.
A robot is \emph{oblivious} if the input to computation is the preceding observation, otherwise \emph{non-oblivious}.
Di Luna et al. showed that a constant number of oblivious mobile robots can simulate a single non-oblivious mobile robot by encoding the contents of the local memory of a non-oblivious robot to their geometric positions~\cite{DiLuna2018}.

The \emph{programmable particle model} consists of anonymous particles with constant size local memory in an infinite 2D triangular grid~\cite{Derakhshandeh2014}.
Each vertex of the triangular grid can accept at most one particle at each time step. 
Each particle can detect whether a neighboring vertex is occupied by another particle and can communicate with particles at neighboring vertices.
Each particle moves by repeating \emph{expansion} and \emph{contraction}; a particle is expanded when it occupies two neighboring vertices and contracted when it occupies a single vertex.
Daymude et al. showed that programmable particles collectively form a global counter by forming a chain shape, and they can form a convex hull of an object of unknown size~\cite{Daymude2020}.
Di Luna et al. considered \emph{shape formation} by programmable particles and showed that the programmable particles can simulate a Turing machine~\cite{DiLuna2020} and RAM~\cite{DiLuna2020b}.

Cooperative evacuation has been considered for mobile robots initially placed at the center of a disk with an exit on its boundary~\cite{Czyzowicz2014}.
The goal is to minimize the time required for all the robots to exit.
Contrary to the mobile robot system that we mentioned above, mobile robots considered in the paper are equipped with two communication medias: \emph{wireless communication} and \emph{local communication}.
They presented evacuation algorithms and lower bounds of evacuation time for two and three robots.
Czyzowicz et al. considered evacuation of three robots equipped with wireless communication
when at least one of them is faulty~\cite{Czyzowicz2017}, and evacuation of at most four robots when one of the robots is the ``queen'' and the goal is to guide the queen to the exit~\cite{Czyzowicz2020}.
Evacuation has many applications such as exploration, rescue, map construction, and navigation in dangerous terrain, disaster area, etc.

All these results are expected to serve as theoretical foundations for many related areas, such as robotics, navigation, autonomous vehicles, nano-manufacturing, molecular robotics, and understanding natural systems.

\noindent\textbf{Organization.}~
This paper is organized as follows:
Section \ref{sec:preliminaries} defines our model and the problems considered here.
First, we consider the evacuation problem from a rectangular field in Section \ref{sec:evacuation-rectangular}.
In Sections \ref{sec:evacuation-rectangular-with-global-compass} and \ref{sec:evacuation-rectangular-wo-global-compass}, we discuss evacuation by an MRS composed of modules equipped with a global compass and modules not equipped with a global compass, respectively.
In Section \ref{sec:stop-after-evacuation}, we clarify the condition to stop an MRS from moving after evacuation of a rectangular field.
Finally, in Section \ref{sec:other-types-of-field}, we extend these results to other field types: mazes and convex fields.
We conclude this paper in Section \ref{sec:conclusion}.

\section{Preliminaries}
\label{sec:preliminaries}

\subsection{Model}

We consider the rectangular MRS introduced in \cite{Chen2014,Doi2021,Dumitrescu2004,Dumitrescu2004a} and a 2D square grid where each square cell $c_{i,j}$ is labeled by the underlying $x$-$y$ coordinate system, as shown in Fig.~\ref{fig:cell}.
\begin{figure}[t]
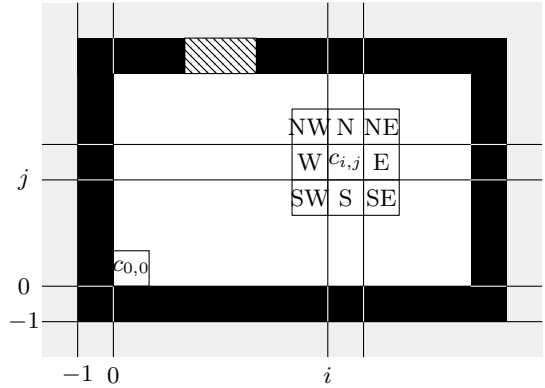

	\centering
	\setlength{\gridstep}{4.7mm}
\begin{nogridplot}{14}{10}

	\outside{1}
	\gwall{1}{1}{12}{1}
	\gwall{1}{1}{1}{8}
	\gwall{12}{1}{12}{8}
	\gwall{1}{8}{3}{8}
	\gwall{6}{8}{12}{8}
	\gexit{4}{8}{5}{8}

	\draw [black] (0.0,1.0) -- (14.0,1.0);
	\draw [black] (0.0,2.0) -- (14.0,2.0);
	\draw [white] (1.0,2.0) -- (2.0,2.0);
	\draw [white] (12.0,2.0) -- (13.0,2.0);

	\draw [black] (0.0,5.0) -- (14.0,5.0);
	\draw [white] (1.0,5.0) -- (2.0,5.0);
	\draw [white] (12.0,5.0) -- (13.0,5.0);

	\draw [black] (0.0,6.0) -- (14.0,6.0);
	\draw [white] (1.0,6.0) -- (2.0,6.0);
	\draw [white] (12.0,6.0) -- (13.0,6.0);

	\draw [black] (1.0,0.0) -- (1.0,10.0);
	\draw [black] (2.0,0.0) -- (2.0,10.0);
	\draw [white] (2.0,1.0) -- (2.0,2.0);
	\draw [white] (2.0,8.0) -- (2.0,9.0);

	\draw [black] (8.0,0.0) -- (8.0,10.0);
	\draw [white] (8.0,1.0) -- (8.0,2.0);
	\draw [white] (8.0,8.0) -- (8.0,9.0);

	\draw [black] (9.0,0.0) -- (9.0,10.0);
	\draw [white] (9.0,1.0) -- (9.0,2.0);
	\draw [white] (9.0,8.0) -- (9.0,9.0);

	\draw [step=\the\gridstep] (7.0,4.0) grid (10.0,7.0);
	\draw (2.0,2.0) rectangle (3.0,3.0);

	\node at (2.5, 2.5) {\small $c_{0,0}$};

	\node at (8.5, 6.5) {\small N};
	\node at (8.5, 5.5) {\small $c_{i,j}$};
	\node at (8.5, 4.5) {\small S};

	\node at (9.5, 6.5) {\small NE};
	\node at (9.5, 5.5) {\small E};
	\node at (9.5, 4.5) {\small SE};

	\node at (7.5, 6.5) {\small NW};
	\node at (7.5, 5.5) {\small W};
	\node at (7.5, 4.5) {\small SW};

	\node at (8.0, -0.5) {\small $i$};
	\node at (-0.5, 5.0) {\small $j$};

	\node at (2.0, -0.5) {\small 0};
	\node at (-0.5, 2.0) {\small 0};

	\node at (1.0, -0.5) {\small $-1$};
	\node at (-0.5, 1.0) {\small $-1$};

\end{nogridplot}
	\caption{Cell $c_{i,j}$ and its eight adjacent cells in a field.
		The interior (resp. exterior) is represented by white (resp. gray) cells.
		The exit is filled with diagonal lines.}
	\label{fig:cell}
\end{figure}
Each cell $c_{i,j}$ has eight \emph{adjacent} cells: East (E) $c_{i+1,j}$, NorthEast (NE) $c_{i+1,j+1}$, North (N) $c_{i,j+1}$, NorthWest (NW) $c_{i-1,j+1}$, West (W) $c_{i-1,j}$, SouthWest (SW) $c_{i-1,j-1}$, South (S) $c_{i,j-1}$, and SouthEast (SE) $c_{i+1,j-1}$.
Also, we say that the four cells, N, S, E, and W, are \emph{side-adjacent} to $c_{i,j}$.
An infinite sequence of cells with the same $x$ (resp.~$y$) coordinate is called a \emph{column} (resp.~a \emph{row}).
A \emph{field} is a rectangular subgrid in the square 2D grid, divided into \emph{interior} and \emph{exterior} regions.
The interior is surrounded by walls that separate it from the exterior.
A wall is composed of cells in the same row or column, and a module cannot occupy a cell of a wall.
We assume that the exterior is sufficiently large relative to the number of modules.
Without loss of generality, we assume that $c_{0,0}$ (resp. $c_{w-1,h-1}$) is the southwesternmost cell (resp. the northeasternmost cell) of the interior.
The interior has at least one \emph{exit} to the exterior.
The exit is on a wall, and it consists of a sequence of cells.
The interior is connected to the exterior only by an exit.
These cell labels are used just for description, and there is no way to distinguish cells.
We consider various types of fields and detail them in Section \ref{sec:fields}.

An MRS, $R$ consists of $n$ $(\geq 2)$ anonymous modules, each of which occupies a distinct cell in the square grid at discrete time step $t=0,1,2,\cdots$.
The \emph{configuration} $C_t$ of $R$ at time $t$ is the set of cells occupied by the modules at time $t$.
An \emph{execution} is an evolution of configurations $C_0,C_1,C_2,\cdots$.
The evolution is generated by movements of modules.
Let $M_t$ be a set of modules that move at time $t$.
We call a set $B_t$ of the modules that do not move at time $t$ (that is, $B_t = C_t \setminus M_t)$ a \emph{backbone}.
Modules have two types of movement, \emph{rotation} and \emph{sliding}, which are both guided by backbone modules as shown in Fig.~\ref{fig:basic-move}.
\begin{figure}[t]
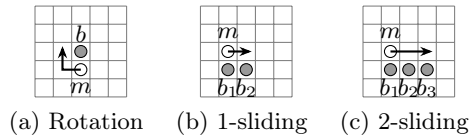

	\centering
	\begin{tabular}{ccc}
	\begin{viewplot}{5}
		\backbone{0}{0}
		\module{0}{-1}
		\draw[myarrow] (2.5,1.5) -- (1.5,1.5) -- (1.5,2.8);
		\node at (2.5, 0.5) {{\small $m$}};
		\node at (2.5, 3.5) {{\small $b$}};
	\end{viewplot}
	&
	\begin{viewplot}{5}
		\module{-1}{0}
		\backbone{-1}{-1}
		\backbone{0}{-1}
		\draw[myarrow] (1.5,2.5) -- (2.8,2.5);
		\node at (1.5, 0.5) {{\small $b_1$}};
		\node at (2.5, 0.5) {{\small $b_2$}};
		\node at (1.5, 3.5) {{\small $m$}};
	\end{viewplot}
	&
	\begin{viewplot}{5}
		\module{-1}{0}
		\backbone{-1}{-1}
		\backbone{0}{-1}
		\backbone{1}{-1}
		\draw[myarrow] (1.5,2.5) -- (3.8,2.5);
		\node at (1.5, 0.5) {{\small $b_1$}};
		\node at (2.5, 0.5) {{\small $b_2$}};
		\node at (3.5, 0.5) {{\small $b_3$}};
		\node at (1.5, 3.5) {{\small $m$}};
	\end{viewplot}
	\\
	\subcaption{(a) Rotation} & \subcaption{(b) 1-sliding} & \subcaption{(c) 2-sliding}
\end{tabular}

	\caption{Movements of a module}
	\label{fig:basic-move}
\end{figure}
During a rotation, a module $m$ moves either clockwise or counter-clockwise around a side-adjacent backbone module $b$ by $\pi/2$ (Fig.~\ref{fig:basic-move}(a)).
We classify sliding movements by their distance of motion.
By a 1-sliding, a module $m$ moves to a side-adjacent cell along two backbone modules, $b_1$ and $b_2$ (Fig.~\ref{fig:basic-move}(b)).
For this movement to occur, the backbone module $b_1$ (resp.~$b_2$) must be side-adjacent to $m$ (resp.~$b_1$).
Similarly to a 1-sliding, by a \emph{$k$-sliding} ($k \geq 2,3,\cdots$), a module $m$ moves $k$ side-adjacent cells along $k+1$ backbone modules.
For these movements, the cells that $m$ passes through during a movement must not be occupied by any modules.

The connectivity of configuration $C_t$ is represented by a connectivity graph $G_t = (C_t, E_t)$.
The edge set $E_t$ contains an edge $(c,c')$ for $c,c' \in C_t$ if and only if cells $c$ and $c'$ are side-adjacent.
If a connectivity graph $G_t$ induced by a configuration $C_t$ is connected, we say a configuration $C_t$ is \emph{connected}.
Any execution $C_0,C_1,C_2,\cdots$ of an MRS must satisfy the following three conditions:
\begin{itemize}
	\item \textbf{Connectivity}: for any time $t$, $C_t$ is connected.
	\item \textbf{Single backbone}: for any time $t$, $B_t$ is connected.
	\item \textbf{No interference}: for any time $t$, the trajectories of any two moving modules never overlap.
\end{itemize}

Modules of an MRS are \emph{uniform}; they are anonymous and execute an identical deterministic distributed algorithm.
Modules are \emph{oblivious}, that is, they have no memory.
Modules behave synchronously; at each time step, each module observes the cells in its neighborhood and decides its movement.
We say that a cell $c_{i',j'}$ is a \emph{$k$-neighborhood} of cell $c_{i,j}$ if $|i' - i| \leq k$ and $|j'-j| \leq k$.
We assume that a module can determine whether each cell in its $k$-neighborhood is occupied by a module or a wall and that $k$ is constant with respect to the size of a field.
A \emph{view} of a module $m$ is a $(2k+1) \times (2k+1)$ square subgrid centered at the cell that $m$ occupies.
A distributed algorithm of neighborhood size $k$ is defined by a total function that maps a view to the cell to which a module moves in this step. 
We assume that a module knows $n$, and an algorithm is not universal; an algorithm is designed for an MRS composed of exactly $n$ modules.

If the modules are equipped with a \emph{global compass}, they share common North, South, East, and West directions.
Otherwise, they do not know directions and their observations may be inconsistent.
However, we assume that the modules agree on the clockwise direction; they share a common handedness.

The \emph{state} of an MRS $R$ in $C_t$ is the local shape of $R$.
We denote by $S^n$ a state of $R$ composed of $n$ modules.
If the modules are equipped with a global compass, a state of $R$ contains global directions.
Otherwise, it does not contain any direction because the modules cannot coordinate any rotation of their state in the absence of a global compass.

If the modules are equipped with a global compass, the execution of a given algorithm is uniquely determined by $C_0$ because the modules can agree on a total ordering among themselves.
In contrast, if the modules are not equipped with a global compass, there exist multiple executions from $C_0$ depending on the local compass of each module.

\subsection{Problems}
\label{sec:problems}

Here, we consider the following evacuation problem.
\begin{definition}[Evacuation]
\label{def:evacuation}
A metamorphic robotic system has to evacuate from the interior of a field starting from any initial position and initial shape through an exit.
\end{definition}
\noindent We refer to evacuation from the interior to the exterior through an exit on a field as ``the evacuation from the field''.

As a building block of the proposed evacuation algorithms, we use existing algorithms \cite{Chen2014,Doi2021} that solve the following locomotion problem.
\begin{definition}[Locomotion]
\label{def:locomotion}
A metamorphic robotic system in an infinite 2D square grid is required to keep on moving in one direction.
\end{definition}
\noindent Hence, to solve the locomotion problem, modules of an MRS must break their initial symmetry and agree on a direction of motion.

\subsection{Fields}
\label{sec:fields}

In this section, we will describe the types of fields of the evacuation problem precisely.
First, we prove the following lemma regarding the basic movements of an MRS.
\begin{lemma}
\label{lem:exit-size}
An exit of a field must consist of at least two cells.
\end{lemma}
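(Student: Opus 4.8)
The plan is to show that if an exit consisted of a single cell, then no module could ever pass through it, making evacuation impossible and thus violating the requirement that an exit serve its purpose. I would argue this by examining the movement primitives available to a module and showing that each one requires a backbone module adjacent to the moving module in a way that is incompatible with threading through a width-one gap in the wall.

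First I would set up the geometry: suppose for contradiction that an exit is a single cell $e$ on a wall, so that the two cells side-adjacent to $e$ along the wall are both wall cells, and the two cells side-adjacent to $e$ perpendicular to the wall are the interior cell $c_{\mathrm{in}}$ and the exterior cell $c_{\mathrm{out}}$. To evacuate, at least one module must at some time step move from $c_{\mathrm{in}}$ (or another interior cell) into $c_{\mathrm{out}}$, i.e. a module $m$ must cross the exit. I would then case on the type of movement $m$ uses to make this crossing, invoking the movement definitions from the Model subsection (rotation, $1$-sliding, $k$-sliding).

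The heart of the argument is that every admissible movement needs a backbone module occupying a cell adjacent to $m$'s trajectory, and near a single-cell exit there is no room for such a backbone. For a rotation, $m$ pivots by $\pi/2$ around a side-adjacent backbone $b$; tracing the two possible pivots that would carry $m$ across the wall, the required pivot cell $b$ would have to lie on the wall or the trajectory would pass through a wall cell, contradicting that modules cannot occupy wall cells and that $m$'s swept cells must be free. For a $1$-sliding (and more generally $k$-sliding), $m$ translates along a line of backbone modules $b_1,\dots,b_{k+1}$ that must sit in the row or column side-adjacent to $m$'s path; since the exit has width one, the cells flanking the gap are wall cells, so no such backbone line can be placed to guide $m$ through the opening. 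In each case the movement is infeasible, so no module can cross a single-cell exit.

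Finally I would conclude that an exit of width one admits no evacuating movement, so an exit must consist of at least two cells, giving the lemma. The main obstacle I anticipate is being exhaustive and precise in the rotation case: I must check every orientation of the wall (horizontal versus vertical) and every clockwise/counter-clockwise pivot, confirming in each subcase that either the pivot cell or a swept cell falls on a wall. I expect the sliding cases to be comparatively routine once the geometry is fixed, since the width-one constraint immediately blocks placement of the backbone line, whereas the rotation case requires carefully verifying that the quarter-circle arc cannot thread the single-cell gap without violating the no-interference or wall constraints.
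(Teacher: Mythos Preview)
Your proposal is correct and follows essentially the same approach as the paper: assume a single-cell exit for contradiction, then case on the movement primitives (rotation and $k$-sliding) to show that any crossing would force a backbone module to occupy a wall cell or the trajectory to pass through a wall. The paper's proof is terser and relies on figures rather than your explicit geometric setup and exhaustive orientation check, but the underlying argument is the same.
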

\begin{proof}
This can be proven using the fact that every basic move shown in Fig.~\ref{fig:basic-move} requires two or more cells that are not occupied by walls, as indicated in Fig.~\ref{fig:exit-size-limitation}.
Note that modules of an MRS must keep their connectivity during a move, and it is impossible for a module to pass through an exit which is only one cell wide without breaking connectivity.

For contradiction, we assume that there is an exit composed of one cell.
To pass the exit by a rotation, there must be a cell with a backbone module next to the exit (Fig.~\ref{fig:exit-size-limitation}(a)).
However, this is impossible because the cell is occupied by the wall.
Similarly, passing the exit by a $k$-sliding is also impossible because doing so also requires a backbone module in the wall (Fig. \ref{fig:exit-size-limitation}(b)).
On the other hand, if the exit consists of two or more cells, the backbone module does not need to be located in a cell that is part of a wall; thus, a module can pass through the exit by a rotation or a $k$-sliding.
\begin{figure}[t]
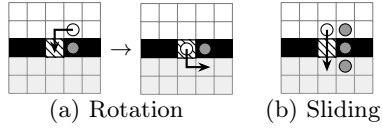

	\centering
	\begin{tabular}{cc}
	\begin{customviewplot}{5}{5}
		\gwall{0}{2}{1}{2}
		\gwall{3}{2}{4}{2}
		\outsidearea{0}{0}{4}{1}
		\exit{0}{0}{0}{0}

		\gmodule{3}{3}
		\gbackbone{3}{2}

		\draw[myarrow] (3.5,3.5) -- (2.5,3.5) -- (2.5,2.2);
	\end{customviewplot}
	\ra
	\begin{customviewplot}{5}{5}
		\gwall{0}{2}{1}{2}
		\gwall{3}{2}{4}{2}
		\outsidearea{0}{0}{4}{1}
		\exit{0}{0}{0}{0}

		\gmodule{2}{2}
		\gbackbone{3}{2}

		\draw[myarrow] (2.5,2.5) -- (2.5,1.5) -- (3.8,1.5);
	\end{customviewplot}
	&
	\begin{customviewplot}{5}{5}
		\gwall{0}{2}{1}{2}
		\gwall{3}{2}{4}{2}
		\outsidearea{0}{0}{4}{1}
		\exit{0}{0}{0}{0}

		\gmodule{2}{3}
		\gbackbone{3}{1}
		\gbackbone{3}{2}
		\gbackbone{3}{3}

		\draw[myarrow] (2.5,3.5) -- (2.5,1.2);
	\end{customviewplot}
	\\
	\subcaption{(a) Rotation} & \subcaption{(b) Sliding}\\
\end{tabular}

	\caption{Counterexamples of an exit composed of one cell}
	\label{fig:exit-size-limitation}
\end{figure}
\end{proof}

We consider the evacuation problem in rectangular fields, mazes, and convex fields.
From Lemma \ref{lem:exit-size}, we assume that every exit consists of two or more side-adjacent cells.
Figures \ref{fig:example-rectangular-fields}--\ref{fig:example-convex-fields} illustrate the examples of these types of field.

\begin{definition}
\label{def:rectangular-field}
A \emph{rectangular field} $F$ has a rectangular interior whose width and height are $w$ and $h$ cells, respectively ($w,h \geq 2$).
The interior of $F$ is surrounded by four walls each with a thickness of one cell, located at its North, South, East, and West.
The walls have at least one exit.
\end{definition}

\begin{definition}
\label{def:maze}
A \emph{maze} $M$ is composed of rectangular fields.
The walls of a rectangular field in $M$ are shared with other rectangular fields in $M$.
An exit of a rectangular field in $M$ connects to an exit of another rectangular field or the exterior.
The maze $M$ has at least one exit to the exterior.
\end{definition}

\begin{definition}
\label{def:convex-field}
We say that a set of connected cells is \emph{horizontally convex} (resp.~\emph{vertically convex}) when the cells at the same row (resp.~column) are connected for all the rows (resp.~columns).
A \emph{horizontally convex field} (resp.~\emph{vertically convex field}) has a horizontally (resp.~vertically) convex interior.
The interior is surrounded by walls whose thickness is one cell.
The walls have at least one exit.
\end{definition}
\noindent
The two types are rotationally symmetric; hence the two fields being different is inconsequential to the solvability of the evacuation problem.
Thus, we do not distinguish between these types and refer them both as \emph{convex fields}.

\begin{figure}[t]
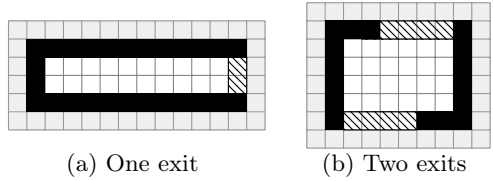

	\centering
	\begin{tabular}{cc}
\begin{customviewplot}{14}{6}
\outside{1}
\gwall{1}{1}{1}{4}
\gwall{1}{1}{12}{1}
\gwall{1}{4}{12}{4}
\gexit{12}{2}{12}{3}
\end{customviewplot}
&
\begin{customviewplot}{10}{8}
\outside{1}
\gwall{1}{1}{1}{6}
\gwall{8}{1}{8}{6}
\gwall{6}{1}{8}{1}
\gwall{1}{6}{3}{6}

\gexit{2}{1}{5}{1}
\gexit{3}{6}{7}{6}

\end{customviewplot}
\\
\subcaption{(a) One exit} & \subcaption{(b) Two exits}\\
\end{tabular}

	\caption{Examples of rectangular fields}
	\label{fig:example-rectangular-fields}
\end{figure}
\begin{figure}[t]
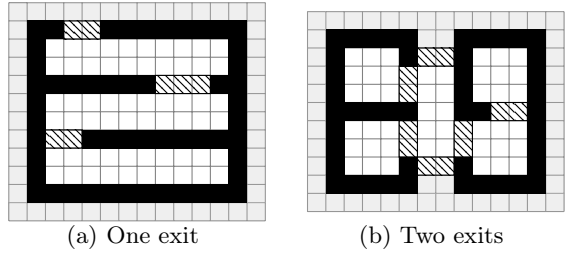

	\centering
	\begin{tabular}{cc}
\begin{customviewplot}{14}{12}
\outside{1}
\gwall{1}{1}{1}{10}
\gwall{12}{1}{12}{10}
\gwall{1}{1}{12}{1}
\gwall{4}{4}{12}{4}
\gwall{1}{7}{7}{7}
\gwall{11}{7}{12}{7}
\gwall{1}{10}{2}{10}
\gwall{5}{10}{12}{10}

\gexit{2}{4}{3}{4}
\gexit{8}{7}{10}{7}

\gexit{3}{10}{4}{10}

\end{customviewplot}
&
\begin{customviewplot}{14}{11}
\outside{1}
\outsidearea{5}{1}{7}{1}
\outsidearea{6}{9}{8}{9}

\gwall{1}{9}{5}{9}
\gwall{1}{5}{5}{5}
\gwall{1}{5}{1}{9}

\gwall{1}{1}{5}{1}
\gwall{1}{1}{1}{5}

\gwall{8}{2}{8}{2}
\gwall{5}{8}{5}{8}
\gwall{5}{2}{5}{2}
\gwall{8}{5}{8}{8}

\gwall{8}{9}{12}{9}
\gwall{12}{5}{12}{9}

\gwall{8}{5}{9}{5}
\gwall{8}{1}{12}{1}
\gwall{12}{1}{12}{5}

\gexit{5}{3}{5}{4}
\gexit{5}{6}{5}{7}

\gexit{8}{3}{8}{4}
\gexit{10}{5}{11}{5}

\gexit{6}{2}{7}{2}
\gexit{6}{8}{7}{8}

\end{customviewplot}
\\
\subcaption{(a) One exit} & \subcaption{(b) Two exits}\\
\end{tabular}

	\caption{Examples of mazes}
	\label{fig:example-mazes}
\end{figure}
\begin{figure}[t]
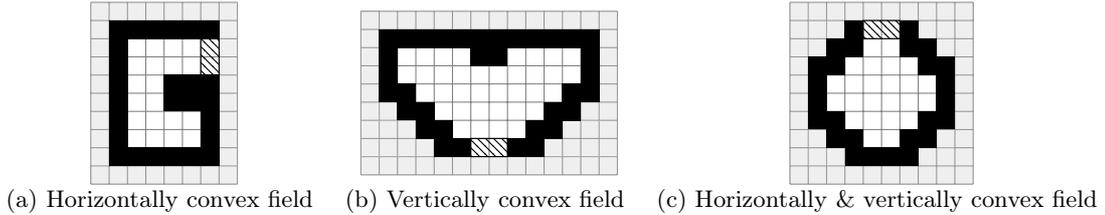

	\centering
	\begin{tabular}{ccc}
\begin{customviewplot}{8}{10}
\outside{1}
\gwall{1}{1}{1}{8}
\gwall{1}{1}{6}{1}
\gwall{1}{8}{6}{8}
\gwall{6}{1}{6}{5}
\gwall{4}{4}{5}{5}

\gexit{6}{6}{6}{7}
\end{customviewplot}
&
\begin{customviewplot}{14}{9}
\outside{1}
\outsidearea{1}{1}{2}{2}
\outsidearea{1}{3}{1}{3}
\outsidearea{3}{1}{3}{1}

\outsidearea{11}{1}{12}{2}
\outsidearea{10}{1}{10}{2}
\outsidearea{12}{3}{12}{3}

\gwall{1}{7}{12}{7}
\gwall{1}{4}{1}{7}
\gwall{1}{4}{2}{4}
\gwall{2}{3}{2}{4}
\gwall{3}{2}{3}{3}
\gwall{3}{2}{4}{2}
\gwall{4}{1}{4}{2}
\gwall{4}{1}{5}{1}
\gwall{8}{1}{9}{1}
\gwall{9}{1}{9}{2}
\gwall{9}{2}{10}{2}
\gwall{10}{2}{10}{3}
\gwall{10}{3}{11}{3}
\gwall{11}{3}{11}{4}
\gwall{11}{4}{12}{4}
\gwall{12}{4}{12}{7}
\gwall{6}{6}{7}{6}

\gexit{6}{1}{7}{1}

\end{customviewplot}
&
\begin{customviewplot}{10}{10}
\outside{1}
\outsidearea{1}{1}{1}{2}
\outsidearea{2}{1}{2}{1}

\outsidearea{1}{7}{1}{8}
\outsidearea{2}{8}{2}{8}

\outsidearea{7}{1}{8}{1}
\outsidearea{8}{2}{8}{2}

\outsidearea{7}{8}{8}{8}
\outsidearea{8}{7}{8}{7}

\gwall{3}{1}{6}{1}
\gwall{1}{3}{1}{6}
\gwall{8}{3}{8}{6}
\gwall{3}{8}{3}{8}
\gwall{6}{8}{6}{8}

\gwall{2}{2}{2}{3}
\gwall{2}{2}{3}{2}

\gwall{2}{6}{2}{7}
\gwall{2}{7}{3}{7}

\gwall{6}{2}{7}{2}
\gwall{7}{2}{7}{3}

\gwall{6}{7}{7}{7}
\gwall{7}{6}{7}{7}

\gexit{4}{8}{5}{8}

\end{customviewplot}
\\
\subcaption{(a) Horizontally convex field} &
\subcaption{(b) Vertically convex field} &
\subcaption{(c) Horizontally \& vertically convex field} \\
\end{tabular}

	\caption{Examples of convex fields}
	\label{fig:example-convex-fields}
\end{figure}

We represent a \emph{side-adjacency graph} of a set $V_c$ of cells by $G(V_c) = (V_c,E_c)$ where an edge $e=\{c,c'\}$ is in $E_c$ if and only if cells $c$ and $c'$ in $V_c$ are side-adjacent.
For a field $F$, we denote the set of all the wall cells by $W_F$, the set of all exit cells that are side-adjacent to the exterior by $X_F$, and the set of all the cells in the interior and all exit cells that connect the interiors of two rectangular fields\footnote{This type of exit cells occurs only in a maze.} by $I_F$.

To guarantee that evacuation from a maze is solvable, we assume that every maze $M$ considered here satisfies the following conditions:
\begin{enumerate}
    \item In the side-adjacency graph $G(I_M)$, every cell $c$ in $I_M$ has at least one path to a cell that is side-adjacent to an exit cell in $X_M$.
    \item The side-adjacency graph $G(W_M \cup X_M)$ is connected.
\end{enumerate}
Figure \ref{fig:maze-conditions} shows examples of mazes that do not satisfy one of these conditions.
Condition 1 guarantees the existence of an evacuation path for every cell in the interior.
The maze in Fig.~\ref{fig:maze-conditions}(a) does not satisfy Condition 1; thus, an MRS cannot evacuate from the interior of the maze because it cannot find any exit.
Condition 2 ensures that an exit can be reached by trailing walls.
To evacuate from the maze in Fig.~\ref{fig:maze-conditions}(b), an MRS $R$ must place some kind of token on a cell to remember that $R$ has already searched the cell so that it does not search the same cell again.
However, the MRS (and modules) assumed here does not have such a function.
Thus, we assume that Condition 2 is satisfied.
\begin{figure}[t]
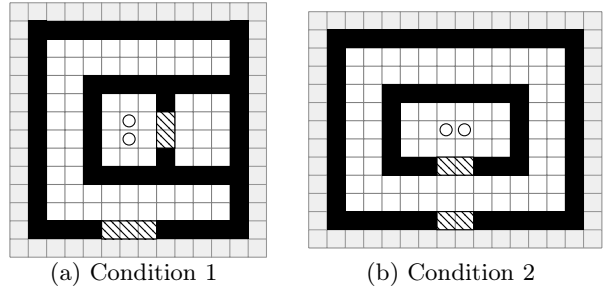

	\centering
	\begin{tabular}{cc}

\begin{customviewplot}{14}{14}
\outside{1}

\gwall{1}{1}{4}{1}
\gwall{8}{1}{12}{1}
\gwall{1}{1}{1}{12}
\gwall{12}{1}{12}{12}
\gwall{1}{12}{12}{12}
\gwall{4}{4}{12}{4}
\gwall{4}{9}{12}{9}
\gwall{4}{4}{4}{9}
\gwall{12}{4}{12}{9}
\gwall{8}{4}{8}{5}
\gwall{8}{8}{8}{9}
\gexit{5}{1}{7}{1}
\gexit{8}{6}{8}{7}

\gmodule{6}{6}
\gmodule{6}{7}

\end{customviewplot}
&
\begin{customviewplot}{16}{13}
\outside{1}

\gwall{1}{1}{6}{1}
\gwall{9}{1}{14}{1}
\gwall{1}{11}{14}{11}
\gwall{1}{1}{1}{11}
\gwall{14}{1}{14}{11}
\gwall{4}{4}{6}{4}
\gwall{9}{4}{11}{4}
\gwall{4}{4}{4}{8}
\gwall{11}{4}{11}{8}
\gwall{4}{8}{11}{8}
\gexit{7}{1}{8}{1}
\gexit{7}{4}{8}{4}

\gmodule{7}{6}
\gmodule{8}{6}

\end{customviewplot}
\\
\subcaption{(a) Condition 1}
&
\subcaption{(b) Condition 2}

\end{tabular}

	\caption{Counterexamples of the conditions that a maze must satisfy}
	\label{fig:maze-conditions}
\end{figure}

\section{Evacuation from a rectangular field}
\label{sec:evacuation-rectangular}
Here, we consider evacuation of an MRS from a rectangular field.
Section \ref{sec:evacuation-rectangular-with-global-compass} discusses the case where each module is equipped with a global compass, and Section \ref{sec:evacuation-rectangular-wo-global-compass} discusses the case where each module does not have a global compass.
Finally, Section \ref{sec:stop-after-evacuation} shows the possibility and impossibility of an evacuation algorithm to stop an MRS from moving after evacuation.

\subsection{Modules with a global compass}
\label{sec:evacuation-rectangular-with-global-compass}

Here, we prove the following theorem to show that $n=2$ is a necessary and sufficient condition for evacuation from any rectangular field given that the modules are equipped with a global compass.
\begin{theorem}
\label{thm:evacuation-with-gc}
Two modules equipped with a global compass are necessary and sufficient for a metamorphic robotic system to solve the evacuation problem in any rectangular field, starting from an arbitrary initial state.
\end{theorem}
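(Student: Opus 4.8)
The plan is to prove both directions separately. For necessity, I would observe that every basic move in Fig.~\ref{fig:basic-move} requires at least one backbone module; a lone module ($n=1$) has no backbone and therefore cannot perform any move at all. Since evacuation means leaving the interior through an exit, which is itself a move, a single module started in the interior can never evacuate. Hence at least two modules are required, matching the $n \ge 2$ hypothesis of the model and establishing that $2$ is the minimum.

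For sufficiency, I would exhibit an explicit algorithm of neighborhood size $k=2$ (a $5\times 5$ view) for an MRS of exactly two modules. I would first note that the only connected two-module states are the horizontal and the vertical domino, and that — because $n=2$ leaves a single backbone when one module moves — the sole available primitive is a $\pi/2$ rotation of one module about the other (no $k$-sliding is possible, since that needs at least two backbone modules). The locomotion building block of \cite{Chen2014,Doi2021} provides a tumbling gait built from such rotations, and because the modules share a global compass they can always designate the pivot unambiguously (e.g.\ ``the eastern module pivots''), so no symmetry obstruction arises and the execution is deterministic. The algorithm would then proceed in three phases: (A) \emph{locomotion}, in which the pair tumbles in a fixed compass direction until a wall appears in its view; (B) \emph{wall-following}, in which, upon detecting the wall, the pair turns and trails the interior boundary with a fixed handedness, keeping the wall on a designated side; and (C) \emph{exit passage}, in which, upon recognizing an exit within the view, the pair steers into it and tumbles through to the exterior.

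The correctness argument would rest on two observations. First, the boundary of a rectangular interior is a single closed cycle of wall cells, so wall-following with a consistent handedness visits every boundary segment; since the field has at least one exit, the pair necessarily reaches an exit. Second, by Lemma~\ref{lem:exit-size} every exit is at least two cells wide, which is exactly the clearance a rotation needs to carry a module from the interior side of a wall to the exterior without the backbone ever having to occupy a wall cell; thus the passage step is always realizable while preserving connectivity, the single-backbone property, and no-interference.

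The hard part, I expect, will be designing the finite view-to-move function so that phases (B) and (C) are handled locally and without getting stuck. Concretely, the $5\times 5$ view must let the pair (i) negotiate each of the four corners of the rectangular interior, where the wall direction changes and the pair must reorient its tumbling, and (ii) distinguish a genuine exit gap from an ordinary stretch of interior lying next to a wall, then commit to entering it rather than tumbling past. Because only the rotation primitive is available, each corner turn and the exit-entry transition must be realized as a short, carefully chosen sequence of rotations whose intermediate configurations all satisfy the three execution constraints; verifying that the resulting total function is well defined on every reachable view and that it makes monotone progress around the boundary is the crux of the argument, and I would carry it out by a case analysis over the possible local views at walls, corners, and exits.
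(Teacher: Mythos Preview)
Your proposal is correct and follows essentially the same approach as the paper: necessity via the observation that any move needs a backbone module, and sufficiency via an explicit two-module algorithm that first performs locomotion in a fixed compass direction until a wall is reached, then wall-follows with a fixed handedness (turning at corners) until an exit is detected, and finally passes through the exit using rotations; the paper likewise uses a $5\times 5$ view and cites the same locomotion building block from \cite{Chen2014}. Your anticipated ``hard part'' --- the local case analysis at walls, corners, and exits --- is precisely what the paper carries out pictorially in Figs.~\ref{fig:move-along-wall-with-gc}--\ref{fig:evacuate-from-exit-with-gc}.
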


The necessary condition of Theorem \ref{thm:evacuation-with-gc} (i.e., an MRS must be composed of at least two modules to be able to solve the evacuation problem) can be proven easily because any movement of a module requires at least one other backbone module.
In the remainder of this section, we prove that $n=2$ is sufficient by demonstrating an algorithm that solves the evacuation problem with two modules.

Movement of an MRS $R$ in the proposed algorithm is composed of the following two parts:
\begin{itemize}
	\item \textbf{Reaching a wall}: $R$ moves in one direction from its initial position and initial shape until reaching a wall.
	\item \textbf{Searching for an exit}: $R$ moves along the walls to find an exit and evacuates through that exit.
\end{itemize}
Figure \ref{fig:mrs-trail-in-rectangular-field} illustrates an example trail of an MRS performing these movements.
\begin{figure}[t]
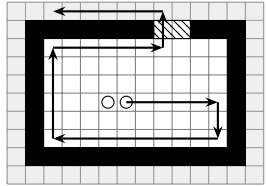

	\centering
	\begin{customviewplot}{14}{10}
\outside{1}
\gwall{1}{1}{1}{8}
\gwall{12}{1}{12}{8}
\gwall{1}{1}{12}{1}
\gwall{1}{8}{7}{8}
\gwall{10}{8}{12}{8}

\gexit{8}{8}{9}{8}

\gmodule{6}{4}
\gmodule{5}{4}

\draw[myarrow] (6.5,4.5) -- (11.5,4.5);
\draw[myarrow] (11.5,4.5) -- (11.5,2.5);
\draw[myarrow] (11.5,2.5) -- (2.5,2.5);
\draw[myarrow] (2.5,2.5) -- (2.5,7.5);
\draw[myarrow] (2.5,7.5) -- (8.5,7.5);
\draw[myarrow] (8.5,7.5) -- (8.5,9.5);
\draw[myarrow] (8.5,9.5) -- (2.5,9.5);
\end{customviewplot}
	\caption{An example evacuation path of an MRS from a rectangular field}
	\label{fig:mrs-trail-in-rectangular-field}
\end{figure}
The proposed algorithm uses the locomotion algorithm (Fig.~\ref{fig:locomotion-with-gc}) proposed in \cite{Chen2014} to reach a wall.
In the algorithm, module $a$ rotates in a clockwise direction around backbone module $b$ if the westmost module $a$ is side-adjacent to module $b$.
The MRS moves east by repeating these movements until a module of the MRS touches a wall cell.
\begin{figure}[t]
	\centering
	\input{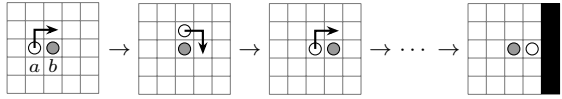}
	\caption{Locomotion by two modules with a global compass}
	\label{fig:locomotion-with-gc}
\end{figure}

The algorithm starts searching for an exit after reaching a wall.
The MRS moves along walls until reaching a corner or an exit, as depicted in Fig.~\ref{fig:move-along-wall-with-gc}.
While an MRS can move in an arbitrary direction (clockwise or counter-clockwise), we assume that an MRS moves so that the wall cells are always toward its left (i.e., the MRS moves clockwise).
\begin{figure}[t]
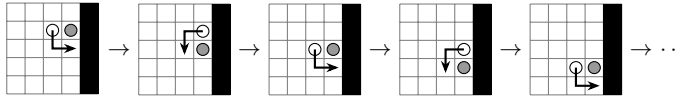

	\centering
	
\begin{tabular}{l}

\begin{viewplot}{5}
\wall{2}{3}{2}{-3}
\backbone{1}{1}
\module{0}{1}
\draw[myarrow] (2.5,3.5) -- (2.5,2.5) -- (3.8,2.5);
\end{viewplot}
\ra%
\begin{viewplot}{5}
\wall{2}{3}{2}{-3}
\module{1}{1}
\backbone{1}{0}
\draw[myarrow] (3.5,3.5) -- (2.5,3.5) -- (2.5,2.2);
\end{viewplot}
\ra%
\begin{viewplot}{5}
\wall{2}{3}{2}{-3}
\module{0}{0}
\backbone{1}{0}
\draw[myarrow] (2.5,2.5) -- (2.5,1.5) -- (3.8,1.5);
\end{viewplot}
\ra%
\begin{viewplot}{5}
\wall{2}{3}{2}{-3}
\backbone{1}{-1}
\module{1}{0}
\draw[myarrow] (3.5,2.5) -- (2.5,2.5) -- (2.5,1.2);
\end{viewplot}
\ra%
\begin{viewplot}{5}
\wall{2}{3}{2}{-3}
\backbone{1}{-1}
\module{0}{-1}
\draw[myarrow] (2.5,1.5) -- (2.5,0.5) -- (3.8,0.5);
\end{viewplot}
\ra\ $\cdots$

\end{tabular}

	\caption{Two modules with a global compass move along a wall}
	\label{fig:move-along-wall-with-gc}
\end{figure}
When an MRS arrives at a corner, it changes direction by rotating $\pi/2$ in a clockwise direction (Fig.~\ref{fig:turn-corner-with-gc}\footnote{Movements after changing direction may differ from those shown in Fig.~\ref{fig:turn-corner-with-gc} due to the different field sizes and exit locations. The figure shows only the case where the width of the interior is larger than three cells (Fig.~\ref{fig:turn-corner-with-gc}(a)) and the case where the width is two cells (Fig.~\ref{fig:turn-corner-with-gc}(b)). We omit the other cases here, but the movements of the modules are mostly the same as in Fig.~\ref{fig:turn-corner-with-gc}}).
After turning, the MRS moves along walls again with the movements shown in Fig.~\ref{fig:move-along-wall-with-gc}.
\begin{figure}[t]
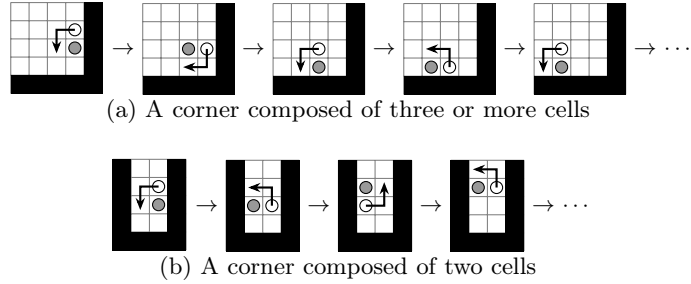

	\centering
	
\begin{tabular}{c}

\begin{tabular}{l}

\begin{viewplot}{5}
\wall{2}{3}{2}{-3}
\wall{-2}{-2}{2}{-2}
\module{1}{1}
\backbone{1}{0}
\draw[myarrow] (3.5,3.5) -- (2.5,3.5) -- (2.5,2.2);
\end{viewplot}
\ra
\begin{viewplot}{5}
\wall{2}{3}{2}{-3}
\wall{-2}{-2}{2}{-2}
\backbone{0}{0}
\module{1}{0}
\draw[myarrow] (3.5,2.5) -- (3.5,1.5) -- (2.2,1.5);
\end{viewplot}
\ra
\begin{viewplot}{5}
\wall{2}{3}{2}{-3}
\wall{-2}{-2}{2}{-2}
\backbone{0}{-1}
\module{0}{0}
\draw[myarrow] (2.5,2.5) -- (1.5,2.5) -- (1.5,1.2);
\end{viewplot}
\ra
\begin{viewplot}{5}
\wall{2}{3}{2}{-3}
\wall{-2}{-2}{2}{-2}
\backbone{-1}{-1}
\module{0}{-1}
\draw[myarrow] (2.5,1.5) -- (2.5,2.5) -- (1.2,2.5);
\end{viewplot}
\ra
\begin{viewplot}{5}
\wall{2}{3}{2}{-3}
\wall{-2}{-2}{2}{-2}
\backbone{-1}{-1}
\module{-1}{0}
\draw[myarrow] (1.5,2.5) -- (0.5,2.5) -- (0.5,1.2);
\end{viewplot}
\ra\ $\cdots$
\end{tabular}
\\
\subcaption{(a) A corner composed of three or more cells}\\
\\
\begin{customviewplot}{4}{5}
\wall{-2}{3}{-2}{-3}
\wall{1}{3}{1}{-3}
\wall{-2}{-2}{1}{-2}
\module{0}{1}
\backbone{0}{0}
\draw[myarrow] (2.5,3.5) -- (1.5,3.5) -- (1.5,2.2);
\end{customviewplot}
\ra
\begin{customviewplot}{4}{5}
\wall{-2}{3}{-2}{-3}
\wall{1}{3}{1}{-3}
\wall{-2}{-2}{1}{-2}
\backbone{-1}{0}
\module{0}{0}
\draw[myarrow] (2.5,2.5) -- (2.5,3.5) -- (1.2,3.5);
\end{customviewplot}
\ra
\begin{customviewplot}{4}{5}
\wall{-2}{3}{-2}{-3}
\wall{1}{3}{1}{-3}
\wall{-2}{-2}{1}{-2}
\module{-1}{0}
\backbone{-1}{1}
\draw[myarrow] (1.5,2.5) -- (2.5,2.5) -- (2.5,3.8);
\end{customviewplot}
\ra
\begin{customviewplot}{4}{5}
\wall{-2}{3}{-2}{-3}
\wall{1}{3}{1}{-3}
\wall{-2}{-2}{1}{-2}
\module{0}{1}
\backbone{-1}{1}
\draw[myarrow] (2.5,3.5) -- (2.5,4.5) -- (1.2,4.5);
\end{customviewplot}
\ra\ $\cdots$
\\
\subcaption{(b) A corner composed of two cells}

\end{tabular}

	\caption{Two modules with a global compass turn around a corner}
	\label{fig:turn-corner-with-gc}
\end{figure}

After reaching a wall, the MRS eventually finds an exit from the interior of a rectangular field by repeating the movements shown in Figs.~\ref{fig:move-along-wall-with-gc} and \ref{fig:turn-corner-with-gc}.
In a rectangular field, only exits have convex shapes; thus, an MRS can easily distinguish them from other wall cells.
The MRS evacuates from a rectangular field with the movements in Fig.~\ref{fig:evacuate-from-exit-with-gc} when it finds an exit.
\begin{figure}[t]
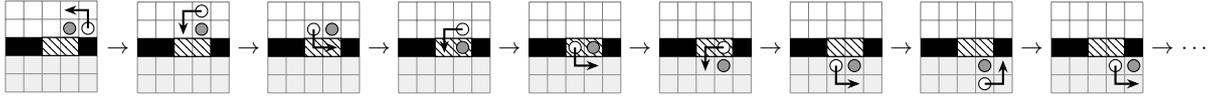

	\centering
	
\begin{tabular}{l}

\begin{viewplot}{5}
\wall{2}{0}{2}{0}
\wall{-2}{0}{-1}{0}
\outsidearea{0}{0}{4}{1}
\exit{0}{0}{1}{0}

\module{2}{1}
\backbone{1}{1}
\draw[myarrow] (4.5,3.5) -- (4.5,4.5) -- (3.2,4.5);
\end{viewplot}
\ra
\begin{viewplot}{5}
\wall{2}{0}{2}{0}
\wall{-2}{0}{-1}{0}
\outsidearea{0}{0}{4}{1}
\exit{0}{0}{1}{0}

\module{1}{2}
\backbone{1}{1}
\draw[myarrow] (3.5,4.5) -- (2.5,4.5) -- (2.5,3.2);
\end{viewplot}
\ra
\begin{viewplot}{5}
\wall{2}{0}{2}{0}
\wall{-2}{0}{-1}{0}
\outsidearea{0}{0}{4}{1}
\exit{0}{0}{1}{0}

\module{0}{1}
\backbone{1}{1}
\draw[myarrow] (2.5,3.5) -- (2.5,2.5) -- (3.8,2.5);
\end{viewplot}
\ra
\begin{viewplot}{5}
\wall{2}{0}{2}{0}
\wall{-2}{0}{-1}{0}
\outsidearea{0}{0}{4}{1}
\exit{0}{0}{1}{0}

\module{1}{1}
\backbone{1}{0}
\draw[myarrow] (3.5,3.5) -- (2.5,3.5) -- (2.5,2.2);
\end{viewplot}
\ra
\begin{viewplot}{5}
\wall{2}{0}{2}{0}
\wall{-2}{0}{-1}{0}
\outsidearea{0}{0}{4}{1}
\exit{0}{0}{1}{0}

\module{0}{0}
\backbone{1}{0}
\draw[myarrow] (2.5,2.5) -- (2.5,1.5) -- (3.8,1.5);
\end{viewplot}
\ra
\begin{viewplot}{5}
\wall{2}{0}{2}{0}
\wall{-2}{0}{-1}{0}
\outsidearea{0}{0}{4}{1}
\exit{0}{0}{1}{0}

\module{1}{0}
\backbone{1}{-1}
\draw[myarrow] (3.5,2.5) -- (2.5,2.5) -- (2.5,1.2);
\end{viewplot}
\ra
\begin{viewplot}{5}
\wall{2}{0}{2}{0}
\wall{-2}{0}{-1}{0}
\outsidearea{0}{0}{4}{1}
\exit{0}{0}{1}{0}

\module{0}{-1}
\backbone{1}{-1}
\draw[myarrow] (2.5,1.5) -- (2.5,0.5) -- (3.8,0.5);
\end{viewplot}
\ra
\begin{viewplot}{5}
\wall{2}{0}{2}{0}
\wall{-2}{0}{-1}{0}
\outsidearea{0}{0}{4}{1}
\exit{0}{0}{1}{0}

\module{1}{-2}
\backbone{1}{-1}
\draw[myarrow] (3.5,0.5) -- (4.5,0.5) -- (4.5,1.8);
\end{viewplot}
\ra
\begin{viewplot}{5}
\wall{2}{0}{2}{0}
\wall{-2}{0}{-1}{0}
\outsidearea{0}{0}{4}{1}
\exit{0}{0}{1}{0}

\backbone{2}{-1}
\module{1}{-1}
\draw[myarrow] (3.5,1.5) -- (3.5,0.5) -- (4.8,0.5);
\end{viewplot}
\ra\ $\cdots$

\end{tabular}

	\caption{Evacuation through an exit by two modules with a global compass}
	\label{fig:evacuate-from-exit-with-gc}
\end{figure}

We have proven that an MRS composed of two modules equipped with a global compass can evacuate from an arbitrary rectangular field.
Therefore, we have proven Theorem \ref{thm:evacuation-with-gc}.
A module in this algorithm requires a visibility range of $5 \times 5$ (i.e., 2-neighborhood).
This visibility range derives from the fact that a module decides its movement based on whether there is a wall cell side-adjacent to its neighbor module.
For instance, if the cell is not a wall cell (e.g., the first state in Fig.~\ref{fig:locomotion-with-gc}), a module rotates $\pi/2$ around its neighbor module in a clockwise direction.
Otherwise, the module must rotate in a counter-clockwise direction (e.g., the first state in Fig.~\ref{fig:move-along-wall-with-gc}).

\subsection{Modules without a global compass}
\label{sec:evacuation-rectangular-wo-global-compass}

In this section, we consider evacuation by modules not equipped with a global compass.
First, we consider evacuation starting from restricted initial states in Section \ref{sec:evacuation-without-gc-restricted-init-states}.
Then, we discuss how we can remove this restriction on initial shapes in Section \ref{sec:evacuation-without-gc-arbitrary-init-states}.
Finally, we show that an MRS composed of only two modules that are not equipped with a global compass can solve the evacuation problem by restricting the MRS's initial position to a cell that is side-adjacent to a wall.

\subsubsection{Starting from restricted initial states}
\label{sec:evacuation-without-gc-restricted-init-states}

We prove the following theorem to show that $n=4$ is a necessary and sufficient condition for evacuation from any rectangular field if the initial shape of an MRS is restricted.
\begin{theorem}
\label{thm:evacuation-without-gc}
Four modules not equipped with a global compass are necessary and sufficient for the metamorphic robotic system to solve the evacuation problem in any rectangular field starting from the allowed initial states.
\end{theorem}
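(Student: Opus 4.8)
The plan is to prove the two directions separately, mirroring the structure of Theorem~\ref{thm:evacuation-with-gc} but substituting the common handedness recorded in the four-module shape for the global directions used there.

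For the necessity direction I would show that three modules cannot evacuate, by a symmetry argument in the spirit of the locomotion lower bound of~\cite{Doi2021}. The adversary chooses a rectangular field with a $180^\circ$ rotational symmetry about a central cell (for instance one of odd width and height with two $180^\circ$-related exits) and starts the MRS as a straight tromino centred on that cell. The middle module is a cut vertex, so the single-backbone rule forbids it to move, and a straight line admits no sliding; hence the only available moves are rotations of the two end modules. Since their views are exact $180^\circ$ rotations of one another, there is an execution, namely the one assigning the two end modules local compasses that are $180^\circ$ rotations of one another, in which they move in rotationally conjugate fashion, carrying the centred horizontal tromino to the centred vertical tromino and back. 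The configuration therefore stays a central straight tromino forever and never translates toward either exit, so evacuation fails. Should a candidate three-module algorithm instead forbid the straight tromino as an initial state, the same trap is reached from an allowed L-tromino by steering it to the centre; equivalently, evacuating from the centre of an arbitrarily large field demands sustained one-directional travel, which is exactly the locomotion task that~\cite{Doi2021} shows three modules cannot sustain. Either way four modules are necessary.

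For the sufficiency direction I would exhibit an explicit $7\times7$-visibility algorithm built on the four-module locomotion routine of~\cite{Doi2021}, which already breaks the initial symmetry from every allowed (non-forbidden) state and translates the MRS in a fixed direction. As in the global-compass case the behaviour splits into \emph{reaching a wall} and \emph{searching for an exit}: the first phase runs locomotion until some module perceives a wall cell, and the second reorients the MRS into a canonical wall-following shape and then trails the wall with the wall always kept on the same side (say the left), the handedness playing the role of the missing directions. I would design the canonical shapes so that the shape alone records the current mode and travel direction, reproduce the along-the-wall, corner-turning, and exit-evacuation gaits of Figs.~\ref{fig:move-along-wall-with-gc}--\ref{fig:evacuate-from-exit-with-gc} in their compass-free form, and exploit the fact that in a rectangular field only an exit presents a convex notch to the follower, so the exit is locally recognizable. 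Correctness is then the statement that every allowed initial state, under every execution consistent with the modules' local compasses, first reaches a wall and then completes the clockwise tour of the boundary until it meets and passes through an exit, with the Connectivity, Single-backbone, and No-interference invariants holding at every step.

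I expect the main obstacle to be the sufficiency construction rather than the lower bound. One must specify a single total transition function on $7\times7$ views that glues locomotion, wall detection, wall following, corner turning, and exit evacuation into one seamless behaviour with only four anonymous, oblivious modules and no common sense of direction. The difficulties are: (i) making the transitions between phases unambiguous from the shape alone, since obliviousness forbids any explicit mode flag; (ii) guaranteeing that the $7\times7$ window is wide enough to tell a plain wall from a corner and from an exit in every orientation; and (iii) certifying the three movement invariants across all phase boundaries and for every execution induced by the modules' differing local compasses. The verification is therefore a finite but extensive case analysis over the reachable four-module shapes, and the bulk of the effort lies in organizing this enumeration and in confirming that the $7\times7$ neighborhood indeed suffices.
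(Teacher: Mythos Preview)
Your plan matches the paper's approach closely: both directions are argued exactly as you outline, with necessity reduced to the locomotion lower bound of~\cite{Doi2021} and sufficiency given by an explicit $7\times7$-visibility algorithm that runs the four-module locomotion routine until a wall is seen and then switches to compass-free wall following, corner turning, and exit passage.

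The one place where your write-up diverges from the paper is the necessity argument. You set up a $180^\circ$-symmetric field with two symmetrically placed exits and run an adversarial-compass argument on the centred tromino. The paper's Lemma~\ref{lem:evacuation-without-gc-impossibility} is more elementary: it simply assumes the field is large enough that no wall is visible from the initial position, and then exhausts the finitely many connected three-module shapes directly. For the straight tromino the two end modules can only rotate, and any common rotation direction cycles horizontal~$\leftrightarrow$~vertical without translating; for the L-tromino every available rotation either produces a straight tromino or another L, and every sliding returns to the start after four steps. No field symmetry, no exit placement, and no adversarial compass assignment are needed, because once the walls are out of view the shape alone determines the cycle. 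Your symmetric-field scaffolding is harmless but unnecessary, and your treatment of the L-tromino (``steer it to the centre'' or fall back on~\cite{Doi2021}) is where the paper instead gives the short explicit enumeration shown in Fig.~\ref{fig:evacuation-wo-gc-impossiblity-l-state}. You also do not mention $n=2$, which the paper dispatches in one line (each module tries to rotate about the other, leaving no backbone). None of this is a gap in correctness, but the paper's version is both shorter and more complete.
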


We prove the necessary part of Theorem \ref{thm:evacuation-without-gc} with the following lemma:
\begin{lemma}
\label{lem:evacuation-without-gc-impossibility}
Consider the metamorphic robotic system $R$ composed of less than four modules that are not equipped with a global compass in a sufficiently large rectangular field.
For any deterministic algorithm $\mathcal{A}$, there exist initial states in which $R$ cannot evacuate from a rectangular field.
\end{lemma}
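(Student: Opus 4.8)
The plan is to prove the lemma by a symmetry argument: since the modules share only a common handedness and no global compass, an adversary can assign their local compasses so that symmetric modules obtain identical views, and then any deterministic algorithm is forced to make symmetric modules act symmetrically. For each $n \in \{2,3\}$ I would exhibit a specific initial shape placed deep inside a sufficiently large rectangular field, together with an adversarial compass assignment, whose $180^\circ$ rotational symmetry is preserved by every step of the execution. Because the center of symmetry is a fixed point of this rotation, the whole MRS stays pinned within a bounded neighborhood of that point and can never reach a wall or an exit.

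Concretely, I would fix the $180^\circ$ rotation $\rho$ about a point $p$ and call a configuration (resp.\ a compass assignment) \emph{$\rho$-symmetric} if it is invariant under $\rho$. The core step is the following invariant: if $C_t$ is $\rho$-symmetric, the compasses are $\rho$-symmetric, and no module of $R$ sees a wall in its view, then there is a valid next configuration $C_{t+1}$ that is again $\rho$-symmetric about the same $p$. The reason is that a module $m$ and its image $\rho(m)$ then have identical views, so the common algorithm prescribes the same relative move, i.e.\ absolute moves that are images of one another under $\rho$; hence the set $M_t$ of moving modules and their trajectories are $\rho$-symmetric, and so is the resulting $C_{t+1}$. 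Iterating, the point $p$ is invariant for all time, and since the field is sufficiently large I may start with $p$ so far from every wall that the bounded MRS never brings a wall into any module's view; the hypothesis of the invariant is therefore maintained forever.

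It then remains to instantiate the two cases and check boundedness. For $n=2$ the only connected shape is a domino, whose center of symmetry $p$ is the midpoint of the shared edge, so no module lies at $p$ and the two modules are swapped by $\rho$. A $\rho$-symmetric step would move both modules simultaneously, leaving $B_t=\emptyset$; but every rotation and every $k$-sliding requires a nonempty connected backbone, so this violates the single-backbone condition. Hence the only valid $\rho$-symmetric step is the one in which neither module moves: the domino is deadlocked and never leaves $p$. For $n=3$ I would take the straight tromino (the L-tromino admits no nontrivial rotation and is excluded by the common handedness). Here $p$ is the center cell, which $\rho$ fixes, and the two end modules are swapped. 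Since $3$ is odd, a $\rho$-symmetric $3$-set must contain the fixed point $p$, and any $\rho$-symmetric connected $3$-set containing $p$ consists of cells $p,\,p\pm v$ with $v$ an axial unit step (a diagonal $v$ would disconnect the set); thus it is again a straight tromino centered at $p$. Every step therefore either leaves the tromino fixed or rotates it by $90^\circ$ about $p$, and in all cases the MRS stays within distance one of $p$. In neither case can $R$ reach a wall, so it cannot evacuate, which establishes the lemma for all $n<4$.

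The step I expect to be the main obstacle is the invariant's claim that a valid $\rho$-symmetric successor exists and that the execution can be kept symmetric. The delicate point is the interaction with the \emph{no interference} and \emph{single backbone} conditions: when the algorithm prescribes symmetric moves, I must verify that either the symmetric joint move is itself valid, or---if some prescribed moves conflict---that any deterministic resolution is forced to treat $m$ and $\rho(m)$ identically, so that the fallback in which no module moves is a valid $\rho$-symmetric step. Pinning down the model's semantics for conflicting prescribed moves, and confirming that the adversary can always realize a symmetric valid execution, is where the real work lies; the geometric bookkeeping that bounds the MRS near $p$ is then routine.
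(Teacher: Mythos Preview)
Your proposal is correct and reaches the same conclusion as the paper, but the route differs in a useful way. The paper proceeds by exhaustive case analysis: for $n=2$ it gives the same deadlock argument you give; for $n=3$ it classifies all connected shapes into line-type and L-type and then enumerates every admissible move from each (rotations and slidings of the end modules), checking that the result is again line-type or L-type confined to a bounded region. You instead isolate a single $\rho$-symmetric initial shape per $n$ (the domino, the straight tromino) and run an abstract invariant: the adversary maintains $\rho$-symmetric local compasses, paired modules receive identical views and make $\rho$-related moves, and the configuration therefore stays $\rho$-symmetric about a fixed center forever. Your argument is cleaner and sidesteps the L-tromino entirely, which is legitimate since the lemma only asks for \emph{some} bad initial state; the paper's L-type analysis is strictly extra work (it effectively proves the stronger statement that every $3$-module shape is stuck, not just the straight one).

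One point to sharpen: for the straight tromino, preservation of $\rho$-symmetry does not follow purely from ``$m$ and $\rho(m)$ get the same view,'' because the center module $p$ is its own $\rho$-image and a deterministic algorithm may well prescribe a nontrivial move for it from its (locally $180^\circ$-symmetric) view. What actually pins $p$ down is the single-backbone constraint: if $p$ moves then the backbone is either empty or the disconnected pair $\{e_1,e_2\}$, so any algorithm that produces valid executions must keep $p$ stationary in this configuration. You flag the backbone/interference interaction as the main obstacle, but this specific observation about the fixed module deserves to be stated explicitly rather than folded into the general ``conflicting prescribed moves'' discussion.
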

\begin{proof}
The proof mostly follows Lemma 5 in \cite{Doi2021}.
Here, we assume that each module cannot observe any wall in its initial state because walls are too far from its initial location.

For $n=2$, as depicted in Fig.~\ref{fig:initial-state-of-two-modules}, there is the case where each module tries to rotate around the other module.
This is impossible because there is no backbone module.
\begin{figure}[t]
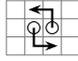

	\centering
	\begin{customviewplot}{4}{3}
	\gmodule{1}{1}
	\gmodule{2}{1}
	
	\draw[myarrow] (1.5,1.5) -- (1.5,0.5) -- (2.8,0.5);
	\draw[myarrow] (2.5,1.5) -- (2.5,2.5) -- (1.2,2.5);
\end{customviewplot}
	\caption{Impossible rotations by two modules without any backbone in an initial state}
	\label{fig:initial-state-of-two-modules}
\end{figure}

For $n=3$, there are initial states from which an MRS cannot move.
We classify these initial states into ``line-type'' and ``L-type'' and briefly describe each.

In the line-type initial state (Fig.~\ref{fig:evacuation-wo-gc-impossiblity-line-state}), the modules of both ends can move only by rotation.
If we assume that these modules rotate in the same direction (e.g., clockwise) to avoid collisions, the subsequent state is also line-type.
While the direction of the subsequent state is different from that of the first one, the only possible movement of the modules is a rotation.
This movement results in the initial line-type state.
\begin{figure}[t]
	\centering
	\begin{viewplot}{5}
\module{1}{0}
\backbone{0}{0}
\module{-1}{0}
\draw[myarrow] (1.5,2.5) -- (1.5,3.5) -- (2.8,3.5);
\draw[myarrow] (3.5,2.5) -- (3.5,1.5) -- (2.2,1.5);
\end{viewplot}
\ra
\begin{viewplot}{5}
\module{0}{1}
\backbone{0}{0}
\module{0}{-1}
\draw[myarrow] (2.5,3.5) -- (3.5,3.5) -- (3.5,2.2);
\draw[myarrow] (2.5,1.5) -- (1.5,1.5) -- (1.5,2.8);
\end{viewplot}
\ra
\begin{viewplot}{5}
\module{1}{0}
\backbone{0}{0}
\module{-1}{0}
\draw[myarrow] (1.5,2.5) -- (1.5,3.5) -- (2.8,3.5);
\draw[myarrow] (3.5,2.5) -- (3.5,1.5) -- (2.2,1.5);
\end{viewplot}
\ra\ $\cdots$
	\caption{Line-type forbidden initial states of three modules without a global compass}
	\label{fig:evacuation-wo-gc-impossiblity-line-state}
\end{figure}

We consider the L-type initial states illustrated in Fig.~\ref{fig:evacuation-wo-gc-impossiblity-l-state}.
In these initial states, only the left and the bottom modules can move.
However, if only one of them rotates around the middle module, the resulting shape is line-type (Figs.~\ref{fig:evacuation-wo-gc-impossiblity-l-state}(a) and (b)); thus, an MRS cannot move as we saw before.
If both modules rotate, the resulting state is L-type (Fig.~\ref{fig:evacuation-wo-gc-impossiblity-l-state}(c)).
In case of sliding, repeating four 1-slidings ends in the initial state, as depicted in Figs.~\ref{fig:evacuation-wo-gc-impossiblity-l-state}(d) and (e).
\begin{figure}[t]
	\centering
	\begin{tabular}{c}

	\begin{tabular}{ccc}

		\begin{viewplot}{5}
			\backbone{0}{0}
			\module{-1}{0}
			\backbone{0}{-1}
			\draw[myarrow] (1.5,2.5) -- (1.5,3.5) -- (2.8,3.5);
		\end{viewplot}
		\ra
		\begin{viewplot}{5}
			\backbone{0}{0}
			\backbone{0}{1}
			\backbone{0}{-1}
		\end{viewplot}
		&

		\begin{viewplot}{5}
			\backbone{0}{0}
			\backbone{-1}{0}
			\module{0}{-1}
			\draw[myarrow] (2.5,1.5) -- (3.5,1.5) -- (3.5,2.8);
		\end{viewplot}
		\ra
		\begin{viewplot}{5}
			\backbone{0}{0}
			\backbone{1}{0}
			\backbone{-1}{0}
		\end{viewplot}

		&

		\begin{viewplot}{5}
			\backbone{0}{0}
			\module{-1}{0}
			\module{0}{-1}
			\draw[myarrow] (1.5,2.5) -- (1.5,3.5) -- (2.8,3.5);
			\draw[myarrow] (2.5,1.5) -- (3.5,1.5) -- (3.5,2.8);
		\end{viewplot}
		\ra
		\begin{viewplot}{5}
			\backbone{0}{0}
			\backbone{1}{0}
			\backbone{0}{1}
		\end{viewplot}

		\\
		\subcaption{(a) Rotation 1} & \subcaption{(b) Rotation 2} & \subcaption{(c) Rotation 3} \\

	\end{tabular}
	\\ \\

	\begin{tabular}{c}

		\begin{viewplot}{4}
			\backbone{0}{0}
			\module{-1}{0}
			\backbone{0}{-1}
			\draw[myarrow] (1.5,2.5) -- (1.5,1.2);
		\end{viewplot}
		\ra
		\begin{viewplot}{4}
			\module{0}{0}
			\backbone{-1}{-1}
			\backbone{0}{-1}
			\draw[myarrow] (2.5,2.5) -- (1.2,2.5);
		\end{viewplot}
		\ra
		\begin{viewplot}{4}
			\backbone{-1}{-1}
			\backbone{-1}{0}
			\module{0}{-1}
			\draw[myarrow] (2.5,1.5) -- (2.5,2.8);
		\end{viewplot}
		\ra
		\begin{viewplot}{4}
			\module{-1}{-1}
			\backbone{-1}{0}
			\backbone{0}{0}
			\draw[myarrow] (1.5,1.5) -- (2.8,1.5);
		\end{viewplot}
		\ra
		\begin{viewplot}{4}
			\backbone{0}{0}
			\module{-1}{0}
			\backbone{0}{-1}
			\draw[myarrow] (1.5,2.5) -- (1.5,1.2);
		\end{viewplot}
		\ra\ $\cdots$
		\\
		\subcaption{(d) Sliding 1}
		\\ \\

		\begin{viewplot}{4}
			\backbone{0}{0}
			\backbone{-1}{0}
			\module{0}{-1}
			\draw[myarrow] (2.5,1.5) -- (1.2,1.5);
		\end{viewplot}
		\ra
		\begin{viewplot}{4}
			\module{0}{0}
			\backbone{-1}{-1}
			\backbone{-1}{0}
			\draw[myarrow] (2.5,2.5) -- (2.5,1.2);
		\end{viewplot}
		\ra
		\begin{viewplot}{4}
			\backbone{-1}{-1}
			\module{-1}{0}
			\backbone{0}{-1}
			\draw[myarrow] (1.5,2.5) -- (2.8,2.5);
		\end{viewplot}
		\ra
		\begin{viewplot}{4}
			\module{-1}{-1}
			\backbone{0}{-1}
			\backbone{0}{0}
			\draw[myarrow] (1.5,1.5) -- (1.5,2.8);
		\end{viewplot}
		\ra
		\begin{viewplot}{4}
			\backbone{0}{0}
			\backbone{-1}{0}
			\module{0}{-1}
			\draw[myarrow] (2.5,1.5) -- (1.2,1.5);
		\end{viewplot}
		\ra\ $\cdots$

		\\
		\subcaption{(e) Sliding 2} \\
	\end{tabular}

\end{tabular}

	\caption{L-type forbidden initial states of three modules without a global compass}
	\label{fig:evacuation-wo-gc-impossiblity-l-state}
\end{figure}

As observed above, there exist initial states for $n < 4$ where an MRS cannot move.
Because an MRS cannot find an exit from these initial states, it is impossible for it to evacuate from a rectangular field.
\end{proof}

Then, we show an evacuation algorithm for an MRS composed of four modules not equipped with a global compass.
This algorithm is not self-stabilizing because there are the initial shapes from which it cannot solve the evacuation problem.
\begin{lemma}
\label{lem:evacuation-without-gc-possibility}
Four modules not equipped with a global compass are sufficient for a metamorphic robotic system to solve the evacuation problem in any given rectangular field starting from any initial state from which evacuation is possible.
\end{lemma}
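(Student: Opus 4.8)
The plan is to reuse the two-phase structure of the global-compass algorithm behind Theorem~\ref{thm:evacuation-with-gc}: the MRS first performs \emph{locomotion} in a single self-chosen direction until it reaches a wall, and then \emph{searches} for an exit by trailing the walls until it can cross one. The crucial difference is that, without a global compass, the MRS cannot name a fixed direction; it must instead encode both its current direction of travel and its current mode of operation entirely in its own local shape, using only the shared clockwise handedness to orient itself. Accordingly, the ``allowed initial states'' of the statement are the four-module shapes from which at least one module can move --- equivalently, every shape except the fully symmetric ones on which the compass-free locomotion building block of \cite{Doi2021} is undefined (the four forbidden states). From a forbidden state no module can move, so evacuation is impossible and the claim is vacuous there; from every other four-module shape I must exhibit a deterministic algorithm, which also explains why this algorithm, unlike the one in Theorem~\ref{thm:evacuation-with-gc}, is not self-stabilizing.

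For Phase~1 I would invoke the compass-free four-module locomotion algorithm of Doi et al.~\cite{Doi2021} as a black box. From any allowed initial state it breaks the MRS's initial symmetry using the common handedness, fixes a direction of motion, and thereafter advances the MRS one step at a time in that direction. Because the interior of a rectangular field is bounded, after finitely many steps some module's $7 \times 7$ view contains a wall cell blocking further progress; at that moment the shape is arranged to witness the event, triggering the transition to Phase~2. The only correctness obligation here is that locomotion terminates against a wall, which follows from boundedness together with the guarantee of \cite{Doi2021} that the MRS keeps moving in one direction.

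For Phase~2 I would design a wall-trailing routine analogous to Figs.~\ref{fig:move-along-wall-with-gc}--\ref{fig:evacuate-from-exit-with-gc}, with the global compass replaced by the shared handedness: the MRS keeps the wall on a fixed side (say its left) and advances clockwise along it. Straight stretches are covered by the wall-following moves, convex corners by the turning moves, and an exit --- which in a rectangular field is the only convex opening along an otherwise straight-or-cornered wall, and hence is distinguishable in the $7 \times 7$ view --- is crossed by the evacuation maneuver. Since the wall perimeter of a rectangular interior is a single simple cycle containing at least one exit of width $\ge 2$ (Lemma~\ref{lem:exit-size}), trailing it clockwise reaches an exit after finitely many steps, and all four modules then leave the interior.

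The hard part will be verifying that the whole algorithm, regarded as a single total function on $7 \times 7$ views, is consistent and unambiguous across all of these regimes in the absence of a global compass. Concretely I must check: (i) that the shapes encoding the modes ``in locomotion'', ``reaching a wall'', ``following a wall'', ``turning a corner'', and ``passing an exit'' are pairwise distinguishable within a $7 \times 7$ window, so that no view is assigned two conflicting moves and the active mode is recovered purely from shape plus handedness; (ii) that every prescribed move preserves \textbf{Connectivity}, \textbf{Single backbone}, and \textbf{No interference}, so that no accidental symmetry ever splits the backbone or stalls the system; and (iii) that the special cases of narrow fields and short corners --- widths or heights near the minimum value~$2$, analogous to the variant treated in Fig.~\ref{fig:turn-corner-with-gc}(b) --- are all covered. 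This reduces to a finite but delicate case analysis over the relevant four-module views; once it is complete, the two phases compose to evacuate the MRS from every allowed initial state, proving the lemma.
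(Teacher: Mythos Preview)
Your proposal is correct and follows essentially the same approach as the paper: a two-phase algorithm that first uses the compass-free four-module locomotion of \cite{Doi2021} (the paper phrases it as a ``slight modification'' of the five-module version) to reach a wall, and then trails the walls clockwise via handedness-based move rules---straight-wall following, corner turning, and exit crossing---with a $7\times 7$ visibility range. The paper discharges the case analysis you describe in (i)--(iii) pictorially through explicit move sequences (its Figs.~12--15) rather than by an abstract consistency argument, but the content is the same.
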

\noindent As shown in the proof of Lemma 5 in \cite{Doi2021}, for an MRS composed of four modules not equipped with a global compass, there are initial states from which an MRS cannot break the symmetry (Fig.~\ref{fig:evacuation-without-compass-forbidden-states}).
Therefore, we exclude these initial states from the proposed algorithm and consider only the remaining initial states (Fig.~\ref{fig:evacuation-without-compass-allowed-states}).
\begin{figure}[t]
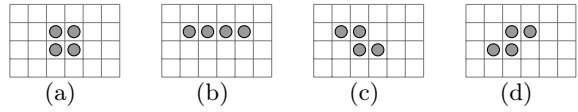

	\centering
	\begin{tabular}{cccc}
\begin{customviewplot}{6}{4}
\backbone{0}{0}
\backbone{-1}{0}
\backbone{0}{-1}
\backbone{-1}{-1}
\end{customviewplot}
&
\begin{customviewplot}{6}{4}
\backbone{0}{0}
\backbone{1}{0}
\backbone{-1}{0}
\backbone{-2}{0}
\end{customviewplot}
&
\begin{customviewplot}{6}{4}
\backbone{-1}{0}
\backbone{-2}{0}
\backbone{-1}{-1}
\backbone{0}{-1}
\end{customviewplot}
&
\begin{customviewplot}{6}{4}
\backbone{-1}{0}
\backbone{-2}{-1}
\backbone{-1}{-1}
\backbone{0}{0}
\end{customviewplot}
\\
\subcaption{(a)} & \subcaption{(b)} & \subcaption{(c)} & \subcaption{(d)} \\

\end{tabular}

	\caption{Forbidden initial states of four modules without a global compass}
	\label{fig:evacuation-without-compass-forbidden-states}
\end{figure}
\begin{figure}[t]
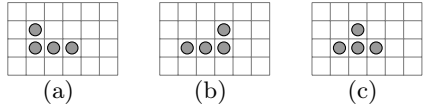

	\centering
	\begin{tabular}{ccc}
\begin{customviewplot}{6}{4}
\gbackbone{1}{2}
\gbackbone{1}{1}
\gbackbone{2}{1}
\gbackbone{3}{1}
\end{customviewplot}
&
\begin{customviewplot}{6}{4}
\gbackbone{3}{2}
\gbackbone{1}{1}
\gbackbone{2}{1}
\gbackbone{3}{1}
\end{customviewplot}
&
\begin{customviewplot}{6}{4}
\gbackbone{2}{2}
\gbackbone{1}{1}
\gbackbone{2}{1}
\gbackbone{3}{1}
\end{customviewplot}
\\
\subcaption{(a)} & \subcaption{(b)} & \subcaption{(c)} \\

\end{tabular}

	\caption{Allowed initial states of four modules without a global compass}
	\label{fig:evacuation-without-compass-allowed-states}
\end{figure}

As in Section \ref{sec:evacuation-rectangular-with-global-compass}, we build the proposed algorithm with two parts: reaching a wall and searching for an exit.
For reaching a wall, we use the locomotion algorithm for five modules proposed in \cite{Doi2021} with a slight modification for four modules (Fig.~\ref{fig:evacuation-wo-gc-one-way-move}).
\begin{figure}[t]
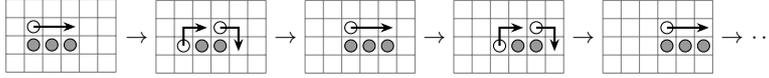

	\centering
	\begin{tabular}{l}

\begin{customviewplot}{6}{4}
\module{-2}{0}
\backbone{0}{-1}
\backbone{-1}{-1}
\backbone{-2}{-1}
\draw[myarrow] (1.5,2.5) -- (3.8,2.5);
\end{customviewplot}
\ra
\begin{customviewplot}{6}{4}
\module{0}{0}
\backbone{0}{-1}
\backbone{-1}{-1}
\module{-2}{-1}
\draw[myarrow] (1.5,1.5) -- (1.5,2.5) -- (2.8,2.5);
\draw[myarrow] (3.5,2.5) -- (4.5,2.5) -- (4.5,1.2);
\end{customviewplot}
\ra
\begin{customviewplot}{6}{4}
\module{-1}{0}
\backbone{1}{-1}
\backbone{0}{-1}
\backbone{-1}{-1}
\draw[myarrow] (2.5,2.5) -- (4.8,2.5);
\end{customviewplot}
\ra
\begin{customviewplot}{6}{4}
\module{1}{0}
\backbone{1}{-1}
\backbone{0}{-1}
\module{-1}{-1}
\draw[myarrow] (2.5,1.5) -- (2.5,2.5) -- (3.8,2.5);
\draw[myarrow] (4.5,2.5) -- (5.5,2.5) -- (5.5,1.2);
\end{customviewplot}
\ra
\begin{customviewplot}{6}{4}
\backbone{2}{-1}
\backbone{1}{-1}
\backbone{0}{-1}
\module{0}{0}
\draw[myarrow] (3.5,2.5) -- (5.8,2.5);
\end{customviewplot}
\ra\ $\cdots$

\end{tabular}

	\caption{Locomotion by four modules without a global compass}
	\label{fig:evacuation-wo-gc-one-way-move}
\end{figure}
After an MRS reaches a wall, the algorithm switches to searching for an exit.
The MRS changes its direction of movement as shown in Fig.~\ref{fig:evacuation-without-gc-move-along-wall} and moves along walls.
\begin{figure}[t]
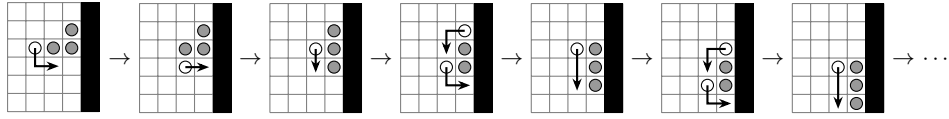

	\centering
	\begin{tabular}{l}

\begin{customviewplot}{5}{6}
\wall{2}{-3}{2}{2}
\backbone{0}{0}
\backbone{1}{0}
\backbone{1}{1}
\module{-1}{0}
\draw[myarrow] (1.5,3.5) -- (1.5,2.5) -- (2.8,2.5);
\end{customviewplot}
\ra
\begin{customviewplot}{5}{6}
\wall{2}{-3}{2}{2}
\backbone{0}{0}
\backbone{1}{1}
\backbone{1}{0}
\module{0}{-1}
\draw[myarrow] (2.5,2.5) -- (3.8,2.5);
\end{customviewplot}
\ra
\begin{customviewplot}{5}{6}
\wall{2}{-3}{2}{2}
\backbone{1}{-1}
\backbone{1}{1}
\backbone{1}{0}
\module{0}{0}
\draw[myarrow] (2.5,3.5) -- (2.5,2.2);
\end{customviewplot}
\ra
\begin{customviewplot}{5}{6}
\wall{2}{-3}{2}{2}
\backbone{1}{-1}
\backbone{1}{0}
\module{1}{1}
\module{0}{-1}
\draw[myarrow] (2.5,2.5) -- (2.5,1.5) -- (3.8, 1.5);
\draw[myarrow] (3.5,4.5) -- (2.5,4.5) -- (2.5, 3.2);
\end{customviewplot}
\ra
\begin{customviewplot}{5}{6}
\wall{2}{-3}{2}{2}
\backbone{1}{-1}
\backbone{1}{0}
\backbone{1}{-2}
\module{0}{0}
\draw[myarrow] (2.5,3.5) -- (2.5,1.2);
\end{customviewplot}
\ra
\begin{customviewplot}{5}{6}
\wall{2}{-3}{2}{2}
\backbone{1}{-1}
\backbone{1}{-2}
\module{0}{-2}
\module{1}{0}
\draw[myarrow] (2.5,1.5) -- (2.5,0.5) -- (3.8, 0.5);
\draw[myarrow] (3.5,3.5) -- (2.5,3.5) -- (2.5, 2.2);
\end{customviewplot}
\ra
\begin{customviewplot}{5}{6}
\wall{2}{-3}{2}{2}
\backbone{1}{-2}
\backbone{1}{-1}
\backbone{1}{-3}
\module{0}{-1}
\draw[myarrow] (2.5,2.5) -- (2.5,0.2);
\end{customviewplot}
\ra\ $\cdots$
\end{tabular}

	\caption{Four modules without a global compass move along walls}
	\label{fig:evacuation-without-gc-move-along-wall}
\end{figure}
When the MRS reaches a corner, it changes its direction by $\pi/2$ clockwise (Fig.~\ref{fig:evacuation-without-gc-turn-corner}).
\begin{figure}[t]
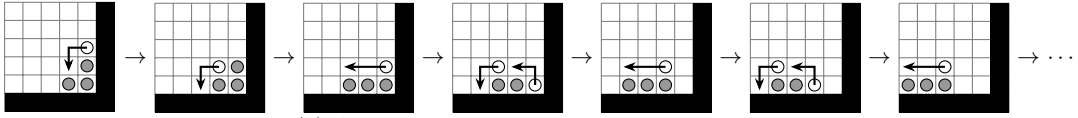
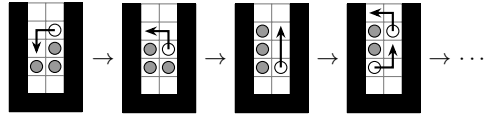

	\centering
	\begin{tabular}{c}

\begin{tabular}{l}

\begin{viewplot}{6}
\wall{2}{-3}{2}{2}
\wall{-3}{-3}{2}{-3}
\backbone{1}{-1}
\backbone{0}{-2}
\backbone{1}{-2}
\module{1}{0}
\draw[myarrow] (4.5,3.5) -- (3.5,3.5) -- (3.5,2.2);
\end{viewplot}
\ra
\begin{viewplot}{6}
\wall{2}{-3}{2}{2}
\wall{-3}{-3}{2}{-3}
\backbone{1}{-1}
\backbone{0}{-2}
\backbone{1}{-2}
\module{0}{-1}
\draw[myarrow] (3.5,2.5) -- (2.5,2.5) -- (2.5,1.2);
\end{viewplot}
\ra
\begin{viewplot}{6}
\wall{2}{-3}{2}{2}
\wall{-3}{-3}{2}{-3}
\backbone{-1}{-2}
\backbone{0}{-2}
\backbone{1}{-2}
\module{1}{-1}
\draw[myarrow] (4.5,2.5) -- (2.2,2.5);
\end{viewplot}
\ra
\begin{viewplot}{6}
\wall{2}{-3}{2}{2}
\wall{-3}{-3}{2}{-3}
\backbone{-1}{-2}
\backbone{0}{-2}
\module{-1}{-1}
\module{1}{-2}
\draw[myarrow] (2.5,2.5) -- (1.5,2.5) -- (1.5,1.2);
\draw[myarrow] (4.5,1.5) -- (4.5,2.5) -- (3.2,2.5);
\end{viewplot}
\ra
\begin{viewplot}{6}
\wall{2}{-3}{2}{2}
\wall{-3}{-3}{2}{-3}
\backbone{-1}{-2}
\backbone{-2}{-2}
\backbone{0}{-2}
\module{0}{-1}
\draw[myarrow] (3.5,2.5) -- (1.2,2.5);
\end{viewplot}
\ra
\begin{viewplot}{6}
\wall{2}{-3}{2}{2}
\wall{-3}{-3}{2}{-3}
\backbone{-2}{-2}
\backbone{-1}{-2}
\module{-2}{-1}
\module{0}{-2}
\draw[myarrow] (1.5,2.5) -- (0.5,2.5) -- (0.5,1.2);
\draw[myarrow] (3.5,1.5) -- (3.5,2.5) -- (2.2,2.5);
\end{viewplot}
\ra
\begin{viewplot}{6}
\wall{2}{-3}{2}{2}
\wall{-3}{-3}{2}{-3}
\backbone{-2}{-2}
\backbone{-3}{-2}
\backbone{-1}{-2}
\module{-1}{-1}
\draw[myarrow] (2.5,2.5) -- (0.2,2.5);
\end{viewplot}
\ra\ $\cdots$
\end{tabular}
\\

\subcaption{(a) A corner composed of three or more cells}\\

\\
\begin{tabular}{l}

\begin{customviewplot}{4}{6}
\gwall{0}{0}{0}{5}
\gwall{3}{0}{3}{5}
\gwall{1}{0}{2}{0}

\gbackbone{1}{2}
\gbackbone{2}{2}
\gbackbone{2}{3}
\gmodule{2}{4}

\draw[myarrow] (2.5,4.5) -- (1.5,4.5) -- (1.5,3.2);
\end{customviewplot}
\ra
\begin{customviewplot}{4}{6}
\gwall{0}{0}{0}{5}
\gwall{3}{0}{3}{5}
\gwall{1}{0}{2}{0}

\gbackbone{1}{2}
\gbackbone{2}{2}
\gbackbone{1}{3}
\gmodule{2}{3}

\draw[myarrow] (2.5,3.5) -- (2.5,4.5) -- (1.2,4.5);
\end{customviewplot}
\ra
\begin{customviewplot}{4}{6}
\gwall{0}{0}{0}{5}
\gwall{3}{0}{3}{5}
\gwall{1}{0}{2}{0}

\gbackbone{1}{2}
\gmodule{2}{2}
\gbackbone{1}{3}
\gbackbone{1}{4}

\draw[myarrow] (2.5,2.5) -- (2.5,4.8);
\end{customviewplot}
\ra
\begin{customviewplot}{4}{6}
\gwall{0}{0}{0}{5}
\gwall{3}{0}{3}{5}
\gwall{1}{0}{2}{0}

\gmodule{1}{2}
\gmodule{2}{4}
\gbackbone{1}{3}
\gbackbone{1}{4}

\draw[myarrow] (1.5,2.5) -- (2.5,2.5) -- (2.5,3.8);
\draw[myarrow] (2.5,4.5) -- (2.5,5.5) -- (1.2,5.5);
\end{customviewplot}
\ra\ $\cdots$
\end{tabular}
\\

\subcaption{(b) A corner composed of two cells}\\

\end{tabular}

	\caption{Four modules without a global compass turning a corner}
	\label{fig:evacuation-without-gc-turn-corner}
\end{figure}
When the MRS finds an exit, it evacuates from the rectangular field (Fig.~\ref{fig:evacuation-without-gc-exit-move}).
As we describe later, only in the case where we restrict initial shapes, we can stop an MRS from moving after evacuation.
\begin{figure}[t]
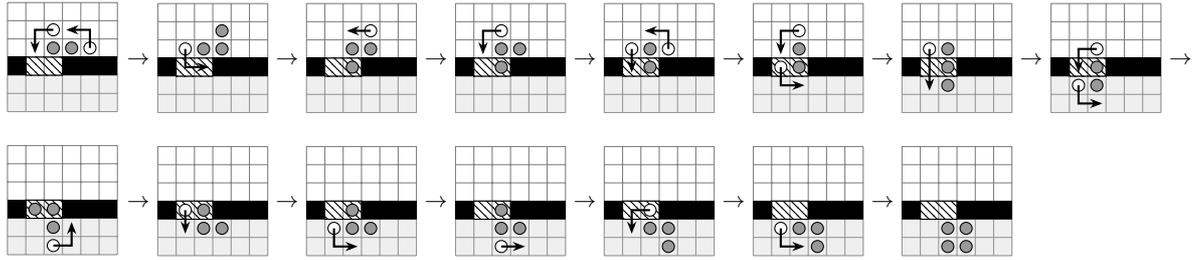

	\centering
	\begin{tabular}{l}

\begin{viewplot}{6}
\wall{-3}{-1}{-3}{-1}
\wall{0}{-1}{2}{-1}
\outsidearea{0}{0}{5}{1}
\gexit{1}{2}{2}{2}

\backbone{-1}{0}
\backbone{0}{0}
\module{-1}{1}
\module{1}{0}
\draw[myarrow] (2.5,4.5) -- (1.5,4.5) -- (1.5,3.2);
\draw[myarrow] (4.5,3.5) -- (4.5,4.5) -- (3.2,4.5);
\end{viewplot}
\ra
\begin{viewplot}{6}
\wall{-3}{-1}{-3}{-1}
\wall{0}{-1}{2}{-1}
\outsidearea{0}{0}{5}{1}
\gexit{1}{2}{2}{2}

\backbone{-1}{0}
\backbone{0}{0}
\backbone{0}{1}
\module{-2}{0}
\draw[myarrow] (1.5,3.5) -- (1.5,2.5) -- (2.8,2.5);
\end{viewplot}
\ra
\begin{viewplot}{6}
\wall{-3}{-1}{-3}{-1}
\wall{0}{-1}{2}{-1}
\outsidearea{0}{0}{5}{1}
\gexit{1}{2}{2}{2}

\backbone{-1}{0}
\backbone{0}{0}
\backbone{-1}{-1}
\module{0}{1}
\draw[myarrow] (3.5,4.5) -- (2.2,4.5);
\end{viewplot}
\ra
\begin{viewplot}{6}
\wall{-3}{-1}{-3}{-1}
\wall{0}{-1}{2}{-1}
\outsidearea{0}{0}{5}{1}
\gexit{1}{2}{2}{2}

\backbone{-1}{0}
\backbone{0}{0}
\backbone{-1}{-1}
\module{-1}{1}
\draw[myarrow] (2.5,4.5) -- (1.5,4.5) -- (1.5,3.2);
\end{viewplot}
\ra
\begin{viewplot}{6}
\wall{-3}{-1}{-3}{-1}
\wall{0}{-1}{2}{-1}
\outsidearea{0}{0}{5}{1}
\gexit{1}{2}{2}{2}

\backbone{-1}{0}
\backbone{-1}{-1}
\module{0}{0}
\module{-2}{0}
\draw[myarrow] (1.5,3.5) -- (1.5,2.2);
\draw[myarrow] (3.5,3.5) -- (3.5,4.5) -- (2.2,4.5);
\end{viewplot}
\ra
\begin{viewplot}{6}
\wall{-3}{-1}{-3}{-1}
\wall{0}{-1}{2}{-1}
\outsidearea{0}{0}{5}{1}
\gexit{1}{2}{2}{2}

\backbone{-1}{0}
\backbone{-1}{-1}
\module{-2}{-1}
\module{-1}{1}
\draw[myarrow] (1.5,2.5) -- (1.5,1.5) -- (2.8,1.5);
\draw[myarrow] (2.5,4.5) -- (1.5,4.5) -- (1.5,3.2);
\end{viewplot}
\ra
\begin{viewplot}{6}
\wall{-3}{-1}{-3}{-1}
\wall{0}{-1}{2}{-1}
\outsidearea{0}{0}{5}{1}
\gexit{1}{2}{2}{2}

\backbone{-1}{0}
\backbone{-1}{-2}
\backbone{-1}{-1}
\module{-2}{0}
\draw[myarrow] (1.5,3.5) -- (1.5,1.2);
\end{viewplot}
\ra
\begin{viewplot}{6}
\wall{-3}{-1}{-3}{-1}
\wall{0}{-1}{2}{-1}
\outsidearea{0}{0}{5}{1}
\gexit{1}{2}{2}{2}

\backbone{-1}{-2}
\backbone{-1}{-1}
\module{-1}{0}
\module{-2}{-2}
\draw[myarrow] (1.5,1.5) -- (1.5,0.5) -- (2.8,0.5);
\draw[myarrow] (2.5,3.5) -- (1.5,3.5) -- (1.5,2.2);
\end{viewplot}
\ra
\\ \\
\begin{viewplot}{6}
\wall{-3}{-1}{-3}{-1}
\wall{0}{-1}{2}{-1}
\outsidearea{0}{0}{5}{1}
\gexit{1}{2}{2}{2}

\backbone{-1}{-2}
\backbone{-1}{-1}
\backbone{-2}{-1}
\module{-1}{-3}
\draw[myarrow] (2.5,0.5) -- (3.5,0.5) -- (3.5,1.8);
\end{viewplot}
\ra
\begin{viewplot}{6}
\wall{-3}{-1}{-3}{-1}
\wall{0}{-1}{2}{-1}
\outsidearea{0}{0}{5}{1}
\gexit{1}{2}{2}{2}

\backbone{0}{-2}
\backbone{-1}{-2}
\backbone{-1}{-1}
\module{-2}{-1}
\draw[myarrow] (1.5,2.5) -- (1.5,1.2);
\end{viewplot}
\ra
\begin{viewplot}{6}
\wall{-3}{-1}{-3}{-1}
\wall{0}{-1}{2}{-1}
\outsidearea{0}{0}{5}{1}
\gexit{1}{2}{2}{2}

\backbone{0}{-2}
\backbone{-1}{-2}
\backbone{-1}{-1}
\module{-2}{-2}
\draw[myarrow] (1.5,1.5) -- (1.5,0.5) -- (2.8,0.5);
\end{viewplot}
\ra
\begin{viewplot}{6}
\wall{-3}{-1}{-3}{-1}
\wall{0}{-1}{2}{-1}
\outsidearea{0}{0}{5}{1}
\gexit{1}{2}{2}{2}

\backbone{0}{-2}
\backbone{-1}{-2}
\backbone{-1}{-1}
\module{-1}{-3}
\draw[myarrow] (2.5,0.5) -- (3.8,0.5);
\end{viewplot}
\ra
\begin{viewplot}{6}
\wall{-3}{-1}{-3}{-1}
\wall{0}{-1}{2}{-1}
\outsidearea{0}{0}{5}{1}
\gexit{1}{2}{2}{2}

\backbone{0}{-2}
\backbone{-1}{-2}
\module{-1}{-1}
\backbone{0}{-3}
\draw[myarrow] (2.5,2.5) -- (1.5,2.5) -- (1.5,1.2);
\end{viewplot}
\ra
\begin{viewplot}{6}
\wall{-3}{-1}{-3}{-1}
\wall{0}{-1}{2}{-1}
\outsidearea{0}{0}{5}{1}
\gexit{1}{2}{2}{2}

\backbone{0}{-2}
\backbone{-1}{-2}
\module{-2}{-2}
\backbone{0}{-3}
\draw[myarrow] (1.5,1.5) -- (1.5,0.5) -- (2.8,0.5);
\end{viewplot}
\ra
\begin{viewplot}{6}
\wall{-3}{-1}{-3}{-1}
\wall{0}{-1}{2}{-1}
\outsidearea{0}{0}{5}{1}
\gexit{1}{2}{2}{2}

\backbone{0}{-2}
\backbone{-1}{-2}
\backbone{-1}{-3}
\backbone{0}{-3}
\end{viewplot}

\end{tabular}

	\caption{Four modules without a global compass evacuating through an exit}
	\label{fig:evacuation-without-gc-exit-move}
\end{figure}

We have thus, proved that an MRS composed of four modules that are not equipped with a global compass can evacuate from any rectangular field if the initial shape of an MRS is restricted; thus, Theorem \ref{thm:evacuation-without-gc} was proven.
This algorithm requires a visibility range of $7 \times 7$ (i.e., 3-neighborhood) for the same reason described in Section \ref{sec:evacuation-rectangular-with-global-compass}.

\subsubsection{Starting from arbitrary initial states}
\label{sec:evacuation-without-gc-arbitrary-init-states}

The limitation on the initial states in Fig.~\ref{fig:evacuation-without-compass-forbidden-states} follows from the fact that an MRS cannot move to a wall from these initial states.
The locomotion algorithm for seven or more modules \cite{Doi2021} illustrated in Fig.~\ref{fig:locomotion-with-seven-modules} can resolve this limitation as stated in the following lemma:
\begin{figure}[t]
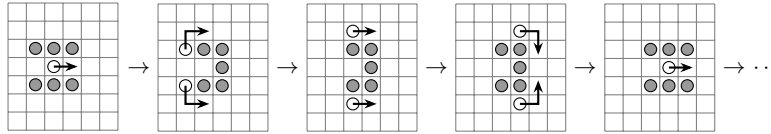

	\centering
	\begin{tabular}{l}

\begin{customviewplot}{6}{7}
	\gbackbone{3}{2}
	\gbackbone{1}{2}
	\gbackbone{2}{2}

	\gbackbone{3}{4}
	\gbackbone{1}{4}
	\gbackbone{2}{4}

	\gmodule{2}{3}

	\draw[myarrow] (2.5,3.5) -- (3.8,3.5);
\end{customviewplot}
\ra
\begin{customviewplot}{6}{7}
	\gmodule{1}{2}
	\gbackbone{2}{2}
	\gbackbone{3}{2}

	\gmodule{1}{4}
	\gbackbone{2}{4}
	\gbackbone{3}{4}

	\gbackbone{3}{3}

	\draw[myarrow] (1.5,2.5) -- (1.5,1.5) -- (2.8,1.5);
	\draw[myarrow] (1.5,4.5) -- (1.5,5.5) -- (2.8,5.5);
\end{customviewplot}
\ra
\begin{customviewplot}{6}{7}
	\gmodule{2}{1}
	\gbackbone{2}{2}
	\gbackbone{3}{2}

	\gmodule{2}{5}
	\gbackbone{2}{4}
	\gbackbone{3}{4}

	\gbackbone{3}{3}

	\draw[myarrow] (2.5,1.5) -- (3.8,1.5);
	\draw[myarrow] (2.5,5.5) -- (3.8,5.5);
\end{customviewplot}
\ra
\begin{customviewplot}{6}{7}
	\gmodule{3}{1}
	\gbackbone{2}{2}
	\gbackbone{3}{2}

	\gmodule{3}{5}
	\gbackbone{2}{4}
	\gbackbone{3}{4}

	\gbackbone{3}{3}

	\draw[myarrow] (3.5,1.5) -- (4.5,1.5) -- (4.5,2.8);
	\draw[myarrow] (3.5,5.5) -- (4.5,5.5) -- (4.5,4.2);
\end{customviewplot}
\ra
\begin{customviewplot}{6}{7}
	\gbackbone{4}{2}
	\gbackbone{2}{2}
	\gbackbone{3}{2}

	\gbackbone{4}{4}
	\gbackbone{2}{4}
	\gbackbone{3}{4}

	\gmodule{3}{3}

	\draw[myarrow] (3.5,3.5) -- (4.8,3.5);
\end{customviewplot}
\ra\ $\cdots$

\end{tabular}

	\caption{Locomotion by seven modules without a global compass}
	\label{fig:locomotion-with-seven-modules}
\end{figure}
\begin{lemma}[from \cite{Doi2021}]
Assuming a common handedness among the modules, seven modules not equipped with a global compass are necessary and sufficient for the metamorphic robotic system to perform locomotion from an arbitrary initial state.
\end{lemma}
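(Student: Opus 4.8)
The plan is to reproduce the argument of Doi et al.~\cite{Doi2021} in two halves, and both halves turn on the same structural fact: since the modules share only a common handedness and no global compass, the transformations an adversary can impose on a configuration are the grid rotations by multiples of $\pi/2$ (orientation-preserving), but not reflections. I would therefore set up a \emph{symmetry-preservation principle} first: if a configuration $C$ is invariant under a rotation $\rho$ of order $2$ or $4$ about some point $p$, then the adversary may assign the modules' local compasses so that modules in a common $\rho$-orbit have identical views, whence a deterministic algorithm forces every module in an orbit to make the $\rho$-image of the same move; consequently the successor configuration is again $\rho$-invariant about the \emph{same} center $p$, and this invariance persists for all time. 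This principle is what converts ``symmetric start'' into ``permanently stuck'' and is the engine of the lower bound.

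For sufficiency (that $7$ modules suffice from an arbitrary shape) I would exhibit and verify the explicit gait of Fig.~\ref{fig:locomotion-with-seven-modules}: the canonical locomotion shape is the two parallel rows of three modules joined by a single middle module, and the repeating cycle of rotations and slidings shown there advances this formation one column per period. The routine part is checking that each step of the cycle respects \textbf{Connectivity}, \textbf{Single backbone}, and \textbf{No interference}, and that the net displacement per period is nonzero in a fixed direction; this is a finite, diagram-by-diagram verification. The substantive part is \emph{reachability}: from an arbitrary initial shape the seven modules must first reconfigure into the canonical shape and commit to a heading using only their common handedness. Here one argues that $7$ modules can always create an asymmetric local feature that serves as a consistent reference for orienting the formation.

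For necessity (that $n\le 6$ fails from some initial shape) I would exhibit, for each $n\in\{2,\dots,6\}$, a connected configuration invariant under a rotation of order $2$ or $4$ and show it is either a \emph{deadlock} or a \emph{livelock}. The deadlock case is the $2\times 2$ square for $n=4$ (and the analogous $2\times 3$ block for $n=6$): every module lies in a nontrivial orbit, so any move would move an entire orbit at once, leaving no static backbone and violating \textbf{Single backbone}; hence no move is feasible. The livelock case is the straight line for odd $n$ (the $n=3$ instance is exactly Fig.~\ref{fig:evacuation-wo-gc-impossiblity-line-state}, reused for $n=5$): only the symmetric end modules can move, the symmetry-preservation principle keeps the configuration $\rho$-invariant about a fixed center forever, so the center of symmetry never translates and the MRS cannot drift unboundedly in any one direction, contradicting locomotion. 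These instances subsume the four forbidden states of Fig.~\ref{fig:evacuation-without-compass-forbidden-states} and mirror the small-$n$ analysis already carried out in Lemma~\ref{lem:evacuation-without-gc-impossibility}.

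I expect the main obstacle to be locating the precise threshold, i.e.\ reconciling the two halves. The symmetry-preservation principle by itself also produces rotationally symmetric configurations on $7$ modules, so the delicate point is to show that the impossibility instances for $n\le 6$ are genuinely \emph{unbreakable} (the only feasible moves regenerate the symmetry), whereas with $7$ modules the reconfiguration preamble can always reach an asymmetric feature that the gait then exploits. Making the reachability/symmetry-breaking step fully rigorous for all initial shapes, together with verifying the completeness of the forbidden-configuration catalogue for the borderline cases $n=5,6$, is where the real work lies; the gait verification and the center-of-symmetry livelock argument are comparatively mechanical.
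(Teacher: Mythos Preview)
The paper does not prove this lemma at all: it is imported verbatim from \cite{Doi2021} and used as a black box to justify the ``reaching a wall'' phase of the seven-module evacuation algorithm. There is therefore no proof in the present paper to compare your proposal against. Your sketch is a reasonable outline of the kind of argument one expects in \cite{Doi2021} --- a symmetry-preservation principle for the lower bound and an explicit gait plus reconfiguration preamble for the upper bound --- but you should be aware that some of your specific claims would not survive scrutiny: for instance, the $2\times 3$ block for $n=6$ is \emph{not} a deadlock in your sense (one orbit of two corner modules can rotate while the remaining four modules form a connected backbone), so the impossibility for $n=6$ must come from a livelock analysis rather than the ``no static backbone'' argument you give. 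Likewise, the canonical seven-module shape of Fig.~\ref{fig:locomotion-with-seven-modules} is itself $2$-fold rotationally symmetric, so the sufficiency argument cannot rest on ``creating an asymmetric local feature'' as you suggest; rather, the point is that the unique fixed module at the center of the odd-cardinality configuration can move unilaterally, which is what actually breaks the tie. If you intend to reconstruct the proof, these are the places where the details matter; but for the purposes of the present paper, a citation suffices.
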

Namely, an MRS reaches a wall by the locomotion algorithm.
Movement while searching for an exit is similar to that described in Section \ref{sec:evacuation-without-gc-restricted-init-states}.
An MRS changes its direction of movement and moves along walls (Fig.~\ref{fig:move-along-wall-wo-gc-seven}).
Then, it turns a corner (Fig.~\ref{fig:turn-corner-wo-gc-seven}).
Eventually, it finds an exit and evacuates from the rectangular field through that exit (Fig.~\ref{fig:evacuate-from-exit-wo-gc-seven}).
\begin{figure}[t]
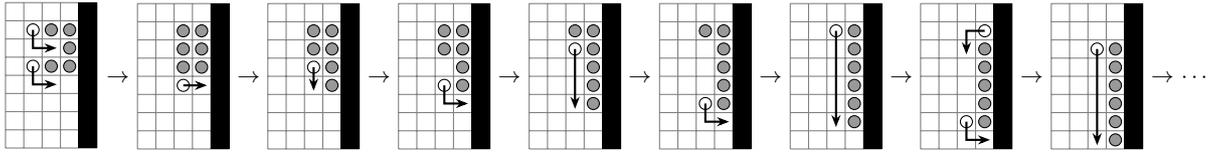

	\centering
	\begin{tabular}{l}

\begin{customviewplot}{5}{8}
	\gwall{4}{0}{4}{7}

	\gbackbone{3}{6}
	\gbackbone{3}{5}
	\gbackbone{3}{4}

	\gbackbone{2}{6}
	\gbackbone{2}{4}

	\gmodule{1}{6}
	\gmodule{1}{4}

	\draw[myarrow] (1.5,6.5) -- (1.5,5.5) -- (2.8,5.5);
	\draw[myarrow] (1.5,4.5) -- (1.5,3.5) -- (2.8,3.5);
\end{customviewplot}
\ra
\begin{customviewplot}{5}{8}
	\gwall{4}{0}{4}{7}

	\gbackbone{3}{6}
	\gbackbone{3}{5}
	\gbackbone{3}{4}

	\gbackbone{2}{6}
	\gbackbone{2}{4}

	\gbackbone{2}{5}
	\gmodule{2}{3}

	\draw[myarrow] (2.5,3.5) -- (3.8,3.5);
\end{customviewplot}
\ra
\begin{customviewplot}{5}{8}
	\gwall{4}{0}{4}{7}

	\gbackbone{3}{6}
	\gbackbone{3}{5}
	\gbackbone{3}{4}

	\gbackbone{2}{6}
	\gmodule{2}{4}

	\gbackbone{2}{5}
	\gbackbone{3}{3}

	\draw[myarrow] (2.5,4.5) -- (2.5,3.2);
\end{customviewplot}
\ra
\begin{customviewplot}{5}{8}
	\gwall{4}{0}{4}{7}

	\gbackbone{3}{6}
	\gbackbone{3}{5}
	\gbackbone{3}{4}

	\gbackbone{2}{6}
	\gmodule{2}{3}

	\gbackbone{2}{5}
	\gbackbone{3}{3}

	\draw[myarrow] (2.5,3.5) -- (2.5,2.5) -- (3.8,2.5);
\end{customviewplot}
\ra
\begin{customviewplot}{5}{8}
	\gwall{4}{0}{4}{7}

	\gbackbone{3}{6}
	\gbackbone{3}{5}
	\gbackbone{3}{4}

	\gbackbone{2}{6}
	\gbackbone{3}{2}

	\gmodule{2}{5}
	\gbackbone{3}{3}

	\draw[myarrow] (2.5,5.5) -- (2.5,2.2);
\end{customviewplot}
\ra
\begin{customviewplot}{5}{8}
	\gwall{4}{0}{4}{7}

	\gbackbone{3}{6}
	\gbackbone{3}{5}
	\gbackbone{3}{4}

	\gbackbone{2}{6}
	\gbackbone{3}{2}

	\gmodule{2}{2}
	\gbackbone{3}{3}

	\draw[myarrow] (2.5,2.5) -- (2.5,1.5) -- (3.8,1.5);
\end{customviewplot}
\ra
\begin{customviewplot}{5}{8}
	\gwall{4}{0}{4}{7}

	\gbackbone{3}{6}
	\gbackbone{3}{5}
	\gbackbone{3}{4}

	\gmodule{2}{6}
	\gbackbone{3}{2}

	\gbackbone{3}{1}
	\gbackbone{3}{3}

	\draw[myarrow] (2.5,6.5) -- (2.5,1.2);
\end{customviewplot}
\ra
\begin{customviewplot}{5}{8}
	\gwall{4}{0}{4}{7}

	\gmodule{3}{6}
	\gbackbone{3}{5}
	\gbackbone{3}{4}

	\gmodule{2}{1}
	\gbackbone{3}{2}

	\gbackbone{3}{1}
	\gbackbone{3}{3}

	\draw[myarrow] (2.5,1.5) -- (2.5,0.5) -- (3.8,0.5);
	\draw[myarrow] (3.5,6.5) -- (2.5,6.5) -- (2.5,5.2);
\end{customviewplot}
\ra
\begin{customviewplot}{5}{8}
	\gwall{4}{0}{4}{7}

	\gbackbone{3}{5}
	\gbackbone{3}{4}
	\gbackbone{3}{3}

	\gmodule{2}{5}
	\gbackbone{3}{1}

	\gbackbone{3}{0}
	\gbackbone{3}{2}

	\draw[myarrow] (2.5,5.5) -- (2.5,0.2);
\end{customviewplot}
\ra\ $\cdots$
\end{tabular}

	\caption{Seven modules without a global compass moving along walls}
	\label{fig:move-along-wall-wo-gc-seven}
\end{figure}
\begin{figure}[t]
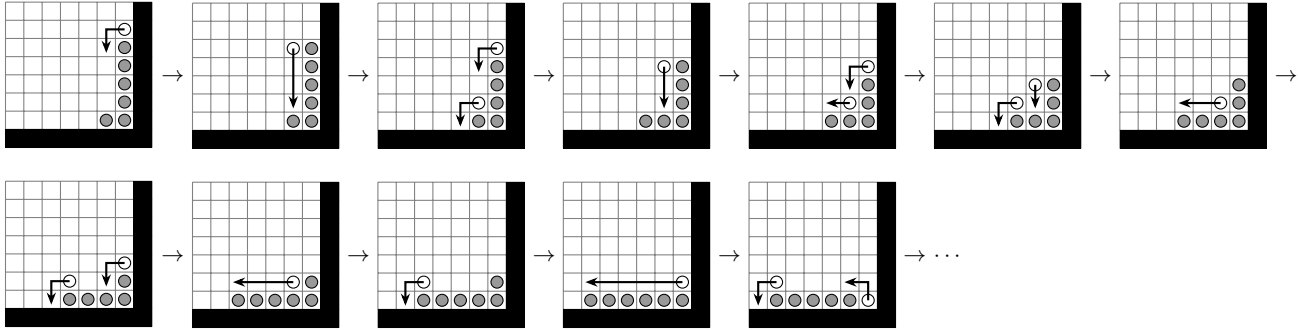

	\centering
	\begin{tabular}{l}

\begin{viewplot}{8}
	\gwall{0}{0}{7}{0}
	\gwall{7}{0}{7}{7}
	\gbackbone{5}{1}
	\gbackbone{6}{1}
	\gbackbone{6}{2}
	\gbackbone{6}{3}
	\gbackbone{6}{4}
	\gbackbone{6}{5}
	\gmodule{6}{6}
	\draw[myarrow] (6.5,6.5) -- (5.5,6.5) -- (5.5,5.2);
\end{viewplot}
\ra
\begin{viewplot}{8}
	\gwall{0}{0}{7}{0}
	\gwall{7}{0}{7}{7}
	\gbackbone{5}{1}
	\gbackbone{6}{1}
	\gbackbone{6}{2}
	\gbackbone{6}{3}
	\gbackbone{6}{4}
	\gbackbone{6}{5}
	\gmodule{5}{5}
	\draw[myarrow] (5.5,5.5) -- (5.5,2.2);
\end{viewplot}
\ra
\begin{viewplot}{8}
	\gwall{0}{0}{7}{0}
	\gwall{7}{0}{7}{7}
	\gbackbone{5}{1}
	\gbackbone{6}{1}
	\gbackbone{6}{2}
	\gbackbone{6}{3}
	\gbackbone{6}{4}
	\gmodule{6}{5}
	\gmodule{5}{2}
	\draw[myarrow] (5.5,2.5) -- (4.5,2.5) -- (4.5,1.2);
	\draw[myarrow] (6.5,5.5) -- (5.5,5.5) -- (5.5,4.2);
\end{viewplot}
\ra
\begin{viewplot}{8}
	\gwall{0}{0}{7}{0}
	\gwall{7}{0}{7}{7}
	\gbackbone{5}{1}
	\gbackbone{6}{1}
	\gbackbone{6}{2}
	\gbackbone{6}{3}
	\gbackbone{6}{4}
	\gbackbone{4}{1}
	\gmodule{5}{4}
	\draw[myarrow] (5.5,4.5) -- (5.5,2.2);
\end{viewplot}
\ra
\begin{viewplot}{8}
	\gwall{0}{0}{7}{0}
	\gwall{7}{0}{7}{7}
	\gbackbone{5}{1}
	\gbackbone{6}{1}
	\gbackbone{6}{2}
	\gbackbone{6}{3}
	\gbackbone{4}{1}
	\gmodule{6}{4}
	\gmodule{5}{2}
	\draw[myarrow] (6.5,4.5) -- (5.5,4.5) -- (5.5,3.2);
	\draw[myarrow] (5.5,2.5) -- (4.2,2.5);
\end{viewplot}
\ra
\begin{viewplot}{8}
	\gwall{0}{0}{7}{0}
	\gwall{7}{0}{7}{7}
	\gbackbone{5}{1}
	\gbackbone{6}{1}
	\gbackbone{6}{2}
	\gbackbone{6}{3}
	\gbackbone{4}{1}
	\gmodule{5}{3}
	\gmodule{4}{2}
	\draw[myarrow] (4.5,2.5) -- (3.5,2.5) -- (3.5,1.2);
	\draw[myarrow] (5.5,3.5) -- (5.5,2.2);
\end{viewplot}
\ra
\begin{viewplot}{8}
	\gwall{0}{0}{7}{0}
	\gwall{7}{0}{7}{7}
	\gbackbone{5}{1}
	\gbackbone{6}{1}
	\gbackbone{6}{2}
	\gbackbone{6}{3}
	\gbackbone{4}{1}
	\gbackbone{3}{1}
	\gmodule{5}{2}
	\draw[myarrow] (5.5,2.5) -- (3.2,2.5);
\end{viewplot}
\ra
\\ \\
\begin{viewplot}{8}
	\gwall{0}{0}{7}{0}
	\gwall{7}{0}{7}{7}
	\gbackbone{5}{1}
	\gbackbone{6}{1}
	\gbackbone{6}{2}
	\gbackbone{4}{1}
	\gbackbone{3}{1}
	\gmodule{6}{3}
	\gmodule{3}{2}
	\draw[myarrow] (3.5,2.5) -- (2.5,2.5) -- (2.5,1.2);
	\draw[myarrow] (6.5,3.5) -- (5.5,3.5) -- (5.5,2.2);
\end{viewplot}
\ra
\begin{viewplot}{8}
	\gwall{0}{0}{7}{0}
	\gwall{7}{0}{7}{7}
	\gbackbone{5}{1}
	\gbackbone{6}{1}
	\gbackbone{6}{2}
	\gbackbone{4}{1}
	\gbackbone{3}{1}
	\gbackbone{2}{1}
	\gmodule{5}{2}
	\draw[myarrow] (5.5,2.5) -- (2.2,2.5);
\end{viewplot}
\ra
\begin{viewplot}{8}
	\gwall{0}{0}{7}{0}
	\gwall{7}{0}{7}{7}
	\gbackbone{5}{1}
	\gbackbone{6}{1}
	\gbackbone{6}{2}
	\gbackbone{4}{1}
	\gbackbone{3}{1}
	\gbackbone{2}{1}
	\gmodule{2}{2}
	\draw[myarrow] (2.5,2.5) -- (1.5,2.5) -- (1.5,1.2);
\end{viewplot}
\ra
\begin{viewplot}{8}
	\gwall{0}{0}{7}{0}
	\gwall{7}{0}{7}{7}
	\gbackbone{6}{1}
	\gbackbone{5}{1}
	\gbackbone{4}{1}
	\gbackbone{3}{1}
	\gbackbone{2}{1}
	\gbackbone{1}{1}
	\gmodule{6}{2}
	\draw[myarrow] (6.5,2.5) -- (1.2,2.5);
\end{viewplot}
\ra
\begin{viewplot}{8}
	\gwall{0}{0}{7}{0}
	\gwall{7}{0}{7}{7}
	\gbackbone{5}{1}
	\gbackbone{4}{1}
	\gbackbone{3}{1}
	\gbackbone{2}{1}
	\gbackbone{1}{1}
	\gmodule{1}{2}
	\gmodule{6}{1}
	\draw[myarrow] (1.5,2.5) -- (0.5,2.5) -- (0.5,1.2);
	\draw[myarrow] (6.5,1.5) -- (6.5,2.5) -- (5.2,2.5);
\end{viewplot}
\ra\ $\cdots$
\end{tabular}

	\caption{Seven modules without a global compass turning a corner}
	\label{fig:turn-corner-wo-gc-seven}
\end{figure}
\begin{figure}[t]
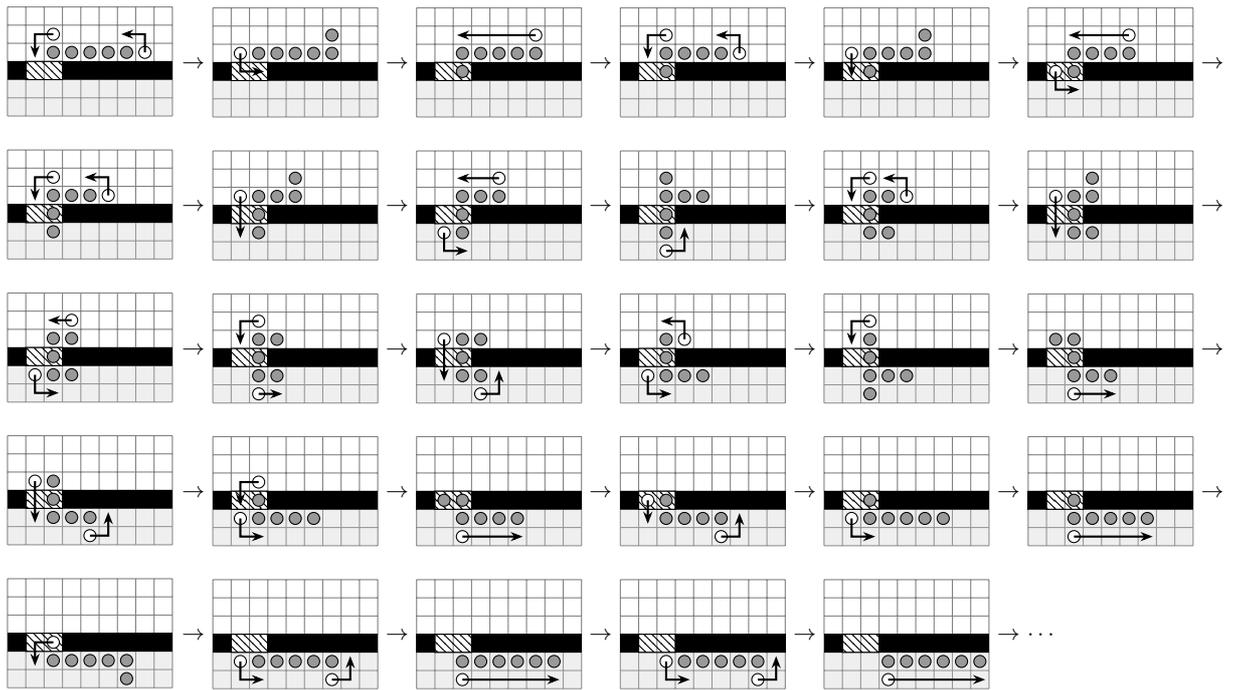

	\centering
	\begin{tabular}{l}

\begin{customviewplot}{9}{6}
	\gwall{0}{2}{0}{2}
	\gwall{3}{2}{8}{2}
	\outsidearea{0}{0}{8}{1}
	\gexit{1}{2}{2}{2}

	\gbackbone{2}{3}
	\gbackbone{3}{3}
	\gbackbone{4}{3}
	\gbackbone{5}{3}
	\gbackbone{6}{3}
	\gmodule{7}{3}
	\gmodule{2}{4}

	\draw[myarrow] (2.5,4.5) -- (1.5,4.5) -- (1.5,3.2);
	\draw[myarrow] (7.5,3.5) -- (7.5,4.5) -- (6.2,4.5);
\end{customviewplot}
\ra
\begin{customviewplot}{9}{6}
	\gwall{0}{2}{0}{2}
	\gwall{3}{2}{8}{2}
	\outsidearea{0}{0}{8}{1}
	\gexit{1}{2}{2}{2}

	\gbackbone{2}{3}
	\gbackbone{3}{3}
	\gbackbone{4}{3}
	\gbackbone{5}{3}
	\gbackbone{6}{3}
	\gbackbone{6}{4}
	\gmodule{1}{3}

	\draw[myarrow] (1.5,3.5) -- (1.5,2.5) -- (2.8,2.5);
\end{customviewplot}
\ra
\begin{customviewplot}{9}{6}
	\gwall{0}{2}{0}{2}
	\gwall{3}{2}{8}{2}
	\outsidearea{0}{0}{8}{1}
	\gexit{1}{2}{2}{2}

	\gbackbone{2}{3}
	\gbackbone{3}{3}
	\gbackbone{4}{3}
	\gbackbone{5}{3}
	\gbackbone{6}{3}
	\gmodule{6}{4}
	\gbackbone{2}{2}

	\draw[myarrow] (6.5,4.5) -- (2.2,4.5);
\end{customviewplot}
\ra
\begin{customviewplot}{9}{6}
	\gwall{0}{2}{0}{2}
	\gwall{3}{2}{8}{2}
	\outsidearea{0}{0}{8}{1}
	\gexit{1}{2}{2}{2}

	\gbackbone{2}{3}
	\gbackbone{3}{3}
	\gbackbone{4}{3}
	\gbackbone{5}{3}
	\gmodule{6}{3}
	\gmodule{2}{4}
	\gbackbone{2}{2}

	\draw[myarrow] (2.5,4.5) -- (1.5,4.5) -- (1.5,3.2);
	\draw[myarrow] (6.5,3.5) -- (6.5,4.5) -- (5.2,4.5);
\end{customviewplot}
\ra
\begin{customviewplot}{9}{6}
	\gwall{0}{2}{0}{2}
	\gwall{3}{2}{8}{2}
	\outsidearea{0}{0}{8}{1}
	\gexit{1}{2}{2}{2}

	\gbackbone{2}{3}
	\gbackbone{3}{3}
	\gbackbone{4}{3}
	\gbackbone{5}{3}
	\gbackbone{5}{4}
	\gmodule{1}{3}
	\gbackbone{2}{2}

	\draw[myarrow] (1.5,3.5) -- (1.5,2.2);
\end{customviewplot}
\ra
\begin{customviewplot}{9}{6}
	\gwall{0}{2}{0}{2}
	\gwall{3}{2}{8}{2}
	\outsidearea{0}{0}{8}{1}
	\gexit{1}{2}{2}{2}

	\gbackbone{2}{3}
	\gbackbone{3}{3}
	\gbackbone{4}{3}
	\gbackbone{5}{3}
	\gmodule{5}{4}
	\gmodule{1}{2}
	\gbackbone{2}{2}

	\draw[myarrow] (1.5,2.5) -- (1.5,1.5) -- (2.8,1.5);
	\draw[myarrow] (5.5,4.5) -- (2.2,4.5);
\end{customviewplot}
\ra
\\ \\
\begin{customviewplot}{9}{6}
	\gwall{0}{2}{0}{2}
	\gwall{3}{2}{8}{2}
	\outsidearea{0}{0}{8}{1}
	\gexit{1}{2}{2}{2}

	\gbackbone{2}{3}
	\gbackbone{3}{3}
	\gbackbone{4}{3}
	\gmodule{5}{3}
	\gbackbone{2}{1}
	\gbackbone{2}{2}
	\gmodule{2}{4}

	\draw[myarrow] (2.5,4.5) -- (1.5,4.5) -- (1.5,3.2);
	\draw[myarrow] (5.5,3.5) -- (5.5,4.5) -- (4.2,4.5);
\end{customviewplot}
\ra
\begin{customviewplot}{9}{6}
	\gwall{0}{2}{0}{2}
	\gwall{3}{2}{8}{2}
	\outsidearea{0}{0}{8}{1}
	\gexit{1}{2}{2}{2}

	\gbackbone{2}{3}
	\gbackbone{3}{3}
	\gbackbone{4}{3}
	\gbackbone{4}{4}
	\gbackbone{2}{1}
	\gbackbone{2}{2}
	\gmodule{1}{3}

	\draw[myarrow] (1.5,3.5) -- (1.5,1.2);
\end{customviewplot}
\ra
\begin{customviewplot}{9}{6}
	\gwall{0}{2}{0}{2}
	\gwall{3}{2}{8}{2}
	\outsidearea{0}{0}{8}{1}
	\gexit{1}{2}{2}{2}

	\gbackbone{2}{3}
	\gbackbone{3}{3}
	\gbackbone{4}{3}
	\gmodule{4}{4}
	\gbackbone{2}{1}
	\gbackbone{2}{2}
	\gmodule{1}{1}

	\draw[myarrow] (4.5,4.5) -- (2.2,4.5);
	\draw[myarrow] (1.5,1.5) -- (1.5,0.5) -- (2.8,0.5);
\end{customviewplot}
\ra
\begin{customviewplot}{9}{6}
	\gwall{0}{2}{0}{2}
	\gwall{3}{2}{8}{2}
	\outsidearea{0}{0}{8}{1}
	\gexit{1}{2}{2}{2}

	\gbackbone{2}{3}
	\gbackbone{3}{3}
	\gbackbone{4}{3}
	\gbackbone{2}{4}
	\gbackbone{2}{1}
	\gbackbone{2}{2}
	\gmodule{2}{0}

	\draw[myarrow] (2.5,0.5) -- (3.5,0.5) -- (3.5,1.8);
\end{customviewplot}
\ra
\begin{customviewplot}{9}{6}
	\gwall{0}{2}{0}{2}
	\gwall{3}{2}{8}{2}
	\outsidearea{0}{0}{8}{1}
	\gexit{1}{2}{2}{2}

	\gbackbone{2}{3}
	\gbackbone{3}{3}
	\gmodule{4}{3}
	\gmodule{2}{4}
	\gbackbone{2}{1}
	\gbackbone{2}{2}
	\gbackbone{3}{1}

	\draw[myarrow] (2.5,4.5) -- (1.5,4.5) -- (1.5,3.2);
	\draw[myarrow] (4.5,3.5) -- (4.5,4.5) -- (3.2,4.5);
\end{customviewplot}
\ra
\begin{customviewplot}{9}{6}
	\gwall{0}{2}{0}{2}
	\gwall{3}{2}{8}{2}
	\outsidearea{0}{0}{8}{1}
	\gexit{1}{2}{2}{2}

	\gbackbone{2}{3}
	\gbackbone{3}{3}
	\gbackbone{3}{4}
	\gmodule{1}{3}
	\gbackbone{2}{1}
	\gbackbone{2}{2}
	\gbackbone{3}{1}

	\draw[myarrow] (1.5,3.5) -- (1.5,1.2);
\end{customviewplot}
\ra
\\ \\
\begin{customviewplot}{9}{6}
	\gwall{0}{2}{0}{2}
	\gwall{3}{2}{8}{2}
	\outsidearea{0}{0}{8}{1}
	\gexit{1}{2}{2}{2}

	\gbackbone{2}{3}
	\gbackbone{3}{3}
	\gmodule{3}{4}
	\gmodule{1}{1}
	\gbackbone{2}{1}
	\gbackbone{2}{2}
	\gbackbone{3}{1}

	\draw[myarrow] (1.5,1.5) -- (1.5,0.5) -- (2.8,0.5);
	\draw[myarrow] (3.5,4.5) -- (2.2,4.5);
\end{customviewplot}
\ra
\begin{customviewplot}{9}{6}
	\gwall{0}{2}{0}{2}
	\gwall{3}{2}{8}{2}
	\outsidearea{0}{0}{8}{1}
	\gexit{1}{2}{2}{2}

	\gbackbone{2}{3}
	\gbackbone{3}{3}
	\gmodule{2}{4}
	\gmodule{2}{0}
	\gbackbone{2}{1}
	\gbackbone{2}{2}
	\gbackbone{3}{1}

	\draw[myarrow] (2.5,0.5) -- (3.8,0.5);
	\draw[myarrow] (2.5,4.5) -- (1.5,4.5) -- (1.5,3.2);
\end{customviewplot}
\ra
\begin{customviewplot}{9}{6}
	\gwall{0}{2}{0}{2}
	\gwall{3}{2}{8}{2}
	\outsidearea{0}{0}{8}{1}
	\gexit{1}{2}{2}{2}

	\gbackbone{2}{3}
	\gbackbone{3}{3}
	\gmodule{1}{3}
	\gmodule{3}{0}
	\gbackbone{2}{1}
	\gbackbone{2}{2}
	\gbackbone{3}{1}

	\draw[myarrow] (3.5,0.5) -- (4.5,0.5) -- (4.5,1.8);
	\draw[myarrow] (1.5,3.5) -- (1.5,1.2);
\end{customviewplot}
\ra
\begin{customviewplot}{9}{6}
	\gwall{0}{2}{0}{2}
	\gwall{3}{2}{8}{2}
	\outsidearea{0}{0}{8}{1}
	\gexit{1}{2}{2}{2}

	\gbackbone{2}{3}
	\gmodule{3}{3}
	\gmodule{1}{1}
	\gbackbone{4}{1}
	\gbackbone{2}{1}
	\gbackbone{2}{2}
	\gbackbone{3}{1}

	\draw[myarrow] (1.5,1.5) -- (1.5,0.5) -- (2.8,0.5);
	\draw[myarrow] (3.5,3.5) -- (3.5,4.5) -- (2.2,4.5);
\end{customviewplot}
\ra
\begin{customviewplot}{9}{6}
	\gwall{0}{2}{0}{2}
	\gwall{3}{2}{8}{2}
	\outsidearea{0}{0}{8}{1}
	\gexit{1}{2}{2}{2}

	\gbackbone{2}{3}
	\gmodule{2}{4}
	\gbackbone{2}{0}
	\gbackbone{4}{1}
	\gbackbone{2}{1}
	\gbackbone{2}{2}
	\gbackbone{3}{1}

	\draw[myarrow] (2.5,4.5) -- (1.5,4.5) -- (1.5,3.2);
\end{customviewplot}
\ra
\begin{customviewplot}{9}{6}
	\gwall{0}{2}{0}{2}
	\gwall{3}{2}{8}{2}
	\outsidearea{0}{0}{8}{1}
	\gexit{1}{2}{2}{2}

	\gbackbone{2}{3}
	\gbackbone{1}{3}
	\gmodule{2}{0}
	\gbackbone{4}{1}
	\gbackbone{2}{1}
	\gbackbone{2}{2}
	\gbackbone{3}{1}

	\draw[myarrow] (2.5,0.5) -- (4.8,0.5);
\end{customviewplot}
\ra
\\ \\
\begin{customviewplot}{9}{6}
	\gwall{0}{2}{0}{2}
	\gwall{3}{2}{8}{2}
	\outsidearea{0}{0}{8}{1}
	\gexit{1}{2}{2}{2}

	\gbackbone{2}{3}
	\gmodule{1}{3}
	\gmodule{4}{0}
	\gbackbone{4}{1}
	\gbackbone{2}{1}
	\gbackbone{2}{2}
	\gbackbone{3}{1}

	\draw[myarrow] (1.5,3.5) -- (1.5,1.2);
	\draw[myarrow] (4.5,0.5) -- (5.5,0.5) -- (5.5,1.8);
\end{customviewplot}
\ra
\begin{customviewplot}{9}{6}
	\gwall{0}{2}{0}{2}
	\gwall{3}{2}{8}{2}
	\outsidearea{0}{0}{8}{1}
	\gexit{1}{2}{2}{2}

	\gmodule{2}{3}
	\gmodule{1}{1}
	\gbackbone{5}{1}
	\gbackbone{4}{1}
	\gbackbone{2}{1}
	\gbackbone{2}{2}
	\gbackbone{3}{1}

	\draw[myarrow] (1.5,1.5) -- (1.5,0.5) -- (2.8,0.5);
	\draw[myarrow] (2.5,3.5) -- (1.5,3.5) -- (1.5,2.2);
\end{customviewplot}
\ra
\begin{customviewplot}{9}{6}
	\gwall{0}{2}{0}{2}
	\gwall{3}{2}{8}{2}
	\outsidearea{0}{0}{8}{1}
	\gexit{1}{2}{2}{2}

	\gbackbone{1}{2}
	\gmodule{2}{0}
	\gbackbone{5}{1}
	\gbackbone{4}{1}
	\gbackbone{2}{1}
	\gbackbone{2}{2}
	\gbackbone{3}{1}

	\draw[myarrow] (2.5,0.5) -- (5.8,0.5);
\end{customviewplot}
\ra
\begin{customviewplot}{9}{6}
	\gwall{0}{2}{0}{2}
	\gwall{3}{2}{8}{2}
	\outsidearea{0}{0}{8}{1}
	\gexit{1}{2}{2}{2}

	\gmodule{1}{2}
	\gmodule{5}{0}
	\gbackbone{5}{1}
	\gbackbone{4}{1}
	\gbackbone{2}{1}
	\gbackbone{2}{2}
	\gbackbone{3}{1}

	\draw[myarrow] (5.5,0.5) -- (6.5,0.5) -- (6.5,1.8);
	\draw[myarrow] (1.5,2.5) -- (1.5,1.2);
\end{customviewplot}
\ra
\begin{customviewplot}{9}{6}
	\gwall{0}{2}{0}{2}
	\gwall{3}{2}{8}{2}
	\outsidearea{0}{0}{8}{1}
	\gexit{1}{2}{2}{2}

	\gmodule{1}{1}
	\gbackbone{6}{1}
	\gbackbone{5}{1}
	\gbackbone{4}{1}
	\gbackbone{2}{1}
	\gbackbone{2}{2}
	\gbackbone{3}{1}

	\draw[myarrow] (1.5,1.5) -- (1.5,0.5) -- (2.8,0.5);
\end{customviewplot}
\ra
\begin{customviewplot}{9}{6}
	\gwall{0}{2}{0}{2}
	\gwall{3}{2}{8}{2}
	\outsidearea{0}{0}{8}{1}
	\gexit{1}{2}{2}{2}

	\gmodule{2}{0}
	\gbackbone{6}{1}
	\gbackbone{5}{1}
	\gbackbone{4}{1}
	\gbackbone{2}{1}
	\gbackbone{2}{2}
	\gbackbone{3}{1}

	\draw[myarrow] (2.5,0.5) -- (6.8,0.5);
\end{customviewplot}
\ra
\\ \\
\begin{customviewplot}{9}{6}
	\gwall{0}{2}{0}{2}
	\gwall{3}{2}{8}{2}
	\outsidearea{0}{0}{8}{1}
	\gexit{1}{2}{2}{2}

	\gbackbone{6}{0}
	\gbackbone{6}{1}
	\gbackbone{5}{1}
	\gbackbone{4}{1}
	\gbackbone{2}{1}
	\gmodule{2}{2}
	\gbackbone{3}{1}

	\draw[myarrow] (2.5,2.5) -- (1.5,2.5) -- (1.5,1.2);
\end{customviewplot}
\ra
\begin{customviewplot}{9}{6}
	\gwall{0}{2}{0}{2}
	\gwall{3}{2}{8}{2}
	\outsidearea{0}{0}{8}{1}
	\gexit{1}{2}{2}{2}

	\gmodule{6}{0}
	\gbackbone{6}{1}
	\gbackbone{5}{1}
	\gbackbone{4}{1}
	\gbackbone{2}{1}
	\gmodule{1}{1}
	\gbackbone{3}{1}

	\draw[myarrow] (1.5,1.5) -- (1.5,0.5) -- (2.8,0.5);
	\draw[myarrow] (6.5,0.5) -- (7.5,0.5) -- (7.5,1.8);
\end{customviewplot}
\ra
\begin{customviewplot}{9}{6}
	\gwall{0}{2}{0}{2}
	\gwall{3}{2}{8}{2}
	\outsidearea{0}{0}{8}{1}
	\gexit{1}{2}{2}{2}

	\gbackbone{7}{1}
	\gbackbone{6}{1}
	\gbackbone{5}{1}
	\gbackbone{4}{1}
	\gbackbone{2}{1}
	\gbackbone{3}{1}
	\gmodule{2}{0}

	\draw[myarrow] (2.5,0.5) -- (7.8,0.5);
\end{customviewplot}
\ra
\begin{customviewplot}{9}{6}
	\gwall{0}{2}{0}{2}
	\gwall{3}{2}{8}{2}
	\outsidearea{0}{0}{8}{1}
	\gexit{1}{2}{2}{2}

	\gmodule{7}{0}
	\gbackbone{7}{1}
	\gbackbone{6}{1}
	\gbackbone{5}{1}
	\gbackbone{3}{1}
	\gmodule{2}{1}
	\gbackbone{4}{1}

	\draw[myarrow] (2.5,1.5) -- (2.5,0.5) -- (3.8,0.5);
	\draw[myarrow] (7.5,0.5) -- (8.5,0.5) -- (8.5,1.8);
\end{customviewplot}
\ra
\begin{customviewplot}{9}{6}
	\gwall{0}{2}{0}{2}
	\gwall{3}{2}{8}{2}
	\outsidearea{0}{0}{8}{1}
	\gexit{1}{2}{2}{2}

	\gbackbone{8}{1}
	\gbackbone{7}{1}
	\gbackbone{6}{1}
	\gbackbone{5}{1}
	\gbackbone{3}{1}
	\gbackbone{4}{1}
	\gmodule{3}{0}

	\draw[myarrow] (3.5,0.5) -- (8.8,0.5);
\end{customviewplot}
\ra\ $\cdots$
\end{tabular}

	\caption{Evacuation through an exit by seven modules without a global compass}
	\label{fig:evacuate-from-exit-wo-gc-seven}
\end{figure}

This locomotion algorithm occupies five cells in the same row in the second and third states in Fig.~\ref{fig:locomotion-with-seven-modules}.
This implies that the width and height of a rectangular field must be at least five cells each.
Therefore, fields to which this evacuation algorithm is applicable are limited as follows.
\begin{theorem}
\label{thm:evacuation-without-gc-arbitrary-initial-states}
Seven modules not equipped with a global compass are necessary and sufficient for a metamorphic robotic system to solve the evacuation problem starting from an arbitrary initial state in any rectangular field whose width and height are at least five cells each.
\end{theorem}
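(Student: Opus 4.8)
The plan is to prove necessity and sufficiency separately, reusing the seven-module locomotion result quoted above together with the wall-following machinery already developed for four modules.

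For necessity I would argue that, as a prerequisite to evacuation, an MRS that starts far from every wall must be able to travel an unbounded distance to reach a wall, and that while no wall lies in any module's view the modules act exactly as they would in an infinite grid. Hence the symmetric configurations witnessing the locomotion impossibility of \cite{Doi2021} for $n<7$ --- those from which the MRS can never move arbitrarily far in any fixed direction, of which the four-module shapes in Fig.~\ref{fig:evacuation-without-compass-forbidden-states} are the $n=4$ instance --- confine such an MRS to a bounded region. Placing one of these configurations at the center of a sufficiently large field (possible since the exterior is assumed large relative to $n$), the MRS never reaches a wall and therefore never finds an exit, so evacuation is impossible for $n<7$. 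This establishes that at least seven modules are required.

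For sufficiency I would give a two-phase oblivious algorithm mirroring the structure of Sections~\ref{sec:evacuation-rectangular-with-global-compass} and \ref{sec:evacuation-without-gc-restricted-init-states}. In the \emph{reaching-a-wall} phase the MRS runs the seven-module locomotion algorithm of Fig.~\ref{fig:locomotion-with-seven-modules}, which breaks symmetry from an arbitrary initial state and advances in one direction until some module detects a wall cell in its $11\times 11$ view. In the \emph{searching-for-an-exit} phase the MRS follows the wall clockwise by the moves of Fig.~\ref{fig:move-along-wall-wo-gc-seven}, turns each corner by $\pi/2$ as in Fig.~\ref{fig:turn-corner-wo-gc-seven}, and, upon recognizing the convex local pattern that only an exit of a rectangular field produces, evacuates as in Fig.~\ref{fig:evacuate-from-exit-wo-gc-seven}. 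Since the interior boundary of a rectangular field is a single closed cycle and there is at least one exit, a clockwise traversal reaches an exit in finitely many steps, so the MRS evacuates. The width and height must each be at least five cells because the locomotion gait temporarily occupies five cells in one row (the second and third states of Fig.~\ref{fig:locomotion-with-seven-modules}); this is exactly the restriction stated in the theorem.

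The main obstacle I anticipate is the seam between the two phases under obliviousness and the absence of a global compass. Because the modules have no memory, the current phase must be encoded entirely in the instantaneous shape of the MRS, and each module must decide its move from its $11\times 11$ view alone; I would therefore have to verify that the shapes arising during locomotion, wall-following, corner-turning, and exit-evacuation are pairwise distinguishable by the local rule, so that no view is mapped to two incompatible moves and the transition from locomotion to wall-following fires exactly when a wall first enters a module's view. A secondary point is to confirm that these seven-module gaits never violate the connectivity, single-backbone, and no-interference constraints during the transition and around corners; this is routine but tedious to check case by case against the figures, and is where most of the verification effort will lie.
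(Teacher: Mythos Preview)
Your proposal is correct and follows essentially the same approach as the paper: necessity is inherited from the seven-module locomotion lower bound of \cite{Doi2021} by placing a symmetric configuration deep inside a large field, and sufficiency is obtained by chaining the seven-module locomotion of Fig.~\ref{fig:locomotion-with-seven-modules} with the wall-following, corner-turning, and exit gaits of Figs.~\ref{fig:move-along-wall-wo-gc-seven}--\ref{fig:evacuate-from-exit-wo-gc-seven}, with the $5\times 5$ field restriction arising from the five-cell row footprint of the locomotion gait. The seam and connectivity checks you flag as obstacles are exactly the case analysis the paper leaves implicit in its figures.
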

\noindent
This evacuation algorithm requires a visibility range of $11 \times 11$ (i.e., 5-neighborhood).
This range is necessary to avoid collisions of modules when a module performs a 5-sliding movement, as in the example shown in Fig.~\ref{fig:move-along-wall-wo-gc-seven}.

\subsubsection{Starting on a wall}
\label{sec:evacuation-without-gc-along-the-wall}

Here, we consider evacuation when an MRS composed of modules not equipped with a global compass starts on a wall.
Contrary to Sections \ref{sec:evacuation-without-gc-restricted-init-states} and \ref{sec:evacuation-without-gc-arbitrary-init-states}, two modules are necessary and sufficient for evacuation in this case.

The movements of the two modules are similar to those of two modules equipped with a global compass (Figs.~\ref{fig:move-along-wall-with-gc}--\ref{fig:evacuate-from-exit-with-gc}), even when a global compass is unavailable.
This is possible because starting on a wall allows the modules to agree on a common direction of movement, which acts as a substitute for a global compass. This is accomplished by making use of the direction in which wall cells are observed in their views, as follows.
At each step, a module $m$ observes neighborhood cells and obtains its view $v$.
If a module is side-adjacent to a wall cell in a direction $d$ in view $v$, $m$ determines its direction of movement as $d + \pi/2$ in a clockwise direction.
For example, if a module $m$ observes based on its local compass that its neighbor module is side-adjacent to a wall cell in the north, $m$ changes direction of movement to the east.
This decision is based only on the handedness shared among the modules; thus, a global compass is not required.
The visibility range of this algorithm are the same as that in Section \ref{sec:evacuation-rectangular-with-global-compass}.

However, this method has a limitation on the field size.
Let us consider the cases depicted in Fig.~\ref{fig:field-size-limitation-wo-global-compass-two-module}.
In Fig.~\ref{fig:field-size-limitation-wo-global-compass-two-module}(a), modules $a$ and $b$ may have the same view in an initial state.
If module $a$ (resp.~$b$) recognizes the right (resp.~left) direction as the north, module $a$ (resp.~$b$) waits for $b$ (resp.~$a$) to rotate.
Then, both modules do not move.
As a result, the MRS cannot move.
On the other hand, in the case shown in Fig.~\ref{fig:field-size-limitation-wo-global-compass-two-module}(b), the views of the two modules are different.
Therefore, $b$ becomes a backbone module and $a$ rotates around $b$.
This allows the MRS to move along the walls in a clockwise direction.
\begin{figure}[t]
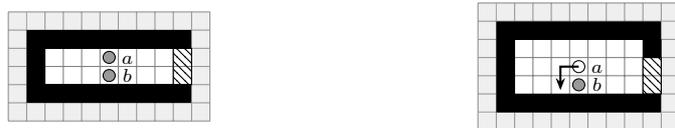

	\centering
	\begin{tabular}{cc}

	\begin{customviewplot}{11}{6}
		\gwall{1}{1}{9}{1}
		\gwall{1}{4}{9}{4}
		\gwall{1}{2}{1}{3}
		\outside{1}
		\gexit{9}{2}{9}{3}

		\gbackbone{5}{2}
		\node at (6.5, 3.4) {{\scriptsize $\strut a$}};

		\gbackbone{5}{3}
		\node at (6.5, 2.4) {{\scriptsize $\strut b$}};

	\end{customviewplot}
	&
	\begin{customviewplot}{11}{7}
		\gwall{1}{1}{9}{1}
		\gwall{1}{5}{9}{5}
		\gwall{1}{2}{1}{4}
		\gwall{9}{4}{9}{4}
		\outside{1}
		\gexit{9}{2}{9}{3}

		\gmodule{5}{3}
		\node at (6.5, 3.4) {{\scriptsize $\strut a$}};

		\gbackbone{5}{2}
		\node at (6.5, 2.4) {{\scriptsize $\strut b$}};

		\draw[myarrow] (5.5,3.5) -- (4.5,3.5) -- (4.5,2.2);

	\end{customviewplot}
	\\
	\subcaption{(a) Both the modules have the same view and stop forever}
	&
	\subcaption{(b) Correct movement}

\end{tabular}

	\caption{Field size limitation in the case of two modules without a global compass }
	\label{fig:field-size-limitation-wo-global-compass-two-module}
\end{figure}
To avoid this case, the following theorem restricts the size of a rectangular field.
\begin{theorem}
\label{thm:evacuation-without-gc-along-the-wall}
Two modules not equipped with a global compass are necessary and sufficient for a metamorphic robotic system to solve the evacuation problem from a side of a wall in any rectangular field whose width and height are at least three cells.
\end{theorem}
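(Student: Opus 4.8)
The plan is to dispatch necessity immediately and spend the effort on sufficiency. Necessity reuses the argument behind Theorem~\ref{thm:evacuation-with-gc}: every basic move in Fig.~\ref{fig:basic-move} needs at least one backbone module side-adjacent to the mover, so a single module can never move and $n=1$ cannot evacuate from any starting position, including one on a wall. Hence at least two modules are required. For sufficiency I would show that two modules starting on a wall reproduce, step for step, the behavior of the two modules with a global compass, so that correctness reduces to Theorem~\ref{thm:evacuation-with-gc}.

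The heart of the argument is to make the wall act as a substitute global compass. I would first state the rule precisely: a module $m$ that finds a wall cell side-adjacent to itself or to its partner in some local direction $d$ sets its heading to $d$ rotated $\pi/2$ clockwise. The key claim is that this heading is \emph{globally} consistent. Because the modules share a handedness, each local frame is the global frame rotated by some $k\cdot\pi/2$, $k\in\{0,1,2,3\}$, with no reflection. If a module's frame maps the global wall direction $g$ to the local label $\ell$, then its computed move $\ell+\pi/2$ maps back to $g+\pi/2$ globally, independently of $k$. Consequently, two modules that reference the \emph{same} wall always agree on a single global heading. Given this common heading, I would drive the pair by exactly the movements of Figs.~\ref{fig:move-along-wall-with-gc}--\ref{fig:evacuate-from-exit-with-gc}: at each step the module to move and its movement are the same deterministic function of the consistently-oriented view as in the global-compass algorithm, and since the wall has already fixed the orientation the two modules never both claim the mover role. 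Wall-following, corner turning, and passage through an exit then carry over verbatim from the proof of Theorem~\ref{thm:evacuation-with-gc}, and the same $5\times5$ visibility suffices.

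The main obstacle, and the sole reason for the hypothesis $w,h\ge 3$, is the initial symmetry depicted in Fig.~\ref{fig:field-size-limitation-wo-global-compass-two-module}(a). There the two modules are side-adjacent to two \emph{opposite} walls, so one references the wall in global direction $d$ and the other in direction $d+\pi$; the resulting headings are opposite, producing a role conflict in which each module waits for the other and the pair never moves. I would observe that a two-cell domino can be side-adjacent to both the north and south walls only when the interior height is exactly two, and to both the east and west walls only when the width is exactly two, because the interior rows (resp.\ columns) nearest opposite walls differ by at least two once the dimension is at least three, whereas a domino spans cells differing by at most one. Hence $w,h\ge 3$ excludes this configuration entirely, not only initially but throughout the execution, since the MRS always spans just two cells. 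It then remains to check the surviving initial cases---both modules on a common wall, the domino perpendicular to a single wall, and a module abutting two perpendicular walls at a corner---and confirm that in each the two modules share a single heading or are otherwise distinguishable, so that symmetry breaks and one module becomes the backbone (the corner case resolved by the fixed wall-on-the-left convention). With the deadlock ruled out, progress along the boundary together with Lemma~\ref{lem:exit-size}, which guarantees every exit is at least two cells wide, ensures the pair eventually reaches and passes through an exit, completing the proof.
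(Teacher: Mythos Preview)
Your proposal is correct and follows essentially the same approach as the paper: necessity is immediate, and sufficiency is obtained by letting the observed wall direction plus a $\pi/2$ clockwise turn serve as a shared heading (consistent because the modules share handedness), then reusing the global-compass movements of Figs.~\ref{fig:move-along-wall-with-gc}--\ref{fig:evacuate-from-exit-with-gc}, with the restriction $w,h\ge 3$ introduced precisely to rule out the symmetric deadlock of Fig.~\ref{fig:field-size-limitation-wo-global-compass-two-module}(a). Your write-up is in fact more careful than the paper's informal discussion, in particular your explicit check that the two-cell domino cannot touch opposite walls once both dimensions are at least three and that this holds throughout the execution.
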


\subsection{Stopping after evacuation}
\label{sec:stop-after-evacuation}

In this section, we discuss the possibility of stopping an MRS after it evacuates through an exit from a rectangular field.
First, we show the following theorem regarding the positive result to stop an MRS from moving after evacuation.

\begin{theorem}
\label{thm:termination-possible-if-only-one-exit-exists}
An MRS composed of two or more modules not equipped with a global compass can stop after evacuation if a rectangular field has only one exit.
\end{theorem}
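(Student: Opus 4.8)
The plan is to augment any of the evacuation algorithms of Sections~\ref{sec:evacuation-rectangular-with-global-compass}--\ref{sec:evacuation-without-gc-along-the-wall} with a single extra stopping rule, keyed on a local configuration that provably occurs only in the exterior of the field. Recall that in all of these algorithms an MRS first reaches a wall and then follows the walls clockwise (keeping the wall on its left), deciding each move as a total function of the wall cells visible in its $k$-neighborhood; upon detecting the convex gap of an exit it passes through to the other side and keeps following the wall. Since every decision is a function of the current view, it suffices to exhibit a view $v^\star$ such that (i) $v^\star$ never arises while the MRS is inside the interior, and (ii) with a single exit, $v^\star$ is guaranteed to arise after the MRS has passed through the exit. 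We then define the algorithm to halt all modules on $v^\star$.

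First I would catalog the local configurations the MRS can observe during wall following. Because the interior of a rectangular field is convex, every corner of its boundary is \emph{concave} as seen from the inside: following the wall clockwise (wall on the left), the MRS makes a clockwise $\pi/2$ turn at every corner, as in Figs.~\ref{fig:turn-corner-with-gc}, \ref{fig:evacuation-without-gc-turn-corner}, and \ref{fig:turn-corner-wo-gc-seven}, and never a counterclockwise one. Hence the views encountered inside are exactly those of a straight wall, a concave corner, and an exit, none of which contains a \emph{convex} (outer) corner, where the wall bends away from the MRS and would force a counterclockwise turn. I take $v^\star$ to be this finite family of convex-corner views. Part~(i) then follows from convexity of the interior: a convex corner on the traversed boundary would require the free region to protrude into the wall, which never happens inside a rectangular field (nor during the initial locomotion, where the MRS observes no wall at all until it first reaches a flat wall).

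For part~(ii) I would invoke the single-exit hypothesis. After evacuating, the MRS lies in the exterior but is still side-adjacent to the wall, and the purely local wall-following rules carry it along the \emph{outside} of the rectangle. Trailing this outer boundary from the exit, the MRS reaches the nearest corner of the rectangle before meeting any other feature, because the unique exit is the only gap in the walls and each wall segment is bounded by two corners. Seen from the outside that corner is convex, so the MRS attains a view in $v^\star$ and halts, having already evacuated. The single-exit assumption is precisely what makes this reliable: with two or more exits the MRS could re-enter the interior through a second exit before reaching a convex corner, which is exactly why the stopping guarantee degrades to ``depends on \#exits'' in the other settings. Since $v^\star$ occurs only outside, the MRS never halts prematurely, and since it must occur after evacuation it always halts; this argument is independent of the module count and therefore holds for every $n \ge 2$.

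The main obstacle will be the rigorous verification underlying part~(i): one must check, across each algorithm and each admitted module count, that the concrete finite set of views produced inside (straight wall, concave corner, exit, together with the transient views during turning and exiting) is disjoint from the convex-corner family $v^\star$, and in particular that no convex-corner view is accidentally produced by the exit-passing maneuver itself. A secondary point to pin down is that after passing through the exit the local rules indeed continue wall following on the exterior side and do not lose contact with the wall before a corner is reached; this requires inspecting the final states of Figs.~\ref{fig:evacuate-from-exit-with-gc} and \ref{fig:evacuation-without-gc-exit-move} and confirming that the subsequent moves trail the outer wall toward a corner.
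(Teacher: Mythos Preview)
Your approach is essentially the same as the paper's: the paper also has the MRS continue wall-following on the exterior side after passing the exit, reach an outer corner of the rectangle, and halt on the resulting view, arguing that this positional relationship between the MRS and the walls occurs only after evacuation and does not require a global compass. The one point the paper makes more explicit than you do is that the terminal views must be chosen so that \emph{every} module observes a halting view in the same time step (otherwise connectivity could break), which is why the paper fixes concrete terminal configurations for $n=2,4,7$ rather than just ``a convex-corner view''; this is exactly the kind of check you flagged as the main obstacle.
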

\begin{proof}
After evacuating from an exit, an MRS eventually reaches a corner by moving along walls of the exterior side.
Figure \ref{fig:unique-state-after-evacuation} shows such situations.
In these situations, each module can detect that an MRS is in the exterior\footnote{Recall that each module cannot distinguish an interior cell and an exterior cell.} because this positional relationship between an MRS and the walls appears only after evacuation.
Note that this detection does not requires a global compass.
Therefore, each module can stop moving when it observes the views of Fig.~\ref{fig:unique-state-after-evacuation}.
The views are carefully chosen based on their visibility range so that all the modules decide to stop moving in the same step, which is necessary not to break the connectivity of modules.
\begin{figure}[t]
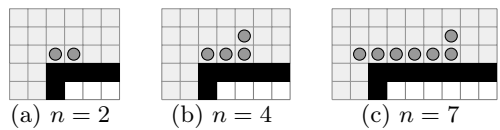

	\centering
	\begin{tabular}{ccc}
\begin{customviewplot}{6}{5}
	\outsidearea{0}{0}{1}{1}
	\outsidearea{0}{2}{5}{4}

	\gwall{2}{0}{2}{1}
	\gwall{2}{1}{5}{1}

	\gbackbone{2}{2}
	\gbackbone{3}{2}
\end{customviewplot}

&

\begin{customviewplot}{7}{5}
	\outsidearea{0}{0}{1}{1}
	\outsidearea{0}{2}{6}{4}

	\gwall{2}{0}{2}{1}
	\gwall{2}{1}{6}{1}

	\gbackbone{2}{2}
	\gbackbone{3}{2}
	\gbackbone{4}{2}
	\gbackbone{4}{3}

\end{customviewplot}

&

\begin{customviewplot}{9}{5}
	\outsidearea{0}{0}{1}{1}
	\outsidearea{0}{2}{8}{4}

	\gwall{2}{0}{2}{1}
	\gwall{2}{1}{8}{1}

	\gbackbone{1}{2}
	\gbackbone{2}{2}
	\gbackbone{3}{2}
	\gbackbone{4}{2}
	\gbackbone{5}{2}
	\gbackbone{6}{2}
	\gbackbone{6}{3}

\end{customviewplot}
\\

\subcaption{(a) $n=2$} &
\subcaption{(b) $n=4$} &
\subcaption{(c) $n=7$} \\

\end{tabular}

	\caption{Termination states after evacuation}
	\label{fig:unique-state-after-evacuation}
\end{figure}
\end{proof}

However, it is impossible to stop after evacuation if a rectangular filed has two or more exits and there is no assumption on an initial state of an MRS.

\begin{theorem}
\label{thm:mrs-keep-moving-after-evacuation}
Let $R$ be an MRS composed of two or more modules equipped with a global compass and having limited visibility.
Let $F$ be a rectangular field that has two or more exits.
There is no algorithm that can solve the evacuation problem from any initial state and can stop $R$ from moving after evacuation through an exit of $F$.
\end{theorem}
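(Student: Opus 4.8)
The plan is to argue by contradiction, exploiting that an oblivious module with a fixed visibility range $k$ can base its decision to halt only on its current $(2k+1)\times(2k+1)$ view. Suppose some algorithm $\mathcal{A}$ both solves evacuation from every initial state and stops $R$ after evacuation. Since $R$ comes to rest, there is a set of \emph{halting views}: views that $\mathcal{A}$ maps to ``stay.'' A configuration freezes exactly when every module's view is a halting view. Because $\mathcal{A}$ must stop after evacuation, from any initial state in $F$ the execution reaches such a frozen configuration $C_{\mathrm{halt}}$ lying in the exterior; I fix one such $C_{\mathrm{halt}}$ and record the wall cells and module cells inside the $k$-neighborhood of its module cluster. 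The key observation to keep in mind is that a module distinguishes only wall from non-wall (and module from non-module), so empty interior, empty exterior, and exit cells are mutually indistinguishable: only the positions of walls and of modules within range matter.

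First I would narrow down what a halting view can look like. If a halting view contained no wall, then starting $R$ deep inside a large interior with that view would freeze it at $t=0$ without ever evacuating, already contradicting correctness; hence every halting view contains a wall. Next, a one-cell-thick straight wall presents the same open/wall/open pattern, even under the global compass, from a point just outside the field and from a point just inside it next to the opposite wall, so a view showing only a straight wall recurs in the interior and cannot be a safe halting trigger. Therefore each halting view must display a genuine boundary feature that does not by itself occur on the interior side of an intact boundary, namely an outer (convex) corner where two wall arms meet, or the outer side of an exit; under the global compass such a feature is what lets $\mathcal{A}$ tell ``I have evacuated'' from ``I am at an inner (concave) corner.''

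The heart of the argument is to forge such a feature at an interior location using the freedom granted by a \emph{second} exit. I would take a large rectangular field $F'$ and cut its two exits as two gaps placed near a single corner, one gap in each of the two boundary walls meeting at that corner and each within distance $k$ of it, each gap two cells wide to respect Lemma~\ref{lem:exit-size}. Puncturing the boundary on both arms reshapes the local wall pattern seen from the interior so that, within range $k$ of a suitable cell near the corner, the surviving wall cells coincide exactly with the wall cells recorded around $C_{\mathrm{halt}}$; since all non-wall cells are interchangeable, only these wall cells and the module cells need to match. I would then choose the initial configuration $C_0$ of $R$ so that its modules occupy, relative to the corner, the same cells the modules occupy in $C_{\mathrm{halt}}$. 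By construction every module of $C_0$ has, including the common compass directions, exactly the halting view it had in $C_{\mathrm{halt}}$, so by determinism every module decides to stay: $R$ freezes at $t=0$ in the interior of $F'$ and never evacuates. As $F'$ is a legitimate rectangular field with two exits and $C_0$ a legitimate initial state, this contradicts the assumption that $\mathcal{A}$ evacuates from every initial state. Note that one gap per wall arm is exactly what the forgery consumes, which is why the impossibility begins precisely at two exits.

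The main obstacle I anticipate is making the local carving correct for \emph{all} $n$ modules simultaneously rather than for a single probe cell: the $k$-neighborhoods of adjacent modules overlap, so the two gaps must reproduce the entire wall pattern surrounding the whole cluster of $C_{\mathrm{halt}}$, and I must verify that this pattern is realizable from boundary walls punctured by just two gaps near one corner while keeping $C_0$ connected and the interior of $F'$ of admissible width and height. A secondary subtlety is that the construction must succeed against the halting views of an \emph{arbitrary} correct algorithm, not a particular one; the reduction of the second paragraph, which forces every halting view to be an outer-corner or outer-exit feature, is what makes the single gadget of two gaps flanking a corner suffice uniformly (and the case where $\mathcal{A}$ halts on the outer side of an exit is only easier, since the exits are entirely under our control).
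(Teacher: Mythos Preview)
Your approach overcomplicates the argument, and the central gadget step has a real gap. The paper's proof avoids any classification of halting views: it fixes a large field $F$ with its two exits on the south wall, runs $\mathcal{A}$ until it halts somewhere in the exterior, and then takes $F'$ to be the same rectangle with the exits moved to the \emph{north} wall at the same columns. Placing $R$ in the interior of $F'$, just south of the north wall at the mirrored positions, gives every module a view identical to its halting view in $F$: in both situations the wall-and-gap pattern lies to the north of the module, and since empty interior, empty exterior, and exit cells are all ``non-wall, non-module,'' nothing else distinguishes the two scenes. That is the whole argument; no corner gadget is needed, and the second exit is used only so that $F'$ is still a legal two-exit field.

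Your construction, by contrast, asserts that two gaps flanking one corner of $F'$ let you reproduce $C_{\mathrm{halt}}$'s wall pattern from the interior, but this is exactly where your plan breaks. An outer convex corner seen from the exterior has the two wall arms extending \emph{away} from the MRS (the L opens away); from any interior cell of a rectangular field the boundary L always opens \emph{toward} the MRS. Punching exits only deletes wall cells, so it cannot flip that orientation: you can shorten the arms but you cannot move the MRS to the convex side of the L while keeping it inside. Your final paragraph acknowledges this obstacle without resolving it. The preceding classification (``each halting view must display an outer corner or the outer side of an exit'') is also not established; the halted cluster could in principle see any finite wall-and-gap pattern along one boundary, and it is precisely that one-sided pattern that the paper mirrors directly onto the opposite wall of $F'$, bypassing the case analysis entirely.
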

\begin{proof}
Note that there is a situation in which an MRS cannot reach the terminal state of Theorem \ref{thm:termination-possible-if-only-one-exit-exists} because $F$ has two exits.
In the situation shown in Fig. \ref{fig:counterexample-to-stop-at-corner}, the MRS goes east first until reaching a wall.
After that, it moves along the wall and evacuates from exit $a$.
Then, the MRS still moves along the wall.
However, it finds exit $b$ before observing a corner and enters the interior again through $b$.
The MRS repeats these movements forever and never stops moving.
Therefore, we cannot take this approach based on the views of a corner.
\begin{figure}[t]
       \centering
       \input{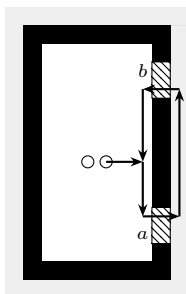}
       \caption{Counterexample to stop moving at a corner}
       \label{fig:counterexample-to-stop-at-corner}
\end{figure}

For contradiction, we assume that there exists an algorithm $\mathcal{A}$ that can solve the evacuation problem from any initial state and can stop an MRS from moving after completing evacuation.
Let $F$ be a rectangular field sufficiently large so that no module can see any corner from any exit (Fig.~\ref{fig:impossible-termination-counterexample}(a)).
Let us consider evacuation from the rectangular field $F$.
Without loss of generality, we assume that $F$ has two exits $\{ c_{a,-1},c_{a+1,-1} \}$ and $\{ c_{b,-1},c_{b+1,-1} \}$ in the south wall.
Because $\mathcal{A}$ is a deterministic algorithm, the terminal state of MRS $R$ is uniquely determined from the initial position and shape of $R$.
Let us assume that the terminal state of MRS $R$ reached by $\mathcal{A}$ from a given initial position and shape is as shown in Fig.~\ref{fig:impossible-termination-counterexample}(a).
Figure \ref{fig:impossible-termination-counterexample}(b) illustrates the views of the modules.

Assume another rectangular field $F'$ whose size is the same as $F$ (Fig.~\ref{fig:impossible-termination-counterexample}(c)).
Unlike $F$, $F'$ has two exits $\{ c_{a,h},c_{a+1,h} \}$ and $\{ c_{b,h},c_{b+1,h} \}$ on the north wall.
Let us consider an MRS $R'$ starts its evacuation in $F'$ from the state in which each module has an identical view with that of the terminal state of $F$.
Figure \ref{fig:impossible-termination-counterexample}(d) depicts the views of modules in $F'$.
Because the views of the modules in $F'$ are identical to $F$, no module in $F'$ moves.
Note that a module cannot distinguish between a cell in the exterior and one in the interior.
Therefore, the algorithm $\mathcal{A}$ cannot solve the evacuation problem in the rectangular field $F'$.
This contradicts the assumption.
\begin{figure}[t]
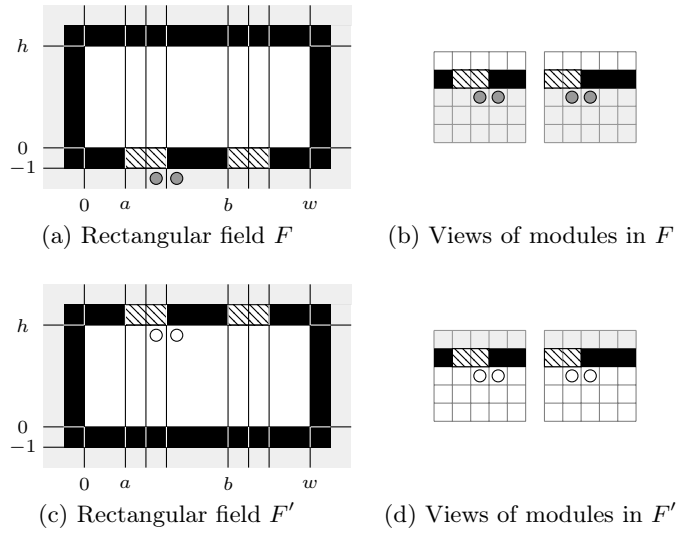

	\centering
	\begin{tabular}{cc}

\setlength{\gridstep}{2.7mm}
\begin{nogridplot}{15}{9}
\outside{1}

\gwall{1}{1}{1}{7}
\gwall{13}{1}{13}{7}

\gwall{1}{7}{13}{7}

\gwall{1}{1}{3}{1}
\gwall{11}{1}{13}{1}
\gwall{6}{1}{8}{1}

\gexit{4}{1}{5}{1}
\gexit{9}{1}{10}{1}

\node at (-1.0,2.0) {\scriptsize $0$};
\node at (-1.0,1.0) {\scriptsize $-1$};
\node at (-1.0,7.0) {\scriptsize $h$};

\node at (2.0,-0.8) {\scriptsize $0$};
\node at (4.0,-0.8) {\scriptsize $a$};
\node at (9.0,-0.8) {\scriptsize $b$};
\node at (13.0,-0.8) {\scriptsize $w$};

\draw [black] (0.0,1.0) -- (14.0,1.0);
\draw [black] (0.0,2.0) -- (1.0,2.0);
\draw [white] (1.0,2.0) -- (2.0,2.0);
\draw [black] (2.0,2.0) -- (13.0,2.0);
\draw [white] (13.0,2.0) -- (14.0,2.0);
\draw [black] (14.0,2.0) -- (15.0,2.0);
\draw [black] (0.0,7.0) -- (1.0,7.0);
\draw [white] (1.0,7.0) -- (2.0,7.0);
\draw [black] (2.0,7.0) -- (13.0,7.0);
\draw [white] (13.0,7.0) -- (14.0,7.0);
\draw [black] (14.0,7.0) -- (15.0,7.0);

\draw [black] (2.0,0.0) -- (2.0,1.0);
\draw [white] (2.0,1.0) -- (2.0,2.0);
\draw [black] (2.0,2.0) -- (2.0,7.0);
\draw [white] (2.0,7.0) -- (2.0,8.0);
\draw [black] (2.0,8.0) -- (2.0,9.0);
\draw [black] (4.0,0.0) -- (4.0,1.0);
\draw [white] (4.0,1.0) -- (4.0,2.0);
\draw [black] (4.0,2.0) -- (4.0,7.0);
\draw [white] (4.0,7.0) -- (4.0,8.0);
\draw [black] (4.0,8.0) -- (4.0,9.0);
\draw [black] (5.0,0.0) -- (5.0,7.0);
\draw [white] (5.0,7.0) -- (5.0,8.0);
\draw [black] (5.0,8.0) -- (5.0,9.0);
\draw [black] (6.0,0.0) -- (6.0,1.0);
\draw [white] (6.0,1.0) -- (6.0,2.0);
\draw [black] (6.0,2.0) -- (6.0,7.0);
\draw [white] (6.0,7.0) -- (6.0,8.0);
\draw [black] (6.0,8.0) -- (6.0,9.0);
\draw [black] (9.0,0.0) -- (9.0,1.0);
\draw [white] (9.0,1.0) -- (9.0,2.0);
\draw [black] (9.0,2.0) -- (9.0,7.0);
\draw [white] (9.0,7.0) -- (9.0,8.0);
\draw [black] (9.0,8.0) -- (9.0,9.0);
\draw [black] (10.0,0.0) -- (10.0,7.0);
\draw [white] (10.0,7.0) -- (10.0,8.0);
\draw [black] (10.0,8.0) -- (10.0,9.0);
\draw [black] (11.0,0.0) -- (11.0,1.0);
\draw [white] (11.0,1.0) -- (11.0,2.0);
\draw [black] (11.0,2.0) -- (11.0,7.0);
\draw [white] (11.0,7.0) -- (11.0,8.0);
\draw [black] (11.0,8.0) -- (11.0,9.0);
\draw [black] (13.0,0.0) -- (13.0,1.0);
\draw [white] (13.0,1.0) -- (13.0,2.0);
\draw [black] (13.0,2.0) -- (13.0,7.0);
\draw [white] (13.0,7.0) -- (13.0,8.0);
\draw [black] (13.0,8.0) -- (13.0,9.0);

\gbackbone{5}{0}
\gbackbone{6}{0}

\end{nogridplot}

&

\begin{customviewplot}{5}{5}
	\gwall{0}{3}{0}{3}
	\gwall{3}{3}{4}{3}
	\outsidearea{0}{0}{4}{2}
	\gexit{1}{3}{2}{3}

	\gbackbone{2}{2}
	\gbackbone{3}{2}
\end{customviewplot}

\begin{customviewplot}{5}{5}
	\gwall{2}{3}{4}{3}
	\outsidearea{0}{0}{4}{2}
	\gexit{0}{3}{1}{3}

	\gbackbone{1}{2}
	\gbackbone{2}{2}
\end{customviewplot}

\\
\subcaption{(a) Rectangular field $F$} & \subcaption{(b) Views of modules in $F$} \\ \\

\setlength{\gridstep}{2.7mm}
\begin{nogridplot}{15}{9}
\outside{1}

\gwall{1}{1}{1}{7}
\gwall{13}{1}{13}{7}

\gwall{1}{1}{13}{1}

\gwall{1}{7}{3}{7}
\gwall{11}{7}{13}{7}
\gwall{6}{7}{8}{7}

\gexit{4}{7}{5}{7}
\gexit{9}{7}{10}{7}

\gmodule{5}{6}
\gmodule{6}{6}

\node at (-1.0,2.0) {\scriptsize $0$};
\node at (-1.0,1.0) {\scriptsize $-1$};
\node at (-1.0,7.0) {\scriptsize $h$};

\node at (2.0,-0.8) {\scriptsize $0$};
\node at (4.0,-0.8) {\scriptsize $a$};
\node at (9.0,-0.8) {\scriptsize $b$};
\node at (13.0,-0.8) {\scriptsize $w$};

\draw [black] (0.0,1.0) -- (14.0,1.0);
\draw [black] (0.0,2.0) -- (1.0,2.0);
\draw [white] (1.0,2.0) -- (2.0,2.0);
\draw [black] (2.0,2.0) -- (13.0,2.0);
\draw [white] (13.0,2.0) -- (14.0,2.0);
\draw [black] (14.0,2.0) -- (15.0,2.0);
\draw [black] (0.0,7.0) -- (1.0,7.0);
\draw [white] (1.0,7.0) -- (2.0,7.0);
\draw [black] (2.0,7.0) -- (13.0,7.0);
\draw [white] (13.0,7.0) -- (14.0,7.0);
\draw [black] (14.0,7.0) -- (15.0,7.0);

\draw [black] (2.0,0.0) -- (2.0,1.0);
\draw [white] (2.0,1.0) -- (2.0,2.0);
\draw [black] (2.0,2.0) -- (2.0,7.0);
\draw [white] (2.0,7.0) -- (2.0,8.0);
\draw [black] (2.0,8.0) -- (2.0,9.0);
\draw [black] (4.0,0.0) -- (4.0,1.0);
\draw [white] (4.0,1.0) -- (4.0,2.0);
\draw [black] (4.0,2.0) -- (4.0,7.0);
\draw [white] (4.0,7.0) -- (4.0,8.0);
\draw [black] (4.0,8.0) -- (4.0,9.0);
\draw [black] (5.0,0.0) -- (5.0,1.0);
\draw [white] (5.0,1.0) -- (5.0,2.0);
\draw [black] (5.0,2.0) -- (5.0,9.0);
\draw [black] (6.0,0.0) -- (6.0,1.0);
\draw [white] (6.0,1.0) -- (6.0,2.0);
\draw [black] (6.0,2.0) -- (6.0,7.0);
\draw [white] (6.0,7.0) -- (6.0,8.0);
\draw [black] (6.0,8.0) -- (6.0,9.0);
\draw [black] (9.0,0.0) -- (9.0,1.0);
\draw [white] (9.0,1.0) -- (9.0,2.0);
\draw [black] (9.0,2.0) -- (9.0,7.0);
\draw [white] (9.0,7.0) -- (9.0,8.0);
\draw [black] (9.0,8.0) -- (9.0,9.0);
\draw [black] (10.0,0.0) -- (10.0,1.0);
\draw [white] (10.0,1.0) -- (10.0,2.0);
\draw [black] (10.0,2.0) -- (10.0,9.0);
\draw [black] (11.0,0.0) -- (11.0,1.0);
\draw [white] (11.0,1.0) -- (11.0,2.0);
\draw [black] (11.0,2.0) -- (11.0,7.0);
\draw [white] (11.0,7.0) -- (11.0,8.0);
\draw [black] (11.0,8.0) -- (11.0,9.0);
\draw [black] (13.0,0.0) -- (13.0,1.0);
\draw [white] (13.0,1.0) -- (13.0,2.0);
\draw [black] (13.0,2.0) -- (13.0,7.0);
\draw [white] (13.0,7.0) -- (13.0,8.0);
\draw [black] (13.0,8.0) -- (13.0,9.0);

\end{nogridplot}

&

\begin{customviewplot}{5}{5}
	\gwall{0}{3}{0}{3}
	\gwall{3}{3}{4}{3}
	\outsidearea{0}{4}{4}{4}
	\gexit{1}{3}{2}{3}

	\gmodule{2}{2}
	\gmodule{3}{2}
\end{customviewplot}

\begin{customviewplot}{5}{5}
	\gwall{2}{3}{4}{3}
	\outsidearea{0}{4}{4}{4}
	\gexit{0}{3}{1}{3}

	\gmodule{1}{2}
	\gmodule{2}{2}
\end{customviewplot}
\\
\subcaption{(c) Rectangular field $F'$} & \subcaption{(d) Views of modules in $F'$} \\

\end{tabular}
	\caption{Two rectangular fields assumed in the proof of Theorem \ref{thm:mrs-keep-moving-after-evacuation}. Here, we assume that $n=2$ and that modules have 2-neighborhood visibility.}
	\label{fig:impossible-termination-counterexample}
\end{figure}
\end{proof}

Only one exception exists for this result.
The algorithm in Section \ref{sec:evacuation-without-gc-restricted-init-states} can stop an MRS from moving after evacuation, even if a rectangular field has two or more exits.
This does not contradict Theorem \ref{thm:mrs-keep-moving-after-evacuation}, because the algorithm has the additional assumption on initial shapes of an MRS.
There are four forbidden initial shapes of an MRS (Fig.~\ref{fig:evacuation-without-compass-forbidden-states}), and we can utilize one of them as a terminal shape after evacuation.
In other words, none of the forbidden shapes is given to the algorithm as an initial state; thus, the algorithm does not need to consider evacuation from them.
From this discussion, we have the following theorem.
\begin{theorem}
\label{thm:mrs-can-stop-if-intial-shapes-are-restricted}
An MRS can stop moving after evacuation from a rectangular field if initial shapes of an MRS are restricted.
\end{theorem}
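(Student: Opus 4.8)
The plan is to prove the theorem constructively by showing that the four-module algorithm of Section~\ref{sec:evacuation-without-gc-restricted-init-states} already halts the MRS the moment it leaves the field, so that no separate termination-detection mechanism is required. The whole idea rests on turning the symmetry obstruction that produced the forbidden initial states into a feature: one of those shapes is reused as a \emph{trap} into which the evacuation maneuver deposits the MRS.

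First I would recall, from the impossibility direction underlying Lemma~\ref{lem:evacuation-without-gc-impossibility} and the forbidden states of Fig.~\ref{fig:evacuation-without-compass-forbidden-states}, that a $2\times 2$ square of four modules without a global compass admits no executable move: by symmetry each module would attempt the same move as its mirror image, and none of them can remain fixed to serve as a backbone, so the \textbf{single backbone} condition forces every module to stay put. Consequently a $2\times 2$ square is a fixed point of \emph{every} deterministic algorithm, independent of where it sits in the grid and of how many exits the field has.

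Next I would observe that the evacuation maneuver of the algorithm (Fig.~\ref{fig:evacuation-without-gc-exit-move}) is designed so that, after the four modules pass through the exit, they assemble exactly into such a $2\times 2$ square in the exterior; indeed the last configuration of that figure coincides with the forbidden shape of Fig.~\ref{fig:evacuation-without-compass-forbidden-states}(a). By the previous step this configuration is terminal, so the MRS stops there. Crucially, the square is formed immediately upon exit passage, before the MRS can reach a corner of the exterior wall or encounter a second exit; hence the obstruction exploited in the proof of Theorem~\ref{thm:mrs-keep-moving-after-evacuation} (re-entering through another exit, as in Fig.~\ref{fig:counterexample-to-stop-at-corner}) never arises, and the argument is unaffected by the number of exits. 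Finally, because the $2\times 2$ square is one of the excluded initial states, using it as a terminal shape is consistent with the algorithm's domain: the algorithm is never required to move out of it, and its correctness from the allowed initial states (Lemma~\ref{lem:evacuation-without-gc-possibility}) is untouched.

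The step I expect to be the main obstacle is ruling out \emph{premature} halting: I must verify that the MRS never forms a $2\times 2$ square before it has fully evacuated, since otherwise it would freeze inside the field and fail to evacuate. Because a module cannot locally tell an interior cell from an exterior one, I cannot gate square-formation on position; instead I would argue from the shape dynamics, inspecting the explicit move sequences for reaching a wall (Fig.~\ref{fig:evacuation-wo-gc-one-way-move}), moving along a wall (Fig.~\ref{fig:evacuation-without-gc-move-along-wall}), and turning a corner (Fig.~\ref{fig:evacuation-without-gc-turn-corner}), and checking that none of the configurations reachable along them is a $2\times 2$ square. Since there is no global compass, this check must cover every branch induced by the modules' local compasses; as the relevant shapes form a small finite family, this is a finite but somewhat tedious case analysis, and it is the crux of the argument.
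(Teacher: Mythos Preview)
Your approach matches the paper's: reuse one of the forbidden initial shapes (the $2\times 2$ square) as the terminal shape, which is legitimate precisely because that shape is excluded from the allowed initial states and the algorithm therefore never has to resume motion from it. You also correctly flag the premature-halting check, which the paper leaves implicit in the design of Figs.~\ref{fig:evacuation-wo-gc-one-way-move}--\ref{fig:evacuation-without-gc-exit-move}.

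There is, however, a flaw in your first step. The symmetry argument that a $2\times 2$ square ``admits no executable move'' requires all four modules to have identical local views, and that holds only when the surrounding cells are themselves rotationally symmetric about the square's center---i.e., when no wall is in sight. The terminal configuration of Fig.~\ref{fig:evacuation-without-gc-exit-move} sits immediately outside the exit with wall cells in every module's $3$-neighborhood, so the four modules have pairwise distinct views and an algorithm \emph{could} in principle move them; your claim that the $2\times 2$ square is a fixed point of \emph{every} deterministic algorithm ``independent of where it sits in the grid'' is therefore false. The reason the MRS halts there is not forced symmetry but the design freedom you mention only as a consistency remark: since the $2\times 2$ shape is excluded as an initial state, the algorithm is free to map every $2\times 2$-shape view to ``stay'' without jeopardizing correctness from the allowed initial states. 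Lead with that observation (it is the paper's actual argument), drop the universal fixed-point claim, and the rest of your proof---including the shape-level premature-halting check---goes through.
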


\section{Other types of fields}
\label{sec:other-types-of-field}

By extending the movements shown in Sections \ref{sec:evacuation-rectangular-with-global-compass} and \ref{sec:evacuation-rectangular-wo-global-compass}, an MRS can evacuate from more complex fields.
As examples, we now describe evacuation algorithms for a maze and a convex field.
The algorithms introduced here are applicable to various models, regardless of the existence of a global compass and restrictions on the initial position.
The number of modules and visibility range required to solve the evacuation problem are the same as those in rectangular fields.

\subsection{Evacuation from a maze}
\label{sec:evacuation-from-maze}

Here, we consider evacuation from mazes.
We defined a maze as a set of rectangular fields interconnected by exits in Definition \ref{def:maze}.
With this definition, we can apply evacuation algorithms for rectangular fields directly to mazes. Only the evacuation algorithm for four modules that are not equipped with a global compass (Section \ref{sec:evacuation-without-gc-arbitrary-init-states}) should be modified so that it does not stop an MRS after completing evacuation.

Figure \ref{fig:mrs-trail-in-maze} shows an example evacuation trail of an MRS composed of two modules equipped with a global compass.
\begin{figure}[t]
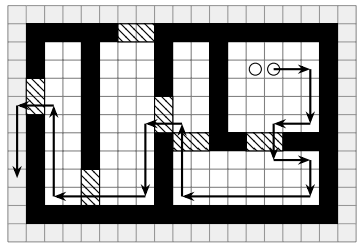

	\centering
	\begin{customviewplot}{19}{13}
\outside{1}

\gwall{1}{1}{17}{1}
\gwall{1}{11}{5}{11}
\gwall{8}{11}{17}{11}
\gwall{1}{2}{1}{6}
\gwall{1}{9}{1}{10}
\gwall{17}{2}{17}{10}

\gwall{4}{4}{4}{10}
\gwall{8}{2}{8}{5}
\gwall{8}{8}{8}{10}
\gwall{11}{5}{11}{10}
\gwall{12}{5}{12}{5}
\gwall{15}{5}{16}{5}

\gexit{1}{7}{1}{8}
\gexit{4}{2}{4}{3}
\gexit{8}{6}{8}{7}
\gexit{9}{5}{10}{5}
\gexit{13}{5}{14}{5}

\gexit{6}{11}{7}{11}

\gmodule{13}{9}
\gmodule{14}{9}

\draw[myarrow] (14.5,9.5) -- (16.5,9.5);
\draw[myarrow] (16.5,9.5) -- (16.5,6.5);
\draw[myarrow] (16.5,6.5) -- (14.5,6.5);
\draw[myarrow] (14.5,6.5) -- (14.5,4.5);
\draw[myarrow] (14.5,4.5) -- (16.5,4.5);
\draw[myarrow] (16.5,4.5) -- (16.5,2.5);
\draw[myarrow] (16.5,2.5) -- (9.5,2.5);
\draw[myarrow] (9.5,2.5) -- (9.5,6.5);
\draw[myarrow] (9.5,6.5) -- (7.5,6.5);
\draw[myarrow] (7.5,6.5) -- (7.5,2.5);
\draw[myarrow] (7.5,2.5) -- (2.5,2.5);
\draw[myarrow] (2.5,2.5) -- (2.5,7.5);
\draw[myarrow] (2.5,7.5) -- (0.5,7.5);
\draw[myarrow] (0.5,7.5) -- (0.5,3.5);

\end{customviewplot}
	\caption{Example evacuation trail of an MRS from a maze}
	\label{fig:mrs-trail-in-maze}
\end{figure}
First, an MRS reaches a wall from its initial position and shape by locomotion and finds an exit by moving along the walls.
Then it passes through that exit and enters another rectangular field adjacent to the current field, from which it finds an exit leading to the next field.
By repeating these movements, an MRS eventually finds an exit to the exterior and evacuates from the maze.
This approach simulates the well-known wall follower method,\footnote{\url{https://en.wikipedia.org/wiki/Maze_solving_algorithm\#Wall_follower}} and the correctness of our approach follows this method.

Contrary to the results in Section \ref{sec:stop-after-evacuation}, it is impossible for an MRS to stop moving after evacuation from a maze, even if a maze has only one exit or we restrict the initial shapes of an MRS.
Figure \ref{fig:counterexample-to-stop-in-maze} shows counterexamples for both cases, in which an MRS stops moving by mistake after passing to the second field through an exit of the first field.
For the case of only one exit (Fig.~\ref{fig:counterexample-to-stop-in-maze}(a)), an MRS stops after reaching the corner of the first field because each module has the same view as that of the terminal state.
For the case of the restriction of the initial shapes (Fig.~\ref{fig:counterexample-to-stop-in-maze}(b)), an MRS immediately stops after the passing with one of the forbidden shapes.
This impossibility comes from the fact that an MRS cannot distinguish a final exit to the exterior from other exits that interconnect two interiors.
\begin{figure}[t]
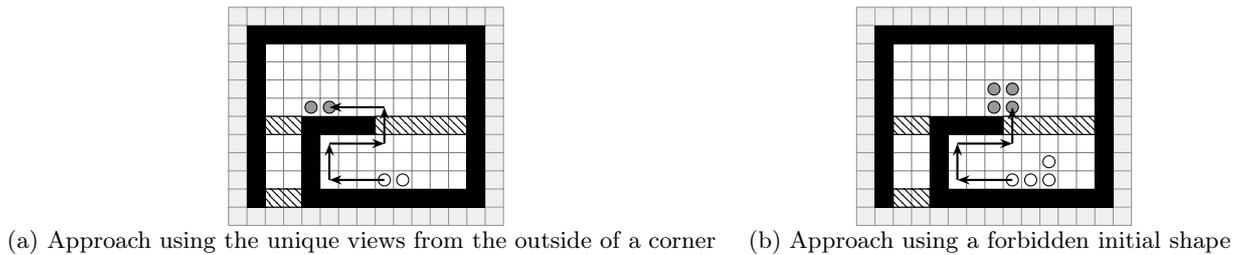

	\centering
	\begin{tabular}{cc}

\begin{customviewplot}{15}{12}
\outside{1}

\gwall{1}{1}{1}{10}
\gwall{13}{1}{13}{10}
\gwall{1}{10}{13}{10}
\gwall{4}{1}{13}{1}
\gwall{4}{1}{4}{5}
\gwall{4}{5}{7}{5}

\gexit{2}{1}{3}{1}
\gexit{2}{5}{3}{5}
\gexit{8}{5}{12}{5}

\gmodule{8}{2}
\gmodule{9}{2}

\gbackbone{4}{6}
\gbackbone{5}{6}

\draw[myarrow] (8.5,2.5) -- (5.5,2.5);
\draw[myarrow] (5.5,2.5) -- (5.5,4.5);
\draw[myarrow] (5.5,4.5) -- (8.5,4.5);
\draw[myarrow] (8.5,4.5) -- (8.5,6.5);
\draw[myarrow] (8.5,6.5) -- (5.5,6.5);

\end{customviewplot}

&

\begin{customviewplot}{15}{12}
\outside{1}

\gwall{1}{1}{1}{10}
\gwall{13}{1}{13}{10}
\gwall{1}{10}{13}{10}
\gwall{4}{1}{13}{1}
\gwall{4}{1}{4}{5}
\gwall{4}{5}{7}{5}

\gexit{2}{1}{3}{1}
\gexit{2}{5}{3}{5}
\gexit{8}{5}{12}{5}

\gmodule{8}{2}
\gmodule{9}{2}
\gmodule{10}{2}
\gmodule{10}{3}

\gbackbone{7}{6}
\gbackbone{8}{6}
\gbackbone{7}{7}
\gbackbone{8}{7}

\draw[myarrow] (8.5,2.5) -- (5.5,2.5);
\draw[myarrow] (5.5,2.5) -- (5.5,4.5);
\draw[myarrow] (5.5,4.5) -- (8.5,4.5);
\draw[myarrow] (8.5,4.5) -- (8.5,6.5);

\end{customviewplot}

\\

\subcaption{(a) Approach using the unique views from the outside of a corner}
&
\subcaption{(b) Approach using a forbidden initial shape}
\\

\end{tabular}

	\caption{Counterexamples to stop moving after evacuation from a maze}
	\label{fig:counterexample-to-stop-in-maze}
\end{figure}

\subsection{Evacuation from a convex field}
\label{sec:evacuation-from-convex-field}

Here, we consider evacuation from convex fields.
Convex fields introduce stairs, which are a wall feature not found in rectangular fields.
Examples of the shape are shown in the bottom of Fig.~\ref{fig:example-convex-fields}(b), in which the bottom left and right walls both include stairs leading toward or away from the exit.
Thus, solving the evacuation problem from a convex field requires a type of movement which can ascend or descend stairs.
Figures \ref{fig:convex-moves-2-modules}, \ref{fig:convex-moves-4-modules}, and \ref{fig:convex-moves-7-modules} show the movements required to go up or down a step for an MRS of 2, 4, and 7 modules, respectively.
The dotted area in the figures means that each module does not care whether a cell of the area in its view is empty, a wall, or an exit.
Note that, for the down movements, an MRS moves horizontally by the movements shown in Figs.~\ref{fig:move-along-wall-with-gc}, \ref{fig:evacuation-without-gc-move-along-wall}, and \ref{fig:move-along-wall-wo-gc-seven} until it reaches the first position of Figs.~\ref{fig:convex-moves-2-modules}(b), \ref{fig:convex-moves-4-modules}(b), and \ref{fig:convex-moves-7-modules}(b); then it goes down a step.
Repeating these movements allows an MRS to go up or down stairs.
With the combination of these new movements and the ones used for a rectangular field, an MRS can move along walls in the interior of a convex field and find an exit.
Figure \ref{fig:mrs-trail-in-convex-field} is an example showing the trail of an MRS consisting of two modules with a global compass performing these movements in a convex field.

Let us discuss the possibility that an MRS stops after evacuation from a convex field.
For the case of the restriction of initial shapes, an MRS can stop after evacuation with the same discussion to Theorem \ref{thm:mrs-can-stop-if-intial-shapes-are-restricted}.
However, it is impossible to take the same approach as Theorem \ref{thm:termination-possible-if-only-one-exit-exists} because we do not have such unique views from the exterior of every convex field.

\begin{figure}[t]
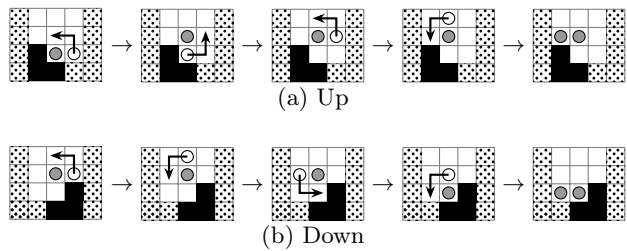

	\centering
		\begin{tabular}{c}

	\begin{customviewplot}{5}{4}

		\dontcarearea{0}{0}{0}{3}
		\dontcarearea{3}{0}{3}{0}
		\dontcarearea{4}{0}{4}{3}

		\gwall{1}{0}{2}{0}
		\gwall{1}{1}{1}{1}

		\gbackbone{2}{1}
		\gmodule{3}{1}

		\draw[myarrow] (3.5,1.5) -- (3.5,2.5) -- (2.2,2.5);

	\end{customviewplot}
	\ra
	\begin{customviewplot}{5}{4}

		\dontcarearea{0}{0}{0}{3}
		\dontcarearea{3}{0}{3}{0}
		\dontcarearea{4}{0}{4}{3}

		\gwall{1}{0}{2}{0}
		\gwall{1}{1}{1}{1}

		\gmodule{2}{1}
		\gbackbone{2}{2}

		\draw[myarrow] (2.5,1.5) -- (3.5,1.5) -- (3.5,2.8);

	\end{customviewplot}
	\ra
	\begin{customviewplot}{5}{4}

		\dontcarearea{0}{0}{0}{3}
		\dontcarearea{3}{0}{3}{0}
		\dontcarearea{4}{0}{4}{3}

		\gwall{1}{0}{2}{0}
		\gwall{1}{1}{1}{1}

		\gmodule{3}{2}
		\gbackbone{2}{2}

		\draw[myarrow] (3.5,2.5) -- (3.5,3.5) -- (2.2,3.5);

	\end{customviewplot}
	\ra
	\begin{customviewplot}{5}{4}

		\dontcarearea{0}{0}{0}{3}
		\dontcarearea{3}{0}{3}{0}
		\dontcarearea{4}{0}{4}{3}

		\gwall{1}{0}{2}{0}
		\gwall{1}{1}{1}{1}

		\gmodule{2}{3}
		\gbackbone{2}{2}

		\draw[myarrow] (2.5,3.5) -- (1.5,3.5) -- (1.5,2.2);

	\end{customviewplot}
	\ra
	\begin{customviewplot}{5}{4}

		\dontcarearea{0}{0}{0}{3}
		\dontcarearea{3}{0}{3}{0}
		\dontcarearea{4}{0}{4}{3}

		\gwall{1}{0}{2}{0}
		\gwall{1}{1}{1}{1}

		\gbackbone{2}{2}
		\gbackbone{1}{2}

	\end{customviewplot}

	\\
	\subcaption{(a) Up}\\
	\\

	\begin{customviewplot}{5}{4}

		\dontcarearea{0}{0}{0}{3}
		\dontcarearea{0}{0}{1}{0}
		\dontcarearea{4}{0}{4}{3}

		\gwall{2}{0}{3}{0}
		\gwall{3}{1}{3}{1}

		\gbackbone{2}{2}
		\gmodule{3}{2}

		\draw[myarrow] (3.5,2.5) -- (3.5,3.5) -- (2.2,3.5);

	\end{customviewplot}
	\ra
	\begin{customviewplot}{5}{4}

		\dontcarearea{0}{0}{0}{3}
		\dontcarearea{0}{0}{1}{0}
		\dontcarearea{4}{0}{4}{3}

		\gwall{2}{0}{3}{0}
		\gwall{3}{1}{3}{1}

		\gbackbone{2}{2}
		\gmodule{2}{3}

		\draw[myarrow] (2.5,3.5) -- (1.5,3.5) -- (1.5,2.2);

	\end{customviewplot}
	\ra
	\begin{customviewplot}{5}{4}

		\dontcarearea{0}{0}{0}{3}
		\dontcarearea{0}{0}{1}{0}
		\dontcarearea{4}{0}{4}{3}

		\gwall{2}{0}{3}{0}
		\gwall{3}{1}{3}{1}

		\gbackbone{2}{2}
		\gmodule{1}{2}

		\draw[myarrow] (1.5,2.5) -- (1.5,1.5) -- (2.8,1.5);

	\end{customviewplot}
	\ra
	\begin{customviewplot}{5}{4}

		\dontcarearea{0}{0}{0}{3}
		\dontcarearea{0}{0}{1}{0}
		\dontcarearea{4}{0}{4}{3}

		\gwall{2}{0}{3}{0}
		\gwall{3}{1}{3}{1}

		\gmodule{2}{2}
		\gbackbone{2}{1}

		\draw[myarrow] (2.5,2.5) -- (1.5,2.5) -- (1.5,1.2);

	\end{customviewplot}
	\ra
	\begin{customviewplot}{5}{4}

		\dontcarearea{0}{0}{0}{3}
		\dontcarearea{0}{0}{1}{0}
		\dontcarearea{4}{0}{4}{3}

		\gwall{2}{0}{3}{0}
		\gwall{3}{1}{3}{1}

		\gbackbone{2}{1}
		\gbackbone{1}{1}

	\end{customviewplot}
	\\
	\subcaption{(b) Down}\\

\end{tabular}

	\caption{Two modules go up/down a step}
	\label{fig:convex-moves-2-modules}
\end{figure}
\begin{figure}[t]
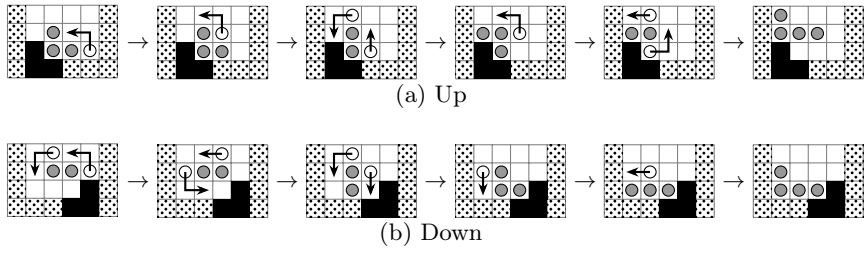

	\centering
		\begin{tabular}{c}

\begin{customviewplot}{6}{4}

\dontcarearea{0}{0}{0}{3}
\dontcarearea{3}{0}{4}{0}
\dontcarearea{5}{0}{5}{3}

\gwall{1}{0}{1}{1}
\gwall{2}{0}{2}{0}

\gbackbone{2}{2}
\gbackbone{2}{1}
\gbackbone{3}{1}
\gmodule{4}{1}

\draw[myarrow] (4.5,1.5) -- (4.5,2.5) -- (3.2,2.5);

\end{customviewplot}
\ra
\begin{customviewplot}{6}{4}

\dontcarearea{0}{0}{0}{3}
\dontcarearea{3}{0}{4}{0}
\dontcarearea{5}{0}{5}{3}

\gwall{1}{0}{1}{1}
\gwall{2}{0}{2}{0}

\gbackbone{2}{2}
\gbackbone{2}{1}
\gbackbone{3}{1}
\gmodule{3}{2}

\draw[myarrow] (3.5,2.5) -- (3.5,3.5) -- (2.2,3.5);

\end{customviewplot}
\ra
\begin{customviewplot}{6}{4}

\dontcarearea{0}{0}{0}{3}
\dontcarearea{3}{0}{4}{0}
\dontcarearea{5}{0}{5}{3}

\gwall{1}{0}{1}{1}
\gwall{2}{0}{2}{0}

\gbackbone{2}{2}
\gbackbone{2}{1}
\gmodule{3}{1}
\gmodule{2}{3}

\draw[myarrow] (2.5,3.5) -- (1.5,3.5) -- (1.5,2.2);
\draw[myarrow] (3.5,1.5) -- (3.5,2.8);

\end{customviewplot}
\ra
\begin{customviewplot}{6}{4}

\dontcarearea{0}{0}{0}{3}
\dontcarearea{3}{0}{4}{0}
\dontcarearea{5}{0}{5}{3}

\gwall{1}{0}{1}{1}
\gwall{2}{0}{2}{0}

\gbackbone{2}{2}
\gbackbone{2}{1}
\gmodule{3}{2}
\gbackbone{1}{2}

\draw[myarrow] (3.5,2.5) -- (3.5,3.5) -- (2.2,3.5);

\end{customviewplot}
\ra
\begin{customviewplot}{6}{4}

\dontcarearea{0}{0}{0}{3}
\dontcarearea{3}{0}{4}{0}
\dontcarearea{5}{0}{5}{3}

\gwall{1}{0}{1}{1}
\gwall{2}{0}{2}{0}

\gbackbone{2}{2}
\gmodule{2}{1}
\gmodule{2}{3}
\gbackbone{1}{2}

\draw[myarrow] (2.5,3.5) -- (1.2,3.5);
\draw[myarrow] (2.5,1.5) -- (3.5,1.5) -- (3.5,2.8);

\end{customviewplot}
\ra
\begin{customviewplot}{6}{4}

\dontcarearea{0}{0}{0}{3}
\dontcarearea{3}{0}{4}{0}
\dontcarearea{5}{0}{5}{3}

\gwall{1}{0}{1}{1}
\gwall{2}{0}{2}{0}

\gbackbone{1}{3}
\gbackbone{1}{2}
\gbackbone{2}{2}
\gbackbone{3}{2}

\end{customviewplot}

\\
\subcaption{(a) Up}\\
\\
\begin{customviewplot}{6}{4}

\dontcarearea{0}{0}{0}{3}
\dontcarearea{1}{0}{2}{0}
\dontcarearea{5}{0}{5}{3}

\gwall{3}{0}{4}{0}
\gwall{4}{1}{4}{1}

\gmodule{2}{3}
\gbackbone{2}{2}
\gbackbone{3}{2}
\gmodule{4}{2}

\draw[myarrow] (4.5,2.5) -- (4.5,3.5) -- (3.2,3.5);
\draw[myarrow] (2.5,3.5) -- (1.5,3.5) -- (1.5,2.2);

\end{customviewplot}
\ra
\begin{customviewplot}{6}{4}

\dontcarearea{0}{0}{0}{3}
\dontcarearea{1}{0}{2}{0}
\dontcarearea{5}{0}{5}{3}

\gwall{3}{0}{4}{0}
\gwall{4}{1}{4}{1}

\gmodule{1}{2}
\gbackbone{2}{2}
\gbackbone{3}{2}
\gmodule{3}{3}

\draw[myarrow] (1.5,2.5) -- (1.5,1.5) -- (2.8,1.5);
\draw[myarrow] (3.5,3.5) -- (2.2,3.5);

\end{customviewplot}
\ra
\begin{customviewplot}{6}{4}

\dontcarearea{0}{0}{0}{3}
\dontcarearea{1}{0}{2}{0}
\dontcarearea{5}{0}{5}{3}

\gwall{3}{0}{4}{0}
\gwall{4}{1}{4}{1}

\gbackbone{2}{1}
\gbackbone{2}{2}
\gmodule{3}{2}
\gmodule{2}{3}

\draw[myarrow] (2.5,3.5) -- (1.5,3.5) -- (1.5,2.2);
\draw[myarrow] (3.5,2.5) -- (3.5,1.2);

\end{customviewplot}
\ra
\begin{customviewplot}{6}{4}

\dontcarearea{0}{0}{0}{3}
\dontcarearea{1}{0}{2}{0}
\dontcarearea{5}{0}{5}{3}

\gwall{3}{0}{4}{0}
\gwall{4}{1}{4}{1}

\gbackbone{2}{1}
\gbackbone{2}{2}
\gbackbone{3}{1}
\gmodule{1}{2}

\draw[myarrow] (1.5,2.5) -- (1.5,1.2);

\end{customviewplot}
\ra
\begin{customviewplot}{6}{4}

\dontcarearea{0}{0}{0}{3}
\dontcarearea{1}{0}{2}{0}
\dontcarearea{5}{0}{5}{3}

\gwall{3}{0}{4}{0}
\gwall{4}{1}{4}{1}

\gbackbone{2}{1}
\gmodule{2}{2}
\gbackbone{3}{1}
\gbackbone{1}{1}

\draw[myarrow] (2.5,2.5) -- (1.2,2.5);

\end{customviewplot}
\ra
\begin{customviewplot}{6}{4}

\dontcarearea{0}{0}{0}{3}
\dontcarearea{1}{0}{2}{0}
\dontcarearea{5}{0}{5}{3}

\gwall{3}{0}{4}{0}
\gwall{4}{1}{4}{1}

\gbackbone{2}{1}
\gbackbone{1}{2}
\gbackbone{3}{1}
\gbackbone{1}{1}

\end{customviewplot}

\\
\subcaption{(b) Down}\\

\end{tabular}

	\caption{Four modules go up/down a step}
	\label{fig:convex-moves-4-modules}
\end{figure}
\begin{figure}[t]
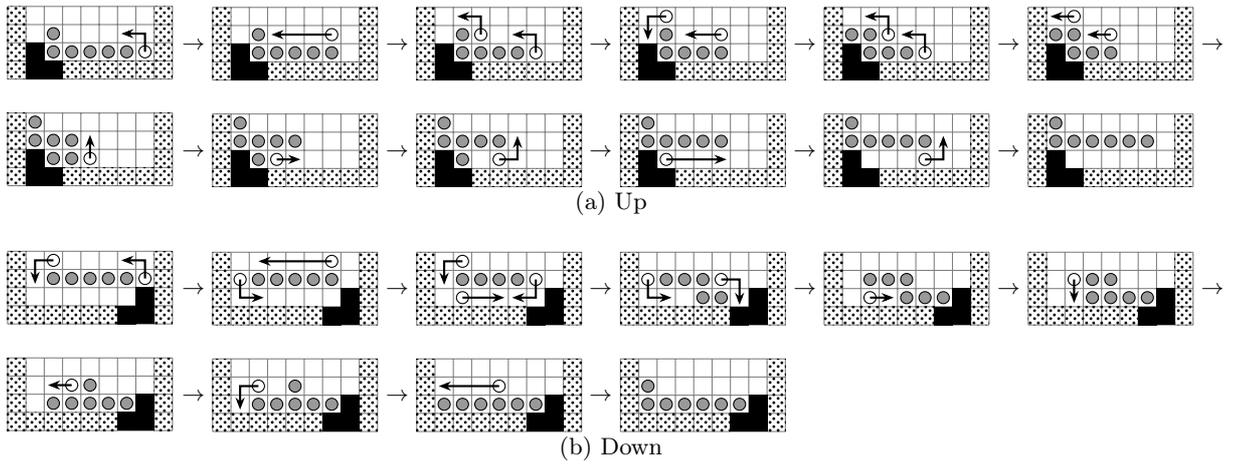

	\centering
		\begin{tabular}{c}

\begin{tabular}{l}
\begin{customviewplot}{9}{4}

\dontcarearea{0}{0}{0}{3}
\dontcarearea{3}{0}{7}{0}
\dontcarearea{8}{0}{8}{3}

\gwall{1}{0}{1}{1}
\gwall{2}{0}{2}{0}

\gbackbone{2}{2}
\gbackbone{2}{1}
\gbackbone{3}{1}
\gbackbone{4}{1}
\gbackbone{5}{1}
\gbackbone{6}{1}
\gmodule{7}{1}

\draw[myarrow] (7.5,1.5) -- (7.5,2.5) -- (6.2,2.5);

\end{customviewplot}
\ra
\begin{customviewplot}{9}{4}

\dontcarearea{0}{0}{0}{3}
\dontcarearea{3}{0}{7}{0}
\dontcarearea{8}{0}{8}{3}

\gwall{1}{0}{1}{1}
\gwall{2}{0}{2}{0}

\gbackbone{2}{2}
\gbackbone{2}{1}
\gbackbone{3}{1}
\gbackbone{4}{1}
\gbackbone{5}{1}
\gbackbone{6}{1}
\gmodule{6}{2}

\draw[myarrow] (6.5,2.5) -- (3.2,2.5);

\end{customviewplot}
\ra
\begin{customviewplot}{9}{4}

\dontcarearea{0}{0}{0}{3}
\dontcarearea{3}{0}{7}{0}
\dontcarearea{8}{0}{8}{3}

\gwall{1}{0}{1}{1}
\gwall{2}{0}{2}{0}

\gbackbone{2}{2}
\gbackbone{2}{1}
\gbackbone{3}{1}
\gbackbone{4}{1}
\gbackbone{5}{1}
\gmodule{6}{1}
\gmodule{3}{2}

\draw[myarrow] (6.5,1.5) -- (6.5,2.5) -- (5.2,2.5);
\draw[myarrow] (3.5,2.5) -- (3.5,3.5) -- (2.2,3.5);

\end{customviewplot}
\ra
\begin{customviewplot}{9}{4}

\dontcarearea{0}{0}{0}{3}
\dontcarearea{3}{0}{7}{0}
\dontcarearea{8}{0}{8}{3}

\gwall{1}{0}{1}{1}
\gwall{2}{0}{2}{0}

\gbackbone{2}{2}
\gbackbone{2}{1}
\gbackbone{3}{1}
\gbackbone{4}{1}
\gbackbone{5}{1}
\gmodule{5}{2}
\gmodule{2}{3}

\draw[myarrow] (2.5,3.5) -- (1.5,3.5) -- (1.5,2.2);
\draw[myarrow] (5.5,2.5) -- (3.5,2.5);

\end{customviewplot}
\ra
\begin{customviewplot}{9}{4}

\dontcarearea{0}{0}{0}{3}
\dontcarearea{3}{0}{7}{0}
\dontcarearea{8}{0}{8}{3}

\gwall{1}{0}{1}{1}
\gwall{2}{0}{2}{0}

\gbackbone{2}{2}
\gbackbone{2}{1}
\gbackbone{3}{1}
\gbackbone{4}{1}
\gmodule{5}{1}
\gmodule{3}{2}
\gbackbone{1}{2}

\draw[myarrow] (3.5,2.5) -- (3.5,3.5) -- (2.2,3.5);
\draw[myarrow] (5.5,1.5) -- (5.5,2.5) -- (4.2,2.5);

\end{customviewplot}
\ra
\begin{customviewplot}{9}{4}

\dontcarearea{0}{0}{0}{3}
\dontcarearea{3}{0}{7}{0}
\dontcarearea{8}{0}{8}{3}

\gwall{1}{0}{1}{1}
\gwall{2}{0}{2}{0}

\gbackbone{2}{2}
\gbackbone{2}{1}
\gbackbone{3}{1}
\gbackbone{4}{1}
\gmodule{4}{2}
\gmodule{2}{3}
\gbackbone{1}{2}

\draw[myarrow] (2.5,3.5) -- (1.2,3.5);
\draw[myarrow] (4.5,2.5) -- (3.2,2.5);

\end{customviewplot}
\ra
\\ \\
\begin{customviewplot}{9}{4}

\dontcarearea{0}{0}{0}{3}
\dontcarearea{3}{0}{7}{0}
\dontcarearea{8}{0}{8}{3}

\gwall{1}{0}{1}{1}
\gwall{2}{0}{2}{0}

\gbackbone{2}{2}
\gbackbone{2}{1}
\gbackbone{3}{1}
\gmodule{4}{1}
\gbackbone{3}{2}
\gbackbone{1}{3}
\gbackbone{1}{2}

\draw[myarrow] (4.5,1.5) -- (4.5,2.8);

\end{customviewplot}
\ra
\begin{customviewplot}{9}{4}

\dontcarearea{0}{0}{0}{3}
\dontcarearea{3}{0}{7}{0}
\dontcarearea{8}{0}{8}{3}

\gwall{1}{0}{1}{1}
\gwall{2}{0}{2}{0}

\gbackbone{2}{2}
\gbackbone{2}{1}
\gmodule{3}{1}
\gbackbone{4}{2}
\gbackbone{3}{2}
\gbackbone{1}{3}
\gbackbone{1}{2}

\draw[myarrow] (3.5,1.5) -- (4.8,1.5);

\end{customviewplot}
\ra
\begin{customviewplot}{9}{4}

\dontcarearea{0}{0}{0}{3}
\dontcarearea{3}{0}{7}{0}
\dontcarearea{8}{0}{8}{3}

\gwall{1}{0}{1}{1}
\gwall{2}{0}{2}{0}

\gbackbone{2}{2}
\gbackbone{2}{1}
\gmodule{4}{1}
\gbackbone{4}{2}
\gbackbone{3}{2}
\gbackbone{1}{3}
\gbackbone{1}{2}

\draw[myarrow] (4.5,1.5) -- (5.5,1.5) -- (5.5,2.8);

\end{customviewplot}
\ra
\begin{customviewplot}{9}{4}

\dontcarearea{0}{0}{0}{3}
\dontcarearea{3}{0}{7}{0}
\dontcarearea{8}{0}{8}{3}

\gwall{1}{0}{1}{1}
\gwall{2}{0}{2}{0}

\gbackbone{2}{2}
\gmodule{2}{1}
\gbackbone{5}{2}
\gbackbone{4}{2}
\gbackbone{3}{2}
\gbackbone{1}{3}
\gbackbone{1}{2}

\draw[myarrow] (2.5,1.5) -- (5.8,1.5);

\end{customviewplot}
\ra
\begin{customviewplot}{9}{4}

\dontcarearea{0}{0}{0}{3}
\dontcarearea{3}{0}{7}{0}
\dontcarearea{8}{0}{8}{3}

\gwall{1}{0}{1}{1}
\gwall{2}{0}{2}{0}

\gbackbone{2}{2}
\gmodule{5}{1}
\gbackbone{5}{2}
\gbackbone{4}{2}
\gbackbone{3}{2}
\gbackbone{1}{3}
\gbackbone{1}{2}

\draw[myarrow] (5.5,1.5) -- (6.5,1.5) -- (6.5,2.8);

\end{customviewplot}
\ra
\begin{customviewplot}{9}{4}

\dontcarearea{0}{0}{0}{3}
\dontcarearea{3}{0}{7}{0}
\dontcarearea{8}{0}{8}{3}

\gwall{1}{0}{1}{1}
\gwall{2}{0}{2}{0}

\gbackbone{2}{2}
\gbackbone{6}{2}
\gbackbone{5}{2}
\gbackbone{4}{2}
\gbackbone{3}{2}
\gbackbone{1}{3}
\gbackbone{1}{2}

\end{customviewplot}

\end{tabular}
\\
\subcaption{(a) Up}\\
\\

\begin{tabular}{l}

\begin{customviewplot}{9}{4}

\dontcarearea{0}{0}{0}{3}
\dontcarearea{1}{0}{5}{0}
\dontcarearea{8}{0}{8}{3}

\gwall{7}{0}{7}{1}
\gwall{6}{0}{6}{0}

\gmodule{2}{3}
\gbackbone{2}{2}
\gbackbone{3}{2}
\gbackbone{4}{2}
\gbackbone{5}{2}
\gbackbone{6}{2}
\gmodule{7}{2}

\draw[myarrow] (7.5,2.5) -- (7.5,3.5) -- (6.2,3.5);
\draw[myarrow] (2.5,3.5) -- (1.5,3.5) -- (1.5,2.2);

\end{customviewplot}
\ra
\begin{customviewplot}{9}{4}

\dontcarearea{0}{0}{0}{3}
\dontcarearea{1}{0}{5}{0}
\dontcarearea{8}{0}{8}{3}

\gwall{7}{0}{7}{1}
\gwall{6}{0}{6}{0}

\gmodule{1}{2}
\gbackbone{2}{2}
\gbackbone{3}{2}
\gbackbone{4}{2}
\gbackbone{5}{2}
\gbackbone{6}{2}
\gmodule{6}{3}

\draw[myarrow] (1.5,2.5) -- (1.5,1.5) -- (2.8,1.5);
\draw[myarrow] (6.5,3.5) -- (2.5,3.5);

\end{customviewplot}
\ra
\begin{customviewplot}{9}{4}

\dontcarearea{0}{0}{0}{3}
\dontcarearea{1}{0}{5}{0}
\dontcarearea{8}{0}{8}{3}

\gwall{7}{0}{7}{1}
\gwall{6}{0}{6}{0}

\gmodule{2}{1}
\gbackbone{2}{2}
\gbackbone{3}{2}
\gbackbone{4}{2}
\gbackbone{5}{2}
\gmodule{6}{2}
\gmodule{2}{3}

\draw[myarrow] (2.5,3.5) -- (1.5,3.5) -- (1.5,2.2);
\draw[myarrow] (2.5,1.5) -- (4.8,1.5);
\draw[myarrow] (6.5,2.5) -- (6.5,1.5) -- (5.2,1.5);

\end{customviewplot}
\ra
\begin{customviewplot}{9}{4}

\dontcarearea{0}{0}{0}{3}
\dontcarearea{1}{0}{5}{0}
\dontcarearea{8}{0}{8}{3}

\gwall{7}{0}{7}{1}
\gwall{6}{0}{6}{0}

\gbackbone{2}{2}
\gbackbone{3}{2}
\gbackbone{4}{2}
\gmodule{5}{2}
\gbackbone{4}{1}
\gbackbone{5}{1}
\gmodule{1}{2}

\draw[myarrow] (5.5,2.5) -- (6.5,2.5) -- (6.5,1.1);
\draw[myarrow] (1.5,2.5) -- (1.5,1.5) -- (2.8,1.5);

\end{customviewplot}
\ra
\begin{customviewplot}{9}{4}

\dontcarearea{0}{0}{0}{3}
\dontcarearea{1}{0}{5}{0}
\dontcarearea{8}{0}{8}{3}

\gwall{7}{0}{7}{1}
\gwall{6}{0}{6}{0}

\gbackbone{2}{2}
\gbackbone{3}{2}
\gbackbone{4}{2}
\gbackbone{4}{1}
\gbackbone{5}{1}
\gbackbone{6}{1}
\gmodule{2}{1}

\draw[myarrow] (2.5,1.5) -- (3.8,1.5);

\end{customviewplot}
\ra
\begin{customviewplot}{9}{4}

\dontcarearea{0}{0}{0}{3}
\dontcarearea{1}{0}{5}{0}
\dontcarearea{8}{0}{8}{3}

\gwall{7}{0}{7}{1}
\gwall{6}{0}{6}{0}

\gmodule{2}{2}
\gbackbone{3}{2}
\gbackbone{4}{2}
\gbackbone{3}{1}
\gbackbone{4}{1}
\gbackbone{5}{1}
\gbackbone{6}{1}

\draw[myarrow] (2.5,2.5) -- (2.5,1.2);

\end{customviewplot}
\ra
\\ \\
\begin{customviewplot}{9}{4}

\dontcarearea{0}{0}{0}{3}
\dontcarearea{1}{0}{5}{0}
\dontcarearea{8}{0}{8}{3}

\gwall{7}{0}{7}{1}
\gwall{6}{0}{6}{0}

\gmodule{3}{2}
\gbackbone{4}{2}
\gbackbone{2}{1}
\gbackbone{3}{1}
\gbackbone{4}{1}
\gbackbone{5}{1}
\gbackbone{6}{1}

\draw[myarrow] (3.5,2.5) -- (2.2,2.5);

\end{customviewplot}
\ra
\begin{customviewplot}{9}{4}

\dontcarearea{0}{0}{0}{3}
\dontcarearea{1}{0}{5}{0}
\dontcarearea{8}{0}{8}{3}

\gwall{7}{0}{7}{1}
\gwall{6}{0}{6}{0}

\gmodule{2}{2}
\gbackbone{4}{2}
\gbackbone{2}{1}
\gbackbone{3}{1}
\gbackbone{4}{1}
\gbackbone{5}{1}
\gbackbone{6}{1}

\draw[myarrow] (2.5,2.5) -- (1.5,2.5) -- (1.5,1.2);

\end{customviewplot}
\ra
\begin{customviewplot}{9}{4}

\dontcarearea{0}{0}{0}{3}
\dontcarearea{1}{0}{5}{0}
\dontcarearea{8}{0}{8}{3}

\gwall{7}{0}{7}{1}
\gwall{6}{0}{6}{0}

\gmodule{4}{2}
\gbackbone{1}{1}
\gbackbone{2}{1}
\gbackbone{3}{1}
\gbackbone{4}{1}
\gbackbone{5}{1}
\gbackbone{6}{1}

\draw[myarrow] (4.5,2.5) -- (1.2,2.5);

\end{customviewplot}
\ra
\begin{customviewplot}{9}{4}

\dontcarearea{0}{0}{0}{3}
\dontcarearea{1}{0}{5}{0}
\dontcarearea{8}{0}{8}{3}

\gwall{7}{0}{7}{1}
\gwall{6}{0}{6}{0}

\gbackbone{1}{2}
\gbackbone{1}{1}
\gbackbone{2}{1}
\gbackbone{3}{1}
\gbackbone{4}{1}
\gbackbone{5}{1}
\gbackbone{6}{1}

\end{customviewplot}

\end{tabular}

\\
\subcaption{(b) Down}\\

\end{tabular}

	\caption{Seven modules go up/down a step}
	\label{fig:convex-moves-7-modules}
\end{figure}

\begin{figure}[t]
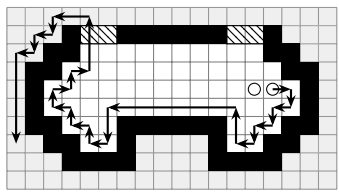

	\centering
	\begin{customviewplot}{18}{10}
\outside{1}
\outsidearea{1}{1}{1}{2}
\outsidearea{2}{1}{2}{1}

\outsidearea{1}{7}{1}{8}
\outsidearea{2}{8}{2}{8}

\outsidearea{15}{1}{16}{1}
\outsidearea{16}{2}{16}{2}

\outsidearea{15}{8}{16}{8}
\outsidearea{16}{7}{16}{7}

\outsidearea{7}{1}{10}{2}

\gwall{3}{1}{5}{1}
\gwall{12}{1}{14}{1}

\gwall{6}{1}{6}{3}
\gwall{11}{1}{11}{3}
\gwall{6}{3}{11}{3}

\gwall{1}{3}{1}{6}
\gwall{3}{8}{3}{8}
\gwall{2}{2}{2}{3}
\gwall{2}{2}{3}{2}
\gwall{2}{6}{2}{7}
\gwall{2}{7}{3}{7}

\gwall{16}{3}{16}{6}
\gwall{14}{8}{14}{8}
\gwall{14}{2}{15}{2}
\gwall{15}{2}{15}{3}
\gwall{14}{7}{15}{7}
\gwall{15}{6}{15}{7}

\gwall{6}{8}{11}{8}
\gexit{4}{8}{5}{8}
\gexit{12}{8}{13}{8}

\gmodule{13}{5}
\gmodule{14}{5}

\draw[myarrow] (14.5,5.5) -- (15.5,5.5);
\draw[myarrow] (15.5,5.5) -- (15.5,4.5);
\draw[myarrow] (15.5,4.5) -- (14.5,4.5);
\draw[myarrow] (14.5,4.5) -- (14.5,3.5);
\draw[myarrow] (14.5,3.5) -- (13.5,3.5);
\draw[myarrow] (13.5,3.5) -- (13.5,2.5);
\draw[myarrow] (13.5,2.5) -- (12.5,2.5);
\draw[myarrow] (12.5,2.5) -- (12.5,4.5);
\draw[myarrow] (12.5,4.5) -- (5.5,4.5);
\draw[myarrow] (5.5,4.5) -- (5.5,2.5);
\draw[myarrow] (5.5,2.5) -- (4.5,2.5);
\draw[myarrow] (4.5,2.5) -- (4.5,3.5);
\draw[myarrow] (4.5,3.5) -- (3.5,3.5);
\draw[myarrow] (3.5,3.5) -- (3.5,4.5);
\draw[myarrow] (3.5,4.5) -- (2.5,4.5);
\draw[myarrow] (2.5,4.5) -- (2.5,5.5);
\draw[myarrow] (2.5,5.5) -- (3.5,5.5);
\draw[myarrow] (3.5,5.5) -- (3.5,6.5);
\draw[myarrow] (3.5,6.5) -- (4.5,6.5);
\draw[myarrow] (4.5,6.5) -- (4.5,9.5);
\draw[myarrow] (4.5,9.5) -- (2.5,9.5);
\draw[myarrow] (2.5,9.5) -- (2.5,8.5);
\draw[myarrow] (2.5,8.5) -- (1.5,8.5);
\draw[myarrow] (1.5,8.5) -- (1.5,7.5);
\draw[myarrow] (1.5,7.5) -- (0.5,7.5);
\draw[myarrow] (0.5,7.5) -- (0.5,2.5);

\end{customviewplot}
	\caption{An example evacuation path of an MRS from a convex field}
	\label{fig:mrs-trail-in-convex-field}
\end{figure}

\section{Conclusion}
\label{sec:conclusion}

In this paper, we considered evacuation from a finite 2D square grid field by a metamorphic robotic system (MRS).
We focused on the minimum number of modules required to solve the evacuation problem under several conditions.
First, we considered a rectangular field surrounded by walls with at least one exit and showed that two modules are necessary and sufficient for any rectangular field if the modules are equipped with a global compass.
Then, we focused on cases where modules do not have a global compass and showed that four (resp.~seven) modules are necessary and sufficient for restricted (resp.~any) initial states of an MRS.
Additionally, we showed that two modules are sufficient if an MRS is touching a wall in its initial position.
We also clarified the condition to stop an MRS from moving after evacuation of a rectangular field.
Finally, we extended these results to mazes and convex fields.

Various open problems exist regarding evacuation by an MRS.
In Section \ref{sec:stop-after-evacuation}, we concluded that it is impossible for an MRS to stop moving after evacuation from a maze, even if we restrict the initial shapes of an MRS.
From this fact, we have a natural question;
what is a necessary function to stop after evacuation from a maze?
While we demonstrated that the proposed evacuation algorithms in Section~\ref{sec:evacuation-rectangular} can be extended for a maze, it is not known whether every evacuation algorithm for a rectangular field can be extended to a maze.
It would be interesting to consider whether this is possible.
It is also unclear whether the assumptions we made for the proposed algorithms, i.e., the visibility range and common handedness, are optimal.
Another open problem is finding an evacuation algorithm for two or more MRSs, while avoiding a collision between them.

\section*{Acknowledgements}
This work was supported by JSPS KAKENHI Grant Numbers 19K11828 and 20KK0232, and Israel \& Japan Science and Technology Agency (JST) SICORP (Grant\#JPMJSC1806).

\bibliographystyle{WileyNJD-AMA}
\bibliography{collection}

\end{document}